\documentclass[sn-mathphys,Numbered]{sn-jnl}
\usepackage{graphicx}%
\usepackage{multirow}%
\usepackage{amsmath,amssymb,amsfonts}%
\usepackage{amsthm}%
\usepackage{mathrsfs}%
\usepackage[title]{appendix}%
\usepackage{xcolor}%
\usepackage{textcomp}%
\usepackage{manyfoot}%
\usepackage{booktabs}%
\usepackage{algorithm}%
\usepackage{algorithmicx}%
\usepackage{algpseudocode}%
\usepackage{listings}%
\usepackage[all]{xy}

\newtheorem{theorem}{Theorem}
\newtheorem{proposition}[theorem]{Proposition}%
\newtheorem{lemma}[theorem]{Lemma}%
\newtheorem{cor}[theorem]{Corollary}%

\newtheorem{definition}{Definition}%

\providecommand{\norm}[1]{\lVert#1\rVert}
\providecommand{\abs}[1]{\lvert#1\rvert}
\providecommand{\inner}[2]{\langle#1,#2\rangle}

\title{Quantization as a Categorical Equivalence for Hilbert Bimodules and Lagrangian Relations}

\author{\fnm{Benjamin H.} \sur{~Feintzeig}}\email{bfeintze@uw.edu}

\affil{\centering{\orgdiv{Department of Philosophy}, \orgname{University of Washington},\\ \orgaddress{\city{Seattle}, \state{WA}, \country{USA}, \postcode{98195-3350}}}}

\abstract{
    It is well known that classical and quantum theories carry distinct types of \emph{representations}, each type of representation corresponding to possible values of generalized charges in the classical or quantum context.  This paper demonstrates a sense in the structure of these representation theories is preserved from classical to quantum physics.  To show this, I discuss distinct representation-theory preserving morphisms in the classical and quantum contexts.  Specifically, I consider categories whose morphisms are Lagrangian relations in the classical context and Hilbert bimodules in the quantum context.  These morphisms are significant because they give rise to \emph{induced representations} of classical and quantum theories, respectively.  I consider quantization and the classical limit as determining functors between these categories.  I treat quantization via the strict deformation quantization of a Poisson algebra and the classical limit via the extension of a uniformly continuous bundle of C*-algebras.  With these tools, I prove that the quantization and classical limit functors are ``almost-inverse" to each other, thus establishing a categorical equivalence.
}

\begin{document}

\maketitle

\section{Introduction}

The goal of this paper is to show a sense in which the structure for representing symmetries and their associated charges is shared in common between classical and quantum physics.  This structure is typically associated with the representation theory of a Lie group, where the appropriate notion of a representation is defined relative to the framework of classical or quantum physics (see \citep{Ma68,MaWe74}).  For my purposes, classical theories in a Hamiltonian framework are formulated on Poisson manifolds that carry representations as symplectic manifolds.  On the other hand, quantum theories are formulated with C*-algebras of observables that carry representations on Hilbert spaces.  Even if the classical and quantum theories have the same Lie group of symmetries, the classical and quantum representations may fail to stand in a one-to-one correspondence.  For example, charge in classical theories typically takes continuous values, leading to a continuous family of representations, while charge in quantum theories is typically ``quantized", leading to a discrete family of representations.  This leads to the question of whether one can make precise the sense in which, nevertheless, the classical and quantum representation theories still stand in correspondence with each other.

I hope to show that one can indeed make this structural correspondence precise in the language of category theory for a certain class of models of classical physics and their quantized counterparts.  I will consider a category of models of classical physics in which the objects are Poisson manifolds associated with Lie groupoids and the arrows are Lagrangian relations, understood as morphisms that preserve classical representation-theoretic structure.  I will restrict to a special case of Lagrangian relations determined by symplectic dual pairs.  Likewise, I will consider a category of models of quantum physics in which the objects are C*-algebras associated as the quantization of a Lie groupoid and the arrows are (equivalence classes of) Hilbert bimodules, understood as morphisms that preserve quantum representation-theoretic structure.  Previously, \citet{La02a} has defined a quantization functor $Q$ from the classical category to the quantum category by associating classical morphisms (symplectic dual pairs) to quantum morphisms (Hilbert bimodules).  On the other hand, \citet{FeSt24} have defined a classical limit functor $L$ from the quantum category to the classical category by associating quantum morphisms to classical morphisms.  In this paper, I will show that the classical limit functor is almost inverse to the quantization functor, in the sense that $L\circ Q$ and $Q\circ L$ are each naturally isomorphic to the identity functor.  It is this result, generalizing previous work \citep{Fe24}, that I take to substantiate the claim that classical and quantum theories share the representation-theoretic structures associated with physical charges and symmetries.

The paper proceeds as follows.  In \S\ref{sec:objects}, I define the objects of the classical and quantum categories for models constructed from Lie groupoids \citep{Ma87,La98b}.  A classical object is the dual of a Lie algebroid associated with a Lie groupoid, with its natural Poisson structure \citep{La01a,We00,CrFe04}. A quantum object is a (reduced) Lie groupoid C*-algebra \citep{La98b,La01a} associated with a Lie groupoid.  I will also define the actions of the quantization and classical limit functors on these objects.  Quantization is achieved via strict deformation quantization \citep{Ri89,Ri93,La98b,La99}, while the classical $\hbar\to 0$ limit is understood via the extension of a uniformly continuous bundle of C*-algebras \citep{KiWa95,StFe21a,StFe21}.  In \S\ref{sec:arrows}, I define the classical and quantum arrows \citep{La01b,La01c}.  The classical arrows are Lagrangian relations determined by symplectic dual pairs \citep{We83,Xu91,Xu92,BuWe04,We10} while the quantum arrows are (equivalence classes of) Hilbert bimodules \citep{La95,RaWi98}.  I will review the transformations involving the quantization and classical limit of these arrows provided by \citet{La02a} and \citet{FeSt24}, respectively.  In \S\ref{sec:equivalence}, I prove the central result of this paper: the quantization and classical limit functors form a categorical equivalence.  Finally, \S\ref{sec:con} provides a discussion of the main result, as well as a summary and conclusion.

\section{Classical and Quantum Models}
\label{sec:objects}

\subsection{Systems associated to Lie groupoids}

The classical and quantum models that we consider encompass a range of physical systems that can be represented by Lie groupoids \citep{Ma87,We00}.  A Lie groupoid is a small category $G$ in which every arrow has an inverse.  We identify the objects in $G$ with the identity arrows, the set of which we denote by $G_0$.  There is a source projection $s_G: G\to G_0$ and a target projection $t_G: G\to G_0$.  Groupoid multiplication is defined as composition of arrows $x,x'\in G$, denoted by $x\cdot x'$ when $s_G(x) = t_G(x')$.  Moreover, for a Lie groupoid, both $G$ and $G_0$ are smooth manifolds, $s_G$ and $t_G$ are surjective submersions, multiplication is a smooth map $G\times G\to G$, and the inclusion $G_0\hookrightarrow G$ is smooth.  As discussed in \citet{La98b,La99}, examples of Lie groupoids include the pair groupoid of any manifold, the action groupoid for a Lie group acting on a manifold, and the gauge groupoid for a principal bundle.

A Lie groupoid can be associated with a Poisson manifold understood as the phase space of a classical mechanical system.  The Poisson manifold at issue is the dual of the Lie algebroid.  We denote the Lie algebroid for $G$ by $\mathfrak{G}\rightrightarrows^{TG_0}_{G_0}$.  With $T^tG = \ker((t_G)_*)$ a subbundle of $TG$, we have $\mathfrak{G}\cong T^tG_{|{G_0}}$, understood as the restriction to $G_0$.  In other words, $(x,\xi)\in T^tG_{|G_0}$ means that $x\in G_0$ is an identity arrow in $G$ and $\xi\in T_xG$ is a tangent vector with $(t_G)_*(\xi)=0$.  The Lie algebroid $\mathfrak{G}$ is a vector bundle over $G_0$, with the projection $\tau_G:\mathfrak{G}\to G_0$ given by $\tau_G(x,\xi) = s_G(x) = t_G(x)$ for all $(x,\xi)\in T^tG_{|G_0}$.  The anchor map $\tau^a_G: \mathfrak{G}\to TG_0$ is given by the restriction of $(s_G)_*$.  The dual Lie algebroid $\mathfrak{G}^*$ is the dual vector bundle to $\mathfrak{G}$ over $G_0$ with projection denoted by $\tau^*_G$.

The manifold $\mathfrak{G}^*$ carries a Poisson bracket on $C^\infty(\mathfrak{G}^*)$ \citep{La98b,La99}.  The Poisson bracket is uniquely determined by relations among special cases of functions.  First, consider functions that only vary along the base space and not the fibers: take $\tilde{f},\tilde{g}\in C^\infty(G_0)$ and regard them as functions $f,g\in C^\infty(\mathfrak{G}^*)$ by defining $f(\eta) = \tilde{f}(\tau_G^*(\eta))$ and likewise $g(\eta) = \tilde{g}(\tau_G^*(\eta))$ for $\eta\in\mathfrak{G}^*$.  The Poisson bracket of $f$ and $g$ is defined as
\begin{align}
    \{f,g\} &= 0.
    \end{align}
Next, consider functions that are dual to sections of $\mathfrak{G}$: take sections $s_1,s_2: G_0\to \mathfrak{G}$ and define $\tilde{s}_1,\tilde{s}_2\in C^\infty(\mathfrak{G}^*)$ by $\tilde{s}_1(\eta) = \eta(s_1(\tau^*_G(\eta)))$ and $\tilde{s}_2(\eta) = \eta(s_2(\tau^*_G(\eta)))$ for $\eta\in \mathfrak{G}^*$.  The Poisson bracket for $\tilde{s}_1$ with $f$ as above and $\tilde{s}_2$ are defined as
\begin{align}
    \{\tilde{s}_1,f\} &= -(\tau^a_G\circ s_1)(\tilde{f})\\
    \{\tilde{s}_1,\tilde{s}_2\} &= -\widetilde{[s_1,s_2]}_{\mathfrak{G}},
\end{align}
where $[\cdot,\cdot]_{\mathfrak{G}}$ denotes the Lie bracket.  As mentioned, these relations completely define the Poisson bracket on $\mathfrak{G}^*$.  This provides enough structure to consider $\mathfrak{G}^*$ as the phase space of a Hamiltonian system.

A Lie groupoid also defines a non-commutative C*-algebra of observables of a quantum system \citep{Re80,La98b,La01a}.  The reduced\footnote{Often the reduced groupoid C*-algebra is denoted $C_r^*(G)$ to distinguish it from the full groupoid C*-algebra.  The two coincide when the groupoid is amenable.  In this paper, we do not consider the full groupoid C*-algebra so we continue to use the notation $C^*(G)$ for the reduced groupoid C*-algebra for simplicity.} groupoid C*-algebra $C^*(G)$ is defined as a suitable completion of the algebra $C_c^\infty(G)$, with the convolution product.  To define these algebras, recall that the Lie groupoid $G$ carries associated right/left Haar systems $(\nu_{x_0}^{s/t})_{x_0\in G_0}$ defined on $s_G^{-1}(x_0)$ and $t_G^{-1}(x_0)$, respectively.  Consider also a given quasi-invariant measure $\nu_0$ on $G_0$.  Define the left regular representation $\pi^G$ of $C_c^\infty(G)$ on $L^2(G,\nu_0\times \nu^s)$ by
\begin{align}
\pi^G(f)(\psi)(x) = \int_{s_G^{-1}(s_G(x))} d\nu^s_{s_G(x)}(x')f(xx'^{-1})\psi(x')
\end{align}
for $f\in C_c^\infty(G)$ and $\psi\in L^2(G,\nu_0\times \nu^s)$.  Note that $\pi^G(f)$ is a Hilbert-Schmidt operator with integral kernel
\begin{align}
\label{eq:HSkernel}
    K^f(x,x') = f(xx'^{-1})
\end{align}
when $s_G(x) = s_G(x')$ and $K^f(x,x') = 0$ otherwise.  The reduced groupoid C*-algebra $C^*(G)$ is the completion of $C_c^\infty(G)$ in the operator norm for the representation $\pi^G$.

\subsection{Quantization and the Classical Limit for Lie Groupoids}

The classical model---the Poisson manifold $\mathfrak{G}^*$---is related to the quantum model---the C*-algebra $C^*(G)$---by \emph{strict deformation quantization}.  We begin by reviewing the definitions of the structures involved \citep{Ri89,Ri93,La98b,La06}, focusing on the notion of a (uniformly) continuous bundle of C*-algebras to formulate the classical limit \citep{KiWa95,StFe21,StFe21a}.\bigskip

\begin{definition}
    A \emph{strict quantization} of a Poisson algebra $\mathcal{P}\subseteq C_b(M)$ of bounded functions on a smooth Poisson manifold $M$ consists in:
    \begin{itemize}
        \item a family of C*-algebras $(\mathfrak{A}_\hbar)_{\hbar\in [0,1]}$ called the \emph{fibers}; and
        \item a family of \emph{quantization maps} $(\mathcal{Q}_\hbar: \mathcal{P}\to \mathfrak{A}_\hbar)_{\hbar\in [0,1]}$, with $\mathcal{Q}_0: \mathcal{P}\hookrightarrow C_b(M)$ the inclusion map
    \end{itemize}
    satisfying for all $f,g\in \mathcal{P}$:
    \begin{enumerate}
        \item (von Neumann's condition) $\lim_{\hbar\to 0}\norm{\mathcal{Q}_\hbar(f)\mathcal{Q}_\hbar(g) - \mathcal{Q}_\hbar(fg)} = 0$;
        \item (Dirac's condition) $\lim_{\hbar\to 0}\norm{\mathcal{Q}_\hbar\big(\{f,g\}\big) - [\mathcal{Q}_\hbar(f),\mathcal{Q}_\hbar(g)]} = 0$; and
        \item (Rieffel's condition) the map $\hbar\mapsto \norm{\mathcal{Q}_\hbar(f)}$ is continuous.
    \end{enumerate}
If furthermore, for each $\hbar\in [0,1]$, the map $\mathcal{Q}_\hbar$ is one-to-one and $\mathcal{Q}_\hbar(\mathcal{P})$ is dense in $\mathfrak{A}_\hbar$, then the structure is called a strict \emph{deformation} quantization.\footnote{The definition of a strict deformation quantization may be generalized so that the base space is not restricted to $I=[0,1]$, but includes also more general subsets $I\subseteq \mathbb{R}$ for which $0$ is an accumulation point.  In this paper, however, we will only employ the base spaces $[0,1]$ and $(0,1]$.}
\end{definition}\bigskip

\noindent We will sometimes refer to a strict deformation quantization by $\mathcal{Q}_\hbar$ for shorthand.

The primary example of a strict deformation quantization for the purposes of this paper is one that relates the model of a classical system as the dual of a Lie algebroid $\mathfrak{G}^*$ to the corresponding model of a quantum system as the groupoid C*-algebra $C^*(G)$. 
 We now summarize the details provided in \citet{La98b,La99} (See also \citet{LaRa00}).  To define the quantization maps, we employ two additional aspects of the Lie algebroid structure: the Weyl exponential map and the fiberwise Fourier transform.

To define the Weyl exponential, we suppose that the Lie algebroid $\mathfrak{G}$ is equipped with a connection.  Consider the bundle $\tau_{T^tG\to G}: T^tG\to G$ and define the left exponential map $\exp^L: \mathfrak{G}\to G$ by
\begin{align}
    \exp^L(X) = \tau_{T^tG\to G}(X'(1)),
\end{align}
where $X'(t)$ is the geodesic flow on $T^tG$, and the left exponential is defined when the geodesic flow is defined at $t=1$.  Then define the Weyl exponential map $\exp^W: \mathfrak{G}\to G$ by
\begin{align}
    \exp^W(X) = \exp^L\big(-X/2\big)^{-1}\exp^L\big(X/2\big).
\end{align}

To define the fiberwise Fourier transform, we consider a family of Lebesgue measures $(\nu^*_{x_0})_{x_0\in G_0}$ on the fibers $(\tau_G^*)^{-1}(x_0)$.  Then the fiberwise Fourier transform of a function $f: \mathfrak{G}^*\to \mathbb{C}$ is defined by
\begin{align}
    \hat{f}(X) = \int_{(\tau_G^*)^{-1}(x_0)} d\nu^*_{x_0}(\theta)e^{i\theta(X)} f(\theta)
\end{align}
for all $X\in \mathfrak{G}$.

Next, define $\mathcal{P}_G = C^\infty_{PW}(\mathfrak{G}^*)$, the algebra of smooth Paley-Wiener functions on $\mathfrak{G}^*$, i.e., those functions whose fiberwise Fourier transform is smooth and compactly supported.   Take $\mathfrak{A}_0 = C_0(\mathfrak{G}^*)$ and $\mathfrak{A}_\hbar = C^*(G)$ for all $\hbar >0$.  For each $\hbar>0$, define the quantization map $\mathcal{Q}^G_\hbar: C_{PW}^\infty(\mathfrak{G}^*)\to C^*(G)$ by
\begin{align}
\label{eq:quantization_map}
    \mathcal{Q}^G_\hbar(f)(\exp^W (X)) = \hbar^{-n}\kappa_G(X)\hat{f}(X/\hbar),
\end{align}
for $n = \dim G$ and where $\kappa_G: \mathfrak{G}\to \mathbb{C}$ is an appropriate real-valued smooth cut-off function.  We require that $\kappa_G(-X) = \kappa_G(X)$, and that $\kappa_G$ has support contained in a neighborhood of the zero section in $\mathfrak{G}$ on which $\exp^L$ and $\exp^W$ are diffeomorphisms.  \citet{La99} shows that the maps $\mathcal{Q}_\hbar^G$ satisfy all the conditions of a strict deformation quantization.

Two strict quantizations $(\mathcal{Q}_\hbar)_{\hbar\in I}$ and $(\mathcal{Q}'_\hbar)_{\hbar\in I}$ of a Poisson algebra $\mathcal{P}$ with the same fiber C*-algebras are called \emph{equivalent} if for every $f\in\mathcal{P}$, the map
\begin{align}
    \hbar\mapsto \norm{\mathcal{Q}_\hbar(f)-\mathcal{Q}'_\hbar(f)}
\end{align}
is continuous \citep{La98b}.  For example, it is known that Weyl quantization and Berezin quantization on $\mathbb{R}^{2n}$ are equivalent \citep{La98b}.  Equivalent quantizations share the structure of a continuous bundle of C*-algebras \citep{Di77,FeDo88,KiWa95,Ni96}, which plays a key role in what follows.

We now define the type of bundles of C*-algebras we will employ in this paper, which involve strong properties of \emph{uniform continuity} \citep{StFe21,FeSt24}.  This definition is motivated by that of \citet{KiWa95}, but allows for the assimilation of not only strict deformation quantizations over $\hbar \in [0,1]$, but also their restrictions to $\hbar>0$, i.e., $\hbar\in (0,1]$.
  Our definition of uniformly continuous bundles of C*-algebras provides a structure with enough information to reconstruct the classical limit.\bigskip

\begin{definition} \label{def:bun}
    A \emph{uniformly continuous bundle of C*-algebras} consists in:
    \begin{itemize}
        \item a locally compact metric space $I$ called the \emph{base space};
        \item a family of C*-algebras $(\mathfrak{A}_\hbar)_{\hbar\in I}$ called the \emph{fibers};
        \item a C*-algebra $\mathfrak{A}$ called the algebra of \emph{uniformly continuous sections}; and
        \item a family of surjective *-homomorphisms $(\phi_\hbar: \mathfrak{A}\to\mathfrak{A}_\hbar)_{\hbar\in I}$ called the \emph{evaluation maps}
    \end{itemize}
    satisfying for all $a\in\mathfrak{A}$:
    \begin{enumerate}
        \item $\norm{a} = \sup_{\hbar\in I} \norm{\phi_\hbar(a)}$;
        \item for each uniformly continuous bounded function $f$ on $I$, there is an element $fa\in \mathfrak{A}$ such that $\phi_\hbar(fa) = f(\hbar)\phi_\hbar(a)$ for all $\hbar\in I$; and
        \item the map $\hbar\mapsto \norm{\phi_\hbar(a)}$ is uniformly continuous and bounded.
    \end{enumerate}
\end{definition}\bigskip

\noindent All bundles considered in this paper are uniformly continuous, so we will sometimes drop the modifier ``uniformly" and simply refer to these structures as continuous bundles of C*-algebras.  However, we emphasize that our definition differs from the standard one for continuous bundles of C*-algebras in \citet{KiWa95}, which replaces the role of the uniformly continuous functions on $I$ in Def. \ref{def:bun} with merely continuous functions that vanish at infinity on $I$.  We also note that we will sometimes refer to a uniformly continuous bundle of C*-algebras by the algebra of continuous sections $\mathfrak{A}$ for shorthand.

A strict quantization determines a continuous bundle of C*-algebras as long as it satisfies mild technical conditions.  The following construction produces a continuous bundle of C*-algebras just in case for every polynomial $P$ of the maps $[\hbar\mapsto \mathcal{Q}_\hbar(f)]$ for $f\in\mathcal{P}$, the map
\begin{align}
    \hbar\mapsto \norm{P(\hbar)}
\end{align}
is continuous \citep{BiHoRi04b}.  One generates a *-algebra $\tilde{\mathfrak{A}}\subseteq\prod_{\hbar\in [0,1]}$ of sections by taking pointwise products and sums of maps of the form
\begin{align}
\label{eq:quantization_sections}
    [\hbar\mapsto \mathcal{Q}_\hbar(f)]
\end{align}
for $f\in\mathcal{P}$.  Then one defines a C*-algebra of continuous sections by
\begin{align}
\label{eq:generating_sections}
    \mathfrak{A} = \left\{a\in\prod_{\hbar\in [0,1]}\mathfrak{A}_\hbar\ \Bigg|\ \text{ the map } \left[\hbar\mapsto\norm{a(\hbar)-\tilde{a}(\hbar)}\right]\text{ is continuous for each $\tilde{a}\in\tilde{\mathfrak{A}}$}\right\}.
\end{align}
With the evaluation maps given by
\begin{align}
    \phi_\hbar(a) = a(\hbar)
\end{align}
for each $a\in\mathfrak{A}$ and $\hbar\in [0,1]$, this structure becomes a continuous bundle of C*-algebras.  Indeed, one can show that equivalent quantizations determine the same algebra of continuous sections, and hence the same continuous bundle \citep{La98b}.

We are primarily interested in the case where one has a strict quantization over the base space $[0,1]$, which determines a continuous bundle of C*-algebras over $[0,1]$.  Then the restriction of a continuous bundle over the closed interval $[0,1]$ to the half open interval $(0,1]$ is a uniformly continuous bundle of C*-algebras.  We understand a uniformly continuous bundle of C*-algebras over $(0,1]$ to encode information about the algebras of quantum observables and their scaling properties for values $\hbar>0$.

Using this structure, we can consider the classical $\hbar\to 0$ limit as the ``inverse" process to quantization.  We will represent the $\hbar\to 0$ limit as the extension of a uniformly continuous bundle of C*-algebras over $(0,1]$ to the closed interval $[0,1]$, which includes the value $\hbar=0$.

To that end, suppose we are given a uniformly continuous bundle of C*-algebras over $(0,1]$ with the algebra of uniformly continuous sections $\mathfrak{A}$, fibers $(\mathfrak{A}_\hbar)_{\hbar\in (0,1]}$, and evaluation maps $(\phi_\hbar)_{\hbar\in (0,1]}$.  \citet{StFe21a} show that whenever the inclusion of base spaces $(0,1]\hookrightarrow [0,1]$ is isometric, which clearly holds in this case, then the bundle can be uniquely extended to the one-point compactification $[0,1]$ with the fiber algebra at any $\hbar\in [0,1]$ (including the value $\hbar =0$) determined by
\begin{align}
    \mathfrak{A}_\hbar = \mathfrak{A}/K_\hbar,
\end{align}
where one takes the quotient by the closed two-sided ideal
\begin{align}
\label{eq:ideal}
    K_\hbar = \left\{a\in\mathfrak{A}\ |\ \lim_{\hbar'\to \hbar}\norm{\phi_{\hbar'(a)}} = 0\right\}.
\end{align}
In particular, one has the unique limiting fiber algebra $\mathfrak{A}_0 = \mathfrak{A}/K_0$.

When the uniformly continuous bundle is determined by a strict deformation quantization of a Poisson algebra, we have that $\mathfrak{A}_1$ represents the fully quantum theory and $\mathfrak{A}_0$ represents the classical theory.  We mention briefly some features of the classical limit that are described in more detail by \citet{StFe21a} and \citet{FeSt24}. First, von Neumann's condition implies that the classical limit algebra $\mathfrak{A}_0$ is an abelian algebra, and hence is an algebra of functions on a topological space $M$ by the Gelfand representation theorem.  Second, certain choices of Poisson algebras carry enough information to define the structure of a smooth manifold on $M$. Third, Dirac's condition implies that the bundle of C*-algebras determines a Poisson bracket on $M$.  Finally, the classical limit is unique up to isomorphism in the sense that the quantization of a Poisson algebra of functions on $M$ recovers the structure of $M$ in the classical limit, with $\mathfrak{A}_0\cong C_0(M)$.  To summarize: the classical limit of a uniformly continuous bundle of C*-algebras generated from a strict deformation quantization allows one to recover the geometric structures of the original space, and thus to understand the classical structures as being determined by the quantum model for $\hbar>0$.

\section{Classical and Quantum Morphisms}
\label{sec:arrows}

In order to understand quantization as a categorical equivalence between models of classical and quantum physics, we must specify the morphisms for the categories of models that we consider.  Following \citet{La01a,La01c}, we treat morphisms that capture the representation-theoretic structure of classical and quantum models, respectively. 
 We now proceed to define morphisms between classical models (i.e., Poisson manifolds) as symplectic dual pairs, and morphisms between quantum models (i.e., C*-algebras) as Hilbert bimodules.  We will focus on symplectic dual pairs and Hilbert bimodules that arise from bibundles between Lie groupoids, which consist in pairs of groupoids acting on a common space \citep{Ma87,Mo88,MuReWi87}.  We first provide the definition of a bibundle that we will work with.\bigskip

 \begin{definition}
     A \emph{bibundle} between Lie groupoids $G$ and $H$ is a manifold $M$ with smooth maps $t_M: M\to G_0$ and $s_M: M\to H_0$, carrying a smooth left $G$-action and a smooth right $H$-action.  For $x\in G$ and $q\in M$, the result of the left $G$-action $xq$ is defined just in case $s_G(x) = t_M(q)$.  For $y\in H$ and $q\in M$, the result of the right $G$-action $qy$ is defined just in case $s_M(q) = t_H(y)$.  Moreover, the bibundle denoted $G\rightarrowtail M\leftarrowtail H$ is required to satisfy $t_M(qy) = t_M(q)$, $s_M(xq) = s_M(q)$, and $(xq)y = x(qy)$ for all $x\in G$, $q\in M$, and $y\in H$ on which the groupoid actions are defined.
 \end{definition}\bigskip

We will sometimes refer to a bibundle $G\rightarrowtail M \leftarrowtail H$ by its middle space $M$.  Two bibundles $G\rightarrowtail M\leftarrowtail H$ and $G\rightarrowtail N\leftarrowtail H$ are called \emph{isomorphic} if there is a diffeomorphism $\zeta: M\to N$ such that $s_N\circ \zeta = s_M$, $t_N\circ \zeta = t_M$, and for all $x\in G$, $q\in M$, and $y\in H$, $\zeta(xq) = x\zeta(q)$ and $\zeta(qy) = \zeta(q)y$.  A bibundle is called \emph{regular} when
\begin{itemize}
    \item $s_M$ is a surjective submersion;
    \item for all $x\in G$ and $q\in M$, $xq = q$ iff $x\in G_0$;
    \item for all $q\in M$ and $q'\in s_M^{-1}(s_M(q))$, there is an $x\in G$ such that $q' = xq$;
    \item the right $H$-action is proper.
\end{itemize}
Bibundles that are regular can be thought of as generalized maps from $H$ to $G$, and there is a natural composition operator that can be defined via a tensor product \citep{La01b}.  In particular, given two regular bibundles
\begin{align}
    G\rightarrowtail M\leftarrowtail H && \text{and} && H\rightarrowtail N\leftarrowtail G',
\end{align}
define the fiber product
\begin{align}
    M\times_H N = \Big\{(q,q')\in M\times N\ \Big|\ s_M(q) = t_N(q')\Big\}.
\end{align}
This space carries an $H$-action defined for $y\in H$ by $(q,q')\in M\times_H N\mapsto (qy,y^{-1}q)$.  The tensor product, defined as $M\circledast_H N = (M\times_H N)/H$
forms the middle space of a bibundle
\begin{align}
    G\rightarrowtail M\circledast_H N\leftarrowtail G',
\end{align}
which we consider as the tensor product of the bibundles $M$ and $N$ \citep{La01b,La01c}.
\subsection{Dual Pairs and Hilbert Bimodules}

We follow \citet{We83} and \citet{La01b} in the definition of symplectic duals pairs (see also \citet{Xu91,Xu92,Xu94}).\bigskip

\begin{definition}
    A \emph{symplectic dual pair} $M\leftarrow S\rightarrow N$ between Poisson manifolds $M$ and $N$ consists in a symplectic manifold $S$ and smooth Poisson maps $j_M: S\to M$ and $j_N: S\to N$, where $N$ is given minus the Poisson bracket.  Together, they must satisfy for every $f\in C^\infty(M)$ and $g\in C^\infty(N)$,
    \begin{align}
        \{j^*_Mf,j^*_Ng\} = 0,
    \end{align}
    where the Poisson bracket is determined by the symplectic form on $S$.
\end{definition}\bigskip

Sometimes we will refer to the dual pair $M\leftarrow S\rightarrow N$ by its middle space $S$ for shorthand.  We call another dual pair $M\leftarrow \tilde{S}\rightarrow N$ with maps $\tilde{j}_M: \tilde{S}\to M$ and $\tilde{j}_N: S\to N$ \emph{isomorphic} to the dual pair $M\leftarrow S\rightarrow N$ when there is a symplectomorphism $u: S\to \tilde{S}$ such that $\tilde{j}_M\circ u = j_M$ and $\tilde{j}_N\circ u = j_N$.

We call a symplectic dual pair \emph{weakly regular} if the maps $j_M$ and $j_N$ are complete, and $j_N$ is a surjective submersion \citep{La01b}.  Here, we say $j_M$ is complete if whenever a function $f\in C^\infty(M)$ has complete Hamiltonian flow on $M$, then $j^*_Mf$ has complete Hamiltonian flow on $S$, and likewise for $j_N$ \citep{La98b}.  Symplectic dual pairs that are weakly regular can be composed by a kind of tensor product as follows \citep{La01b}.  Consider two weakly regular symplectic dual pairs
\begin{align}
    \xymatrix{
M & S_1 \ar[l]_{\overset{1}{j}_M}\ar[r]^{\overset{1}{j}_N} & N
} && \text{and} && \xymatrix{
N & S_2 \ar[l]_{\overset{2}{j}_N}\ar[r]^{\overset{2}{j}_P} & P
}.
\end{align}
Define the space
\begin{align}
    S_1\times_N S_2 = \Big\{(s_1,s_2)\in S_1\times S_2\ \Big|\ \overset{1}{j}_N(s_1) = \overset{2}{j}_N(s_2)\Big\}.
\end{align}
Since $S_1\times S_2$ carries the product symplectic structure determined by the symplectic manifolds $S_1$ and $S_2$, we can consider the null distribution $\mathcal{N}_{S_1\times_N S_2}$ of vectors in $T(S_1\times_N S_2)$ that are null relative to the symplectic form on the product space.  Then define $S_1\circledcirc_NS_2 = (S_1\times_N S_2)/\mathcal{N}_{S_1\times_NS_2}$.  Weak regularity ensures this is a symplectic manifold that serves as the middle space of a symplectic dual pair
\begin{align}
    \xymatrix{
M & S_1\circledcirc_NS_2 \ar[l]\ar[r] & P
},
\end{align}
which we consider as the composition of the dual pairs $S_1$ and $S_2$ \citep{La02a}.

It is well known that a symplectic dual pair $M\leftarrow S\rightarrow N$ allows one to induce symplectic realizations of $M$ from those of $N$ \citep{La95b}.  Here, a \emph{symplectic realization} is a smooth Poisson map $j: S_0\to N$ from a symplectic manifold $S$ to a Poisson manifold $N$.  One can induce a symplectic realization of $M$ from $S_0$ and the dual pair $S$ by taking the above tensor product composition $S\circledcirc S_0$ with the obvious map $\tilde{j}: S\circledcirc S_0\to M$ constructed from $j_M$ on the first component.  Moreover, this construction of induced symplectic realizations does not depend on the full structure of the symplectic dual pair $S$, but only on the image $(j_M,j_N)[S]\subseteq M\times N$, which is a \emph{Lagrangian relation} \citep{We10}.  Here, $(j_M,j_N): S\to M\times N$ is the pair map $(j_M,j_N)(s) = (j_M(s),j_N(s))$ for $s\in S$.  By saying the induced symplectic realization only depends on the Lagrangian relation, we mean that distinct symplectic dual pairs may give rise to equivalent functors from realizations of $N$ to realizations of $M$, but such dual pairs will determine the same Lagrangian relation.  Moreover, any symplectic dual pairs that determine the same Lagrangian relation will also determine the same induction functor.  Hence, we will define the morphisms in our category as equivalence classes of symplectic dual pairs that determine the same Lagrangian relation.

The main example of dual pairs we consider in this paper arises from a regular bibundle $G\rightarrowtail M\leftarrowtail H$ between Lie groupoids $G$ and $H$, which allows one to construct a symplectic dual pair $\mathfrak{G}^*\leftarrow T^*M\rightarrow \mathfrak{H}^*$ between the corresponding duals of their Lie algebras $\mathfrak{G}^*$ and $\mathfrak{H}^*$.  The Poisson morphism $j_G: T^*M\to \mathfrak{G}^*$ is defined by
\begin{align}
\label{eq:left_momentum_map}
    j_G(\eta_q)\Big(\frac{d}{dt}_{|t=0} \gamma(t)\Big) = -\eta_q\Big(\frac{d}{dt}_{|t=0}\big(\gamma(t)^{-1}\cdot q\big)\Big),
\end{align}
where $\eta_q\in T^*_qM$ and $\gamma:I\to G$ is a curve with $\gamma(0) = t_M(q)$ and $\gamma(t)\in t_G^{-1}(t_M(q))$ so that $\dot{\gamma}(0)\in \tau_G^{-1}(t_M(q))$ and $\gamma(t)^{-1}\cdot q$ is well-defined for all $t\in I\subseteq \mathbb{R}$.  Likewise, the Poisson morphism $j_H: T^*M\to \mathfrak{H}^*$ is defined by
\begin{align}
\label{eq:right_momentum_map}
    j_H(\eta_q)\Big(\frac{d}{dt}_{|t=0} \gamma(t)\Big) = \eta_q\Big(\frac{d}{dt}_{|t=0} \big(q\cdot \gamma(t)\big)\Big),
\end{align}
where $\eta_q\in T^*_qM$ and $\gamma: I\to H$ is a curve with $\gamma(0) = s_M(q)$ and $\gamma(t)\in t_H^{-1}(s_M(q))$, so that $\dot{\gamma}(0)\in \tau_H^{-1}(s_M(q))$ and $q\cdot \gamma(t)$ is well-defined for all $t\in I\subseteq \mathbb{R}$.  It follows that $\mathfrak{G}^*\leftarrow T^*M\rightarrow \mathfrak{H}^*$ satisfies the definition of a symplectic dual pair \citep{La00,La01a}.  Regularity of bibundles implies weak regularity of the corresponding dual pairs, which guarantees that these dual pairs can be composed by means of the tensor product.

We will further restrict attention to integrable Poisson manifolds (or Lie groupoids for which the dual of the corresponding Lie algebroid is integrable), since the condition of integrability guarantees that there is a symplectic dual pair from the Poisson manifold to itself that serves the role of an identity arrow \citep{La01b,CrFe04}.  Similarly, we restrict attention to source-simply connected Lie groupoids (i.e., Lie groupoids $G$ for which $s_G^{-1}(x_0)$ is connected and simply connected for each $x_0\in G_0$ \citep{MoMr02}) because this condition holds just in case the identity bibundle corresponds to the identity dual pair.

We define the following category of Poisson manifolds, understood as a category of classical models.\bigskip

\begin{definition}
    We denote by $\mathbf{LPoisson}$ the category consisting of the following:
    \begin{itemize}
        \item \emph{Objects}: integrable duals of Lie algebroids $\mathfrak{G}^*$ associated to source-simply connected Lie groupoids $G$, understood as a class of Poisson manifolds.
        \item \emph{Arrows}: equivalence classes of symplectic dual pairs $\mathfrak{G}^*\leftarrow T^*M\rightarrow\mathfrak{H}^*$ associated to regular bibundles $M$ between Lie groupoids via the construction of Eqs. (\ref{eq:left_momentum_map})-(\ref{eq:right_momentum_map}) that determine the same Lagrangian relation $(j_G,j_H)[T^*M]\subseteq \mathfrak{G}^*\times\mathfrak{H}^*$.
    \end{itemize}
\end{definition}

\noindent $\mathbf{LPoisson}$ is indeed a category: integrability of the objects guarantees the existence of identity arrows, and the behavior of the tensor product for dual pairs implies the appropriate composition laws for arrows \citep{La01b,La01c}.  Note that our definition differs from the one given by Landsman because we consider equivalence classes of symplectic dual pairs that determine the same Lagrangian relation rather than isomorphism classes of symplectic dual pairs themselves.

Next, we follow \citet{La95} and \citet{RaWi98} in the definition of Hilbert bimodules (see also \citet{Ri74,Ri74a,Ri72a}).\bigskip

\begin{definition}
A \emph{Hilbert bimodule} $\mathfrak{A}\rightarrowtail\mathcal{E}\leftarrowtail\mathfrak{B}$ between C*-algebras $\mathfrak{A}$ and $\mathfrak{B}$ consists in a vector space $\mathcal{E}$ carrying a $\mathfrak{B}$-valued positive definite inner product $\inner{\cdot}{\cdot}_\mathcal{E}$ that is antilinear in the first argument and linear in the second.  The space $\mathcal{E}$ carries a right action of $\mathfrak{B}$ and a left action of $\mathfrak{A}$ by adjointable operators (Here, a linear operator $a$ on $\mathcal{E}$ is called \emph{adjointable} if there is a linear operator $a^*$ such that $\inner{\varphi_1}{a\varphi_2}_{\mathcal{E}} = \inner{a^*\varphi_1}{\varphi_2}_{\mathcal{E}}$ for all $\varphi_1,\varphi_2\in\mathcal{E}$.)  The inner product is required to satisfy
\begin{align}
    \inner{\varphi_1}{\varphi_2b}_\mathcal{E} = \inner{\varphi_1}{\varphi_2}_\mathcal{E}\cdot b
\end{align}
for all $\varphi_1,\varphi_2\in\mathcal{E}$ and $b\in\mathfrak{B}$, and the space $\mathcal{E}$ is required to be complete in the norm
\begin{align}
    \norm{\varphi}^2 = \norm{\inner{\varphi}{\varphi}_{\mathcal{E}}},
\end{align}
where the right hand side uses the C*-norm in $\mathfrak{B}$.  Finally, the $\mathfrak{A}$-action is required to be nondegenerate in the sense that $\mathfrak{A}\mathcal{E}$ is dense in $\mathcal{E}$.
\end{definition}\bigskip

\noindent Sometimes we will refer to the bimodule $\mathfrak{A}\rightarrowtail\mathcal{E}\leftarrowtail\mathfrak{B}$ by its middle space $\mathcal{E}$ for shorthand.  We call another Hilbert bimodule $\mathfrak{A}\rightarrowtail\mathcal{F}\leftarrowtail\mathfrak{B}$ \emph{unitarily equivalent} to the Hilbert bimodule $\mathfrak{A}\rightarrowtail\mathcal{E}\leftarrowtail\mathfrak{B}$ when there is a unitary map $U: \mathcal{E}\to\mathcal{F}$ intertwining both the left $\mathfrak{A}$-actions and the right $\mathfrak{B}$-actions.

 One can think of Hilbert bimodules as generalized maps between C*-algebras, which can be composed by means of a tensor product operation, as follows \citep{La01b,La01c}.  Consider two Hilbert bimodules
 \begin{align}
\mathfrak{A}\rightarrowtail\mathcal{E}\leftarrowtail\mathfrak{B} && \text{and} && \mathfrak{B}\rightarrowtail\mathcal{F}\leftarrowtail\mathfrak{C}.
 \end{align}
Consider the algebraic tensor product $\mathcal{E}\otimes \mathcal{F}$ and define the subspace
\begin{align}
    \mathcal{N}_\mathfrak{B} = \text{span}\Big\{\varphi b\otimes \psi - \varphi\otimes b\psi\ \Big|\ \varphi\in\mathcal{E}, \psi\in\mathcal{F}, \text{and }b\in\mathfrak{B}\Big\}.
\end{align}
Then define $\mathcal{E}\dot{\otimes}_{\mathfrak{B}} \mathcal{F} = (\mathcal{E}\otimes\mathcal{F})/\mathcal{N}_\mathfrak{B}$.  This space carries a natural left $\mathfrak{A}$-action and right $\mathfrak{C}$-action as well as the $\mathfrak{C}$-valued inner product defined by
\begin{align}
    \inner{\varphi_1\otimes \psi_1}{\varphi_2\otimes\psi_2}_{\mathcal{E}\dot{\otimes}_\mathfrak{B}\mathcal{F}} = \inner{\psi_1}{\inner{\varphi_1}{\varphi_2}_{\mathcal{E}}\cdot\psi_2}_{\mathcal{F}}
\end{align}
for $\varphi_1,\varphi_2\in\mathcal{E}$ and $\psi_1,\psi_2\in\mathcal{F}$.  The completion of $\mathcal{E}\dot{\otimes}_{\mathfrak{B}}\mathcal{F}$, denoted $\mathcal{E}\otimes_\mathfrak{B}\mathcal{F}$, is called the interior tensor product and forms the middle space of a Hilbert bimodule
\begin{align}
    \mathfrak{A}\rightarrowtail\mathcal{E}\otimes_\mathfrak{B}\mathcal{F}\leftarrowtail\mathfrak{C},
\end{align}
which we consider as the composition of the Hilbert bimodules $\mathcal{E}$ and $\mathcal{F}$.

It is well known that a Hilbert bimodule $\mathfrak{A}\rightarrowtail\mathcal{E}\leftarrowtail\mathfrak{B}$ allows one to induce Hilbert space representations of $\mathfrak{A}$ from those of $\mathfrak{B}$ \citep{La95b} (See also \citep{Ri72a,Ri74,Ri74a,Gr78,Gr80,Ro94a}).  Here, a \emph{Hilbert space representation} is a *-homomorphism $\pi: \mathfrak{B}\to \mathcal{H}_\pi$.  One can induce a Hilbert space representation of $\mathfrak{A}$ from $\mathcal{H}_\pi$ and the bimodule $\mathcal{E}$ by taking the above interior tensor product $\mathcal{E}\otimes_\mathfrak{B}\mathcal{H}_\pi$.  Moreover, this construction of induced Hilbert space representations does not depend on the full structure of the Hilbert bimodule $\mathcal{E}$ in the sense that distinct Hilbert bimodules can give rise to the same induction functor.  But there is a structure that is invariant among the different Hilbert bimodules giving rise to the same induction functor.  If we denote by $\pi_\mathfrak{A}^\mathcal{E}$ and $\pi_\mathfrak{B}^\mathcal{E}$ the representations of $\mathfrak{A}$ and (the opposite of) $\mathfrak{B}$ determined by their left/right actions on $\mathcal{E}$, then the induced Hilbert space representation possesses an invariant in the form of the kernel $\ker(\pi_\mathfrak{A}^\mathcal{E}\otimes\pi_\mathfrak{B}^\mathcal{E})$ of the tensor product representation, which we will henceforth call the \emph{tensor product kernel} associated with $\mathcal{E}$.  By calling the tensor product kernel an invariant of the induced representation, we mean that while distinct Hilbert bimodules may give rise to equivalent functors from Hilbert space representations of $\mathfrak{B}$ to Hilbert space representations of $\mathfrak{A}$, in this case the Hilbert bimodules must possess the same tensor product kernel.  Hence, we will define the morphisms in our category as equivalence classes of Hilbert bimodules that share the same tensor product kernel.

In particular, a regular bibundle $G\rightarrowtail M\leftarrowtail H$ between Lie groupoids $G$ and $H$ allows one to construct a Hilbert bimodule $C^*(G)\rightarrowtail\mathcal{E}_M\leftarrowtail C^*(H)$, where $\mathcal{E}_M$ is the completion of $C_c^\infty(M)$ relative to the inner product defined below.  To define the inner product and group actions, first note that the left Haar system $(\nu^t_{x_0})_{x_0\in G_0}$ on $G$ induces a family of measures $(\mu_{y_0})_{y_0\in H_0}$ on $M$, each with support contained in $s_M^{-1}(y_0)$, such that for each $f\in C_c^\infty(M)$ \citep{La01a},
\begin{enumerate}
    \item the function $y_0\mapsto \int d\mu_{y_0}(q)\ f(q)$ on $H_0$ is smooth;
    \item $\int d\mu_{t_H(y)}(q)\ f(qy) = \int d\mu_{s_H(y)}(q)f(q)$; and
    \item for any $y_0\in H_0$ and $q_0\in s_M^{-1}(y_0)$, $\int d\mu_{y_0}(q)\ f(q) = \int d\nu^t_{t_M(q_0)}(x)\ f(x^{-1}q_0)$.
\end{enumerate}
We will further denote the left Haar system on $H$ by $(\lambda^t_{y_0})_{y_0\in H_0}$.
With the measures $(\mu_{y_0})_{y_0\in H_0}$ on $M$, we can define the $C^*(H)$-valued inner product
\begin{align}
\label{eq:inner_product}
    \inner{\varphi}{\psi}_{\mathcal{E}_M}(y) = \int_{s_M^{-1}(t_H(y))}d\mu_{t_H(y))}(q)\  \overline{\varphi(q)}\psi(qy)
\end{align}
for $\varphi,\psi\in C_c^\infty(M)$ and $y\in H$.  Then $\mathcal{E}_M$ is defined as the completion of $C_c^\infty(M)$ relative to the norm $\norm{\varphi}_{\mathcal{E}_M}^2 = \norm{\inner{\varphi}{\varphi}_{\mathcal{E}_M}}$, where the norm on the right hand side is the C*-norm in $C^*(H)$.  The left $C^*(G)$-action and right $C^*(H)$ action are given for $f\in C^*(G)$ and $g\in C^*(H)$ by
\begin{align}
\label{eq:left_bimodule_action}
    (f\cdot \varphi)(q) &= \int_{t_G^{-1}(t_M(q))}d\nu^t_{t_M(q)}(x)\ f(x)\varphi(x^{-1}q)\\
\label{eq:right_bimodule_action}(\varphi\cdot g)(q) &= \int_{t_H^{-1}(s_M(q))} d\lambda^t_{s_M(q)}(y)\ g(y^{-1})\varphi(qy)
\end{align}
for all $\varphi\in C_c^\infty(M)$ and $q\in M$, extended to $\mathcal{E}_M$ by continuity. It then follows that $C^*(G)\rightarrowtail \mathcal{E}_M\leftarrowtail C^*(H)$ satisfies the definition of a Hilbert bimodule \citep{Gr78,MuReWi87,La01a}.

We define the following category of C*-algebras, understood as a category of quantum models.\bigskip

\begin{definition}
    We denote by $\mathbf{LC^*}$ the category consisting in:
    \begin{itemize}
        \item \emph{Objects}: Lie groupoid C*-algebras $C^*(G)$ associated to source-simply connected Lie groupoids $G$ whose Lie algebroids $\mathfrak{G}^*$ are integrable.
        \item \emph{Arrows}: equivalence classes of Hilbert bimodules $C^*(G)\rightarrowtail \mathcal{E}_M\leftarrowtail C^*(H)$ associated to integrable, regular bibundles $M$ between Lie groupoids via the construction of Eqs. (\ref{eq:inner_product})-(\ref{eq:right_bimodule_action}) that determine the same tensor product kernel $\ker(\pi_\mathfrak{A}^\mathcal{E}\otimes\pi_\mathfrak{B}^\mathcal{E})$.
    \end{itemize}
\end{definition}
\noindent One can check that $\mathbf{LC^*}$ is indeed a category, with the behavior of the tensor product for Hilbert bimodules implying the appropriate composition laws for arrows \citep{La01b,La01c}.  Again, note that our definition differs from the one given by Landsman because we consider equivalence classes of Hilbert bimodules that determine the same tensor product kernel rather than unitary equivalence classes of Hilbert bimodules themselves. 

\subsection{Quantization and the Classical Limit of Morphisms}

Quantization and the classical limit already provide ways to associate objects in $\mathbf{LPoisson}$ with objects in $\mathbf{LC^*}$.  In particular, the strict quantization defined in Eq. (\ref{eq:quantization_map}) of the dual of a Lie algebroid $\mathfrak{G}^*$ in $\mathbf{LPoisson}$ associated to a Lie groupoid $G$ yields the Lie groupoid C*-algebra $C^*(G)$ in $\mathbf{LC^*}$.  On the other hand, one can take the classical limit of an object $C^*(G)$ in $\mathbf{LC^*}$ by generating a uniformly continuous bundle $\mathfrak{A}^G$ over the base space consisting in values of $\hbar$ in $(0,1]$ from this strict quantization. (\citet{La99} shows the algebra of continuous sections for this uniformly continuous bundle is the groupoid C*-algebra of the normal groupoid over $G$.)  Let $K^G_0$ be the ideal of sections in $\mathfrak{A}^G$ whose norm vanishes at $\hbar = 0$ as in Eq. (\ref{eq:ideal}).  The construction of \citet{StFe21a} then yields the classical limit $\mathfrak{A}^G/K^G_0$, which they show is unique up to isomorphism.  This implies that $\mathfrak{A}^G/K^G_0\cong C_0(\mathfrak{G}^*)$, which leads by Gelfand duality to the Poisson manifold $\mathfrak{G}^*$ in $\mathbf{LPoisson}$.

We will denote these associations by
\begin{align}
    &Q: \mathfrak{G}^*\mapsto C^*(G)\\
    &L: C^*(G)\mapsto \mathfrak{G}^*.
\end{align}
The map $Q$ represents the quantization of objects in $\mathbf{LPoisson}$ yielding objects in $\mathbf{LC^*}$.  The map $L$ represents the classical limit of objects in $\mathbf{LC^*}$ yielding objects in $\mathbf{LPoisson}$.

We will understand each of the maps $Q$ and $L$ as the action of a functor on objects in the respective category.  To complete the definition of each functor, we must specify the behavior of each of $Q$ and $L$ on arrows.  \citet{La01a,La02a} (See also \citep{La00}) concretely defines the action of $Q$ on arrows in $\mathbf{LPoisson}$ via the association
\begin{align}
    Q: [\mathfrak{G}^*\leftarrow T^*M\rightarrow \mathfrak{H}^*]\mapsto [C^*(G)\rightarrowtail\mathcal{E}_M\leftarrowtail C^*(H)]
\end{align}
The fact that $Q$ is well-defined follows from our Prop.~\ref{prop:kernel} below, which shows that if two symplectic dual pairs $T^*M$ and $T^*M'$ determine the same Lagrangian relation, then the Hilbert bimodules $\mathcal{E}_M$ and $\mathcal{E}_{M'}$ must have the same tensor product kernel.  Landsman shows that $Q$ respects the composition of arrows by the tensor product of symplectic dual pairs in $\mathbf{LPoisson}$ and the interior tensor product of Hilbert bimodules in $\mathbf{LC^*}$, up to isomorphism of symplectic dual pairs and unitary equivalence of Hilbert bimodules.  Landsman's result guarantees that $Q$ is indeed a functor.

While the quantization of arrows is specific to the objects of $\mathbf{LPoisson}$ rather than general symplectic dual pairs, the classical limit of arrows as constructed in \citet{FeSt24} applies to a broad class of Hilbert bimodules, of which we will see the arrows in $\mathbf{LC^*}$ as a special case.  They consider the case where one has not only the fiber C*-algebras at some particular value of $\hbar\in (0,1]$, but also where one possesses two full uniformly continuous bundles of C*-algebras $\mathfrak{A}$ and $\mathfrak{A}'$ generated by quantization maps $(\mathcal{Q}_\hbar)_{\hbar\in (0,1]}$ and $(\mathcal{Q}'_\hbar)_{\hbar\in (0,1]}$.  Under certain conditions, a Hilbert bimodule $\mathcal{E}_\hbar$ between the fibers $\mathfrak{A}_\hbar$ and $\mathfrak{A}'_\hbar$ at a fixed value yields a Hilbert bimodule $\underline{\mathcal{E}}$ between the algebras of continuous sections $\mathfrak{A}$ and $\mathfrak{A}'$ by translating between values of $\hbar\in (0,1]$ by means of the maps $\mathcal{Q}$ and $\mathcal{Q}'$.  One can then use the Hilbert bimodule $\mathfrak{A}\rightarrowtail\underline{\mathcal{E}}\leftarrowtail\mathfrak{A}'$ to construct the classical limit as follows.

Let $K_0 = \{a\in\mathfrak{A}\ |\ \lim_{\hbar\to 0}\norm{\phi_{\hbar}(a)} = 0\}$ and $K'_0 = \{b\in\mathfrak{A}'\ |\ \lim_{\hbar\to 0}\norm{\phi_{\hbar}(b)} = 0\}$.  Recall that the classical limit algebras are $\mathfrak{A}_0 = \mathfrak{A}/K_0$ and $\mathfrak{A}'_0 = \mathfrak{A}'/K'_0$.
We will say that the Hilbert bimodule $\underline{\mathcal{E}}$ is \emph{strongly nondegenerate} at $\hbar = 0$ if $\overline{K_0\cdot \underline{\mathcal{E}}}\subseteq \overline{\underline{\mathcal{E}}\cdot K'_0}$.  \citet{FeSt24} show that if $\underline{\mathcal{E}}$ is strongly nondegenerate at $\hbar= 0$, then
\begin{align}
    \mathfrak{A}_0\rightarrowtail\mathcal{E}_0\leftarrowtail\mathfrak{A}_0'
\end{align}
forms a Hilbert bimodule between the classical limit fiber C*-algebras  $\mathfrak{A}_0$ and $\mathfrak{A}_0'$ at $\hbar = 0$, with the middle space given by $\mathcal{E}_0 = \underline{\mathcal{E}}/\overline{\underline{\mathcal{E}}\cdot K'_0}$.  They call the Hilbert bimodule $\mathcal{E}_0$ the \emph{Hilbert classical limit} of $\underline{\mathcal{E}}$.

One can then use a duality to construct a symplectic dual pair from the Hilbert classical limit $\mathcal{E}_0$.  In particular the left $\mathfrak{A}_0$-action and right $\mathfrak{A}_0$-action each give rise to representations $\pi_0$ of $\mathfrak{A}_0$ and $\pi'_0$ of (the opposite of) $\mathfrak{A}'_0$ as bounded linear operators on $\mathcal{E}_0$.  \citet{FeSt24} consider the C*-algebra
\begin{align}
    \mathfrak{B} = (\mathfrak{A}_0\otimes\mathfrak{A}'_0)/\ker(\pi_0\otimes \pi'_0),
\end{align}
and show that under certain conditions, $\mathfrak{B}\cong C_0(S)$ for a symplectic manifold $S$.  Moreover, if $\mathfrak{A}$ and $\mathfrak{A}'$ are generated from strict quantizations, then $\mathfrak{A}_0\cong C_0(P)$ and $\mathfrak{A}'_0\cong C_0(P')$ for Poisson manifolds $P$ and $P'$.  In this case, the results of \citet{FeSt24} imply that
\begin{align}
   \xymatrix{P & S\ar[l]\ar[r] & P'}
\end{align}
is a symplectic dual pair, which they call the \emph{symplectic classical limit}.  Importantly, \citet{FeSt24} establish that the association of $\underline{\mathcal{E}}$ with both the Hilbert classical limit $\mathcal{E}_0$ and the symplectic classical limit $S$ respects composition of Hilbert bimodules and symplectic dual pairs by their corresponding tensor product operations, which implies that the association is functorial. 

Let us now return to the case of interest in this paper, where we consider the quantizations $(\mathcal{Q}_\hbar^G)_{\hbar\in (0,1]}$ and $(\mathcal{Q}_\hbar^H)_{\hbar\in (0,1]}$ for Lie groupoids $G$ and $H$ as generating uniformly continuous bundles with C*-algebras of uniformly continuous sections given by $\mathfrak{A}^G$ and $\mathfrak{A}^H$ (i.e., the normal groupoid C*-algebras), respectively.  In the next section, we will show that each Hilbert bimodule $C^*(G)\rightarrowtail\mathcal{E}_M\leftarrowtail C^*(H)$ in $\mathbf{LC^*}$ can be canonically (and functorially) lifted to a Hilbert bimodule \begin{align}    
\mathfrak{A}^G\rightarrowtail\underline{\mathcal{E}}_M\leftarrowtail \mathfrak{A}^H
\end{align}
between the C*-algebras of uniformly continuous sections.  With 
\begin{align}
    K^G_0 = \left\{a\in\mathfrak{A}^G\ |\ \lim_{\hbar\to 0}\norm{a(\hbar)} = 0 \right\} && \text{and} && K^H_0 = \left\{b\in\mathfrak{A}^H\ |\ \lim_{\hbar\to 0}\norm{b(\hbar)} = 0 \right\},
\end{align} the construction of \citet{FeSt24} then produces the Hilbert classical limit
\begin{align}
\mathfrak{A}^G_0\rightarrowtail(\mathcal{E}_M)_0\leftarrowtail\mathfrak{A}^H_0
\end{align}
with middle space $(\mathcal{E}_M)_0 = \underline{\mathcal{E}}_M/\overline{\underline{\mathcal{E}_M}\cdot K^H_0}$ and actions of the abelian C*-algebras $\mathfrak{A}^G_0\cong C_0(\mathfrak{G}^*)$ and $\mathfrak{A}^H_0\cong C_0(\mathfrak{H}^*)$ at $\hbar=0$.  As before, the left $\mathfrak{A}^G_0$-action and right $\mathfrak{A}^H_0$-action on $(\mathcal{E}_M)_0$ give rise to representations $\pi^G_0$ of $\mathfrak{A}^G_0$ and $\pi^H_0$ of (the opposite of) $\mathfrak{A}^H_0$ as bounded linear operators.  This allows us to consider the C*-algebra 
\begin{align}
\label{eq:symplectic_classical_limit}\mathfrak{B}_M =(\mathfrak{A}^G_0\otimes\mathfrak{A}^H_0)/\ker(\pi^G_0\otimes\pi^H_0)
\end{align}
Since $\mathfrak{A}^G_0$ and $\mathfrak{A}^H_0$ are abelian, so is $\mathfrak{B}_M$.  Gelfand duality then implies $\mathfrak{B}_M\cong C_0(S_M)$ for a topological space $S_M$, which also has the structure of a Poisson manifold.  \citet{FeSt24} treat the special case in which $S_M$ is already a symplectic manifold. 
 In the setting of this paper, however, $S_M$ may not possess a symplectic form. Instead, we will show directly that $S_M$ is a Lagrangian relation determined by an equivalence class of symplectic dual pairs.  Thus for the groupoid structures in the setting of this paper, we arrive at a Lagrangian relation we can understand as the classical limit of $\underline{\mathcal{E}}_M$.

We can then define the action of $L$ on arrows in $\mathbf{LC^*}$ via the association
\begin{align}
    L: [C^*(G)\rightarrowtail\mathcal{E}_M\leftarrowtail C^*(H)]\mapsto [S_M],
\end{align}
where $[S_M]$ denotes the equivalence class of symplectic dual pairs that determine the Lagrangian relation $S_M$.
\citet{FeSt24} establish that the symplectic classical limit respects composition of arrows by the interior tensor product of Hilbert bimodules in $\mathbf{LC^*}$ and the tensor product of symplectic dual pairs in $\mathbf{LPoisson}$, up to unitary equivalence of Hilbert bimodules and isomorphism of symplectic dual pairs.  In fact, this ensures that $L$ is well-defined on our categories---to see this, suppose two Hilbert bimodules $C^*(G)\rightarrowtail\mathcal{E}_M\leftarrowtail C^*(H)$ and $C^*(G)\rightarrowtail\mathcal{E}_{M'}\leftarrowtail C^*(H)$ determine equivalent induction functors.  Then $\mathcal{E}_M$ and $\mathcal{E}_{M'}$ must have the same tensor product kernel, and it follows by direct inspection of the construction in Eq.~\eqref{eq:symplectic_classical_limit} that $L(\mathcal{E}_M)= L(\mathcal{E}_{M'})$, since they determine the same Lagrangian relation.  Thus, $L$ is well-defined, and the result from \citet{FeSt24} guarantees that $L$ is a functor.

With the functors $Q$ and $L$ so specified, the remainder of this paper is devoted to (i) showing that $L$ indeed has the properties just claimed, and (ii) showing that $Q$ and $L$ form a categorical equivalence.  The main result establishing the categorical equivalence requires showing that $Q$ and $L$ are almost inverse to each other in the sense that $Q\circ L$ and $L\circ Q$ are naturally isomorphic to the identity functors.  Ultimately, the crux of the issue is to show that whenever $Q\left([T^*M]\right) = [\mathcal{E}_M]$ and $[S_M] = L\left([\mathcal{E}_M]\right)$, it must follow that $S_M\cong (j_G,j_H)[T^*M]$.  We will prove this main result, and thus establish the existence of a categorical equivalence between $\mathbf{LPoisson}$ and $\mathbf{LC^*}$, in the remainder of the paper.

\section{Categorical Equivalence}
\label{sec:equivalence}

\subsection{Bimodules between Continuous Bundles}

Recall that we consider the quantizations $(\mathcal{Q}_\hbar^G)_{\hbar\in (0,1]}$ and $(\mathcal{Q}_\hbar^H)_{\hbar\in (0,1]}$ for Lie groupoids $G$ and $H$ as generating uniformly continuous bundles with C*-algebras of sections $\mathfrak{A}^G$ and $\mathfrak{A}^H$, respectively.  In this section, we will construct a Hilbert bimodule (denoted $\underline{\mathcal{E}}_M$) between the algebras of continuous sections $\mathfrak{A}^G$ and $\mathfrak{A}^H$ from an arrow $C^*(G)\rightarrowtail \mathcal{E}_M\leftarrowtail C^*(H)$ in $\mathbf{LC^*}$, understood as a Hilbert bimodule between the fiber C*-algebras $\mathfrak{A}^G_\hbar=C^*(G)$ and $\mathfrak{A}^H_\hbar=C^*(H)$ at a fixed value $\hbar\in (0,1]$.

To that end, we consider the space $(0,1]\times M$ for the manifold $M$ that serves as the middle space of the bibundle $G\rightarrowtail M\leftarrowtail H$.  Paralleling the construction of $\mathcal{E}_M$ after Eq. (\ref{eq:inner_product}), we begin with the vector space $C^\infty((0,1]\times M)$.   We will understand an element $\underline{\varphi}\in C^\infty((0,1]\times M)$ as corresponding to a map $(0,1]\to C^\infty(M)$ with $\hbar\mapsto \underline{\varphi}_\hbar(\cdot) = \underline{\varphi}(\hbar,\cdot)$. We define the subspace
\begin{align}
    \widehat{\underline{\mathcal{E}}}_M = \Big\{\underline{\varphi}\in C^\infty((0,1]\times M)\ \Big|\ \text{ for each } \hbar\in(0,1],\  \underline{\varphi}_\hbar\in C_c^\infty(M) \Big\}
\end{align}
We will see that a completion of a subspace of $\widehat{\underline{\mathcal{E}}}_M$ will define the middle space of a Hilbert bimodule denoted $\underline{\mathcal{E}}_M$.  

Note similarly that since continuous sections in $\mathfrak{A}^G$ are maps $(0,1]\to C^*(G)$, we will understand an element $a\in \mathfrak{A}^G$ as a map $(0,1]\times G\to \mathbb{C}$.  Indeed, viewing $\mathfrak{A}^G$ as the normal groupoid of $G$ as in \citet{La99}, one sees $\mathfrak{A}^G$ as a suitable completion of a subspace of $C^\infty((0,1]\times G)$, understood with an appropriate smooth manifold structure.  Similarly, since continuous sections in $\mathfrak{A}^H$ are maps $(0,1]\to C^*(H)$, we will understand an element $b\in \mathfrak{A}^H$ as a map $(0,1]\times H\to \mathbb{C}$, since $\mathfrak{A}^H$ is also a suitable completion of a subspace of $C^\infty((0,1]\times H)$.  For ease of notation, we will write $a_\hbar$ for $a(\hbar)$ and $b_\hbar$ for $b(\hbar)$.

Define a bilinear map on $\widehat{\underline{\mathcal{E}}}_M$ by
\begin{align}
    \inner{\underline{\varphi}}{\underline{\psi}}_{\underline{\mathcal{E}}_M}(\hbar,y) = \inner{\underline{\varphi}_\hbar}{\underline{\psi}_\hbar}_{\mathcal{E}_M}(y) = \int_{s_M^{-1}(t_H(y))}d\mu_{t_H(y)}(q)\ \overline{\underline{\varphi}(\hbar,q)}\underline{\psi}(\hbar,qy)
\end{align}
for all $\underline{\varphi},\underline{\psi}\in \widehat{\underline{\mathcal{E}}}_M$, $\hbar\in (0,1]$, and $y\in H$.  Then define the subspace $\widetilde{\underline{\mathcal{E}}}_M\subseteq\widehat{\underline{\mathcal{E}}}_M$ by
\begin{align}
    \widetilde{\underline{\mathcal{E}}}_M = \Big\{\underline{\varphi}\in \widehat{\underline{\mathcal{E}}}_M\ \Big|\ \hbar\mapsto \underline{\varphi}_\hbar\text{ is uniformly continuous with respect to }\norm{\cdot}_{\mathcal{E}_M}\Big\}. 
\end{align}

\noindent Clearly, $\widetilde{\underline{\mathcal{E}}}_M$ is a vector space.  Then $\underline{\mathcal{E}}_M$ is defined as the completion of $\widetilde{\underline{\mathcal{E}}}_M$ relative to the norm $\norm{\underline{\varphi}}_{\underline{\mathcal{E}}_M}^2 = \norm{\inner{\underline{\varphi}}{\underline{\varphi}}_{\underline{\mathcal{E}}_M}}$.

This vector space $\underline{\mathcal{E}}_M$ will serve as the middle space of a Hilbert bimodule between the algebras of continuous sections.  The left $\mathfrak{A}^G$-action and right $\mathfrak{A}^H$-action on $\underline{\mathcal{E}}_M$ are given for $a\in\mathfrak{A}^G$ and $b\in\mathfrak{A}^H$ by
\begin{align}
    (a\cdot \underline{\varphi})(\hbar,q) &= (a_\hbar\cdot \underline{\varphi}_\hbar)(q) = \int_{t_G^{-1}(t_M(q))} d\nu^t_{t_M(q)}(x)\ a(\hbar,x)\underline{\varphi}(\hbar,x^{-1}q)\\
    (\underline{\varphi}\cdot b)(\hbar,q) &= (\underline{\varphi}_\hbar\cdot b_\hbar)(q) = \int_{t_H^{-1}(s_M(q))} d\lambda^t_{s_M(q)}(y)\ b(\hbar,y^{-1})\underline{\varphi}(\hbar,qy)
\end{align}
for all $\underline{\varphi}\in \widehat{\underline{\mathcal{E}}}_M$, $q\in M$, and $\hbar\in (0,1]$, extended to $\underline{\mathcal{E}}_M$ by continuity.

We now proceed to show that $\mathfrak{A}^G\rightarrowtail \underline{\mathcal{E}}_M\leftarrowtail \mathfrak{A}^H$ is indeed a Hilbert bimodule.  The only feature that must be checked is that $\inner{\underline{\varphi}}{\underline{\psi}}_{\underline{\mathcal{E}}_M}\in \mathfrak{A}^H$ for all $\underline{\varphi},\underline{\psi}\in \underline{\mathcal{E}}_M$ to ensure that $\inner{\cdot}{\cdot}_{\underline{\mathcal{E}}_M}$ is an $\mathfrak{A}^H$-valued inner product.  We will need a preliminary lemma concerning the inner product $\inner{\cdot}{\cdot}_{\mathcal{E}_M}$.\bigskip

\begin{lemma}
\label{lemma:continuity_inner_product}
For any $\underline{\varphi},\underline{\psi}\in\widetilde{\underline{\mathcal{E}}}_M$, the map $\hbar\mapsto \inner{\underline{\varphi}_\hbar}{\underline{\psi}_\hbar}_{\mathcal{E}_M}$ is uniformly continuous with respect to the norm on $C^*(H)$.
\end{lemma}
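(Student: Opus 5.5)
The plan is to reduce the claim to the Cauchy--Schwarz inequality for Hilbert C*-modules applied to the fixed bimodule $\mathcal{E}_M$, together with sesquilinearity of $\inner{\cdot}{\cdot}_{\mathcal{E}_M}$. For $\hbar,\hbar'\in(0,1]$ I would write the difference as a telescoping sum:
\begin{align}
\inner{\underline{\varphi}_\hbar}{\underline{\psi}_\hbar}_{\mathcal{E}_M}-\inner{\underline{\varphi}_{\hbar'}}{\underline{\psi}_{\hbar'}}_{\mathcal{E}_M}
= \inner{\underline{\varphi}_\hbar-\underline{\varphi}_{\hbar'}}{\underline{\psi}_\hbar}_{\mathcal{E}_M}
+ \inner{\underline{\varphi}_{\hbar'}}{\underline{\psi}_\hbar-\underline{\psi}_{\hbar'}}_{\mathcal{E}_M}.
\end{align}
Each term lives in $C^*(H)$, since $\mathcal{E}_M$ is a Hilbert $C^*(H)$-module and $\underline{\varphi}_\hbar,\underline{\psi}_\hbar\in C_c^\infty(M)\subseteq\mathcal{E}_M$.

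Next I would apply the Hilbert-module Cauchy--Schwarz inequality $\norm{\inner{\xi}{\eta}_{\mathcal{E}_M}}\le\norm{\xi}_{\mathcal{E}_M}\norm{\eta}_{\mathcal{E}_M}$ (standard for any Hilbert C*-module, see \citet{La95,RaWi98}). This gives
\begin{align}
\norm{\inner{\underline{\varphi}_\hbar}{\underline{\psi}_\hbar}_{\mathcal{E}_M}-\inner{\underline{\varphi}_{\hbar'}}{\underline{\psi}_{\hbar'}}_{\mathcal{E}_M}}
\le \norm{\underline{\varphi}_\hbar-\underline{\varphi}_{\hbar'}}_{\mathcal{E}_M}\norm{\underline{\psi}_\hbar}_{\mathcal{E}_M}
+ \norm{\underline{\varphi}_{\hbar'}}_{\mathcal{E}_M}\norm{\underline{\psi}_\hbar-\underline{\psi}_{\hbar'}}_{\mathcal{E}_M}.
\end{align}
By definition of $\widetilde{\underline{\mathcal{E}}}_M$, the maps $\hbar\mapsto\underline{\varphi}_\hbar$ and $\hbar\mapsto\underline{\psi}_\hbar$ are uniformly continuous into $\mathcal{E}_M$. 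So the difference factors tend to zero uniformly as $\abs{\hbar-\hbar'}\to 0$, and it remains only to bound the other factors uniformly in $\hbar$.

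The main, though mild, obstacle is this uniform boundedness of $\sup_{\hbar\in(0,1]}\norm{\underline{\varphi}_\hbar}_{\mathcal{E}_M}$, because $(0,1]$ is not compact. I would get it from uniform continuity. Choose $\delta>0$ with $\norm{\underline{\varphi}_\hbar-\underline{\varphi}_{\hbar'}}_{\mathcal{E}_M}<1$ whenever $\abs{\hbar-\hbar'}<\delta$. Cover $(0,1]$ by finitely many points $\hbar_1,\dots,\hbar_k$ spaced less than $\delta$ apart. Then $\norm{\underline{\varphi}_\hbar}_{\mathcal{E}_M}\le 1+\max_i\norm{\underline{\varphi}_{\hbar_i}}_{\mathcal{E}_M}<\infty$ for every $\hbar$, and the same argument bounds $\underline{\psi}$. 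Since a uniformly continuous map on a bounded metric space is bounded, this is enough, and the displayed estimate yields uniform continuity of $\hbar\mapsto\inner{\underline{\varphi}_\hbar}{\underline{\psi}_\hbar}_{\mathcal{E}_M}$ in the C*-norm of $C^*(H)$.
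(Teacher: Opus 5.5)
Your proof is correct and follows essentially the same route as the paper: you split the difference of inner products into two terms by adding and subtracting a mixed term, apply the Hilbert-module Cauchy--Schwarz inequality, and use the uniform continuity of $\hbar\mapsto\underline{\varphi}_\hbar$ and $\hbar\mapsto\underline{\psi}_\hbar$ to control the difference factors. You also derive the finiteness of $\sup_{\hbar\in(0,1]}\norm{\underline{\varphi}_\hbar}_{\mathcal{E}_M}$ from uniform continuity on the bounded interval $(0,1]$, which the paper uses without comment when it defines its $\delta$ in terms of these suprema, and you state the $\epsilon$--$\delta$ quantifiers in the correct order.
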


\begin{proof}
    Suppose $\epsilon>0$.  Since $\hbar\mapsto \underline{\varphi}_\hbar$ and $\hbar\mapsto\underline{\psi}_\hbar$ are uniformly continuous, there are $\delta_\varphi>0$ and $\delta_\psi>0$ such that for any $\hbar,\hbar'\in (0,1]$ with $\abs{\hbar-\hbar'}<\epsilon,$
\begin{align}
    &\norm{\underline{\varphi}_\hbar-\underline{\varphi}_{\hbar'}}_{\mathcal{E}_M}<\delta_\varphi\\
    &\norm{\underline{\psi}_\hbar-\underline{\psi}_{\hbar'}}_{\mathcal{E}_M}<\delta_\psi.
\end{align}
We define
    \begin{align}
        \delta = \delta_\psi\cdot \sup_{\hbar\in (0,1]}\norm{\underline{\varphi}_\hbar}_{\mathcal{E}_M} + \delta_\varphi\cdot \sup_{\hbar\in (0,1]}\norm{\underline{\psi}_\hbar}_{\mathcal{E}_M}.
    \end{align}
    Then for any $\hbar,\hbar'\in (0,1]$ with $\abs{\hbar-\hbar'}<\epsilon$, the Cauchy-Schwarz inequality for Hilbert bimodules \citep[Lemma 2.5]{RaWi98} implies
    \begin{align}
        \norm{\inner{\underline{\varphi}_\hbar}{\underline{\psi}_\hbar}_{\mathcal{E}_M} - \inner{\underline{\varphi}_{\hbar'}}{\underline{\psi}_{\hbar'}}_{\mathcal{E}_M}} &= \norm{\inner{\underline{\varphi}_\hbar}{\underline{\psi}_\hbar}_{\mathcal{E}_M} - \inner{\underline{\varphi}_\hbar}{\underline{\psi}_{\hbar'}}_{\mathcal{E}_M} + \inner{\underline{\varphi}_\hbar}{\underline{\psi}_{\hbar'}}_{\mathcal{E}_M} -\inner{\underline{\varphi}_{\hbar'}}{\underline{\psi}_{\hbar'}}_{\mathcal{E}_M}}\\
        &\leq \norm{\inner{\underline{\varphi}_\hbar}{\underline{\psi}_\hbar-\underline{\psi}_{\hbar'}}_{\mathcal{E}_M}} + \norm{\inner{\underline{\varphi}_\hbar-\underline{\varphi}_{\hbar'}}{\underline{\psi}_\hbar}_{\mathcal{E}_M}}\\
        &\leq \norm{\underline{\varphi}_\hbar}_{\mathcal{E}_M}\cdot \norm{\underline{\psi}_\hbar-\underline{\psi}_{\hbar'}}_{\mathcal{E}_M} + \norm{\underline{\varphi}_\hbar - \underline{\varphi}_{\hbar'}}_{\mathcal{E}_M}\cdot \norm{\underline{\psi}_{\hbar'}}_{\mathcal{E}_M}\\
        &<\delta.
    \end{align}
\end{proof}

\noindent Next, to compare the inner product $\inner{\cdot }{\cdot}_{\underline{\mathcal{E}}_M}$ with the continuous sections in $\mathfrak{A}^H$, we shall need some further results concerning the generating set $\tilde{\mathfrak{A}}^H$ defined (see discussion around Eq.\eqref{eq:quantization_sections}) as the *-algebra of pointwise products and sums of sections of the form
\begin{align}
    [\hbar\mapsto \mathcal{Q}_\hbar^H(g)]
\end{align}
for $g\in C^\infty_{PW}(\mathfrak{H}^*)$.\bigskip

\begin{lemma}
    Suppose $g\in C^\infty_{PW}(\mathfrak{H}^*)$.  The map $\hbar\mapsto \pi^H(\mathcal{Q}_\hbar(g))$ is norm continuous for $\hbar\in (0,1]$, i.e.,
    \begin{align}
        \lim_{\hbar\to\hbar'}\norm{\pi^H(\mathcal{Q}_\hbar^G(g)-\pi^H(\mathcal{Q}_{\hbar'}^H(g))} = 0.
    \end{align}
\end{lemma}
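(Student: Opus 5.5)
Since the norm on $C^*(H)$ is by definition the operator norm in the left regular representation $\pi^H$, I will bound that operator norm by the Hilbert--Schmidt norm. By Eq.~\eqref{eq:HSkernel}, $\pi^H(\mathcal{Q}^H_\hbar(g)) - \pi^H(\mathcal{Q}^H_{\hbar'}(g))$ is the integral operator with kernel $K_\hbar(y,y') - K_{\hbar'}(y,y')$, where $K_\hbar(y,y') = \mathcal{Q}^H_\hbar(g)(yy'^{-1})$ on $\{s_H(y)=s_H(y')\}$. So it suffices to show
\begin{align}
\int d\nu_0(y_0)\int d\nu^s_{y_0}(y)\int d\nu^s_{y_0}(y')\ \abs{\mathcal{Q}^H_\hbar(g)(yy'^{-1}) - \mathcal{Q}^H_{\hbar'}(g)(yy'^{-1})}^2 \longrightarrow 0
\end{align}
as $\hbar\to\hbar'$. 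This reduces the claim to a statement about the functions $\mathcal{Q}^H_\hbar(g)$ on $H$.

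The key steps are as follows. First, by Eq.~\eqref{eq:quantization_map}, for $\hbar>0$ we have $\mathcal{Q}^H_\hbar(g)(\exp^W(X)) = \hbar^{-n}\kappa_H(X)\hat g(X/\hbar)$. This function is supported in $\exp^W(\operatorname{supp}\kappa_H)$, a fixed set independent of $\hbar$, and it vanishes outside it. Second, since $\hat g$ is smooth and compactly supported, the map $(\hbar,X)\mapsto \hbar^{-n}\kappa_H(X)\hat g(X/\hbar)$ is jointly continuous on $(0,1]\times\mathfrak{H}$. Third, fix $\hbar'$ and restrict to $\hbar\in[\hbar'/2,1]$. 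On that range the support of $\hat g(X/\hbar)$ lies in $\hbar\cdot\operatorname{supp}\hat g\subseteq\operatorname{supp}\hat g$ after fiberwise rescaling, so it stays in a fixed compact set $C$ once we note that $\operatorname{supp}\hat g$ is compact and star-shaped scaling by $\hbar\le 1$ remains within a compact neighborhood. Composing with the diffeomorphism $\exp^W$ gives that $\mathcal{Q}^H_\hbar(g)\to\mathcal{Q}^H_{\hbar'}(g)$ uniformly on $H$, with all supports inside the fixed compact set $\exp^W(C)\subseteq H$.

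Finally, I pass from uniform convergence on a fixed compact set to convergence of the Hilbert--Schmidt integral above. This is where I expect the main obstacle, because the integral runs over $y,y'$ in a whole $s$-fiber. The integrand is supported on pairs with $yy'^{-1}\in\exp^W(C)$, and that set need not be compact when $G_0$ or the fibers are noncompact. I would first try to handle it with the Haar-system change of variables $x=yy'^{-1}$, which turns the double fiber integral into $\int d\nu^s(y')\int d\nu^{t}(x)$ of $\abs{\cdot}^2$ evaluated at $x$. If that does not work because the remaining integral over $y'$ is infinite, I would replace the Hilbert--Schmidt bound with the standard $I$-norm bound $\norm{\pi^H(f)}\le\max\bigl(\sup_{y_0}\int\abs{f}\,d\nu^t_{y_0},\ \sup_{y_0}\int\abs{f}\,d\nu^s_{y_0}\bigr)$. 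This bound, by uniform convergence on the compact set $\exp^W(C)$ together with the local boundedness of the Haar measures of compact sets, tends to $0$. I would adopt the $I$-norm route as the primary argument, since it avoids integrating over the base and uses only the uniform convergence established in the third step.
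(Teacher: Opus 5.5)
Your proposal is correct, but it takes a different route from the paper. The paper bounds the operator norm by the Hilbert--Schmidt norm of the kernel $K^{\mathcal{Q}^H_\hbar(g)}$ on $H\times H$, and then applies dominated convergence to the pointwise-continuous family of kernels. You instead prove that $\mathcal{Q}^H_\hbar(g)\to\mathcal{Q}^H_{\hbar'}(g)$ uniformly, with supports in a fixed compact set. You then use the $I$-norm (Schur test) bound $\norm{\pi^H(f)}\le\norm{f}_I$.

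The paper's version is shorter, but you have correctly identified its weak point: the Hilbert--Schmidt integral is in general infinite.
\begin{itemize}
\item For $H=\mathbb{R}^n$, a nonzero convolution operator on $L^2(\mathbb{R}^n)$ is never Hilbert--Schmidt.
\item When $\dim H_0>0$, the kernel is carried by the null set $\{s_H(y)=s_H(y')\}$ of $H\times H$. So one must at least work fiberwise, and there your change of variables shows the leftover integral over $y'$ can diverge.
\end{itemize}

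Your $I$-norm route needs only two ingredients. The first is the uniform convergence from your third step. The second is that $\sup_{y_0}\nu^t_{y_0}(K)$ and $\sup_{y_0}\nu^s_{y_0}(K)$ are finite for compact $K\subseteq H$, which follows from continuity of the smooth Haar system. The Schur bound on each $L^2(s_H^{-1}(y_0),\nu^s_{y_0})$ then passes to the direct integral over $H_0$. So your argument covers exactly the generality the paper needs. Present it as the proof and drop the Hilbert--Schmidt attempt entirely.

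There is one small slip in your third step. The inclusion $\hbar\cdot\operatorname{supp}\hat g\subseteq\operatorname{supp}\hat g$ need not hold, since the support need not be star-shaped. What you need, and what is true, is that all these sets lie in the compact set $\bigcup_{0\le s\le 1}s\cdot\operatorname{supp}\hat g$, the image of $[0,1]\times\operatorname{supp}\hat g$ under fiberwise scaling. With $\hbar\ge\hbar'/2$, the map $(\hbar,X)\mapsto\hbar^{-n}\kappa_H(X)\hat g(X/\hbar)$ is then continuous with compact support, hence uniformly continuous. That is what gives you the uniform convergence.
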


\begin{proof}
    The map $\hbar\mapsto K^{\mathcal{Q}_\hbar^H(g)}$ yielding the Hilbert-Schmidt kernel (Eq.\eqref{eq:HSkernel}) clearly yields a pointwise continuous family of functions on $H\times H$.  The dominated convergence theorem in $L^2(H\times H)$ implies that
    \begin{align}
        \lim_{\hbar\to\hbar'}\norm{\pi^H(\mathcal{Q}_\hbar(g)) - \pi^H(\mathcal{Q}_{\hbar'}^G(g))}_{HS} = 0,
    \end{align}
    where $\norm{F}_{HS} = \norm{K_F}_{L^2(H\times H)}$ denotes the Hilbert-Schmidt norm for the kernel $K_F$ of any Hilbert-Schmidt operator $F$.  Now the fact that $\norm{F}\leq \norm{F}_{HS}$ yields the result.
\end{proof}

\begin{lemma}
\label{lemma:continuity_generating_sections}
    Suppose $g_1,...,g_k\in C^\infty_{PW}(\mathfrak{H}^*)$.  The map $\hbar\mapsto \pi^H(\mathcal{Q}_\hbar^H(g_1)\cdot...\cdot \mathcal{Q}_\hbar^G(g_k))$ is norm continuous.
\end{lemma}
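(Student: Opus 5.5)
The plan is to argue by induction on $k$, with the previous lemma (norm continuity of $\hbar\mapsto \pi^H(\mathcal{Q}_\hbar^H(g))$ for a single $g\in C^\infty_{PW}(\mathfrak{H}^*)$) as the base case $k=1$. The typos $\mathcal{Q}^G$ in the statement are read as $\mathcal{Q}^H$ throughout. Since $\pi^H$ is a $*$-homomorphism, $\pi^H(\mathcal{Q}_\hbar^H(g_1)\cdots\mathcal{Q}_\hbar^H(g_k)) = \pi^H(\mathcal{Q}_\hbar^H(g_1))\cdots\pi^H(\mathcal{Q}_\hbar^H(g_k))$. It therefore suffices to show that a product of finitely many norm-continuous, locally bounded operator-valued maps on $(0,1]$ is norm continuous.

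For the inductive step, write $A_\hbar = \pi^H(\mathcal{Q}_\hbar^H(g_1)\cdots\mathcal{Q}_\hbar^H(g_{k-1}))$ and $B_\hbar = \pi^H(\mathcal{Q}_\hbar^H(g_k))$. Use the telescoping estimate
\begin{align}
\norm{A_\hbar B_\hbar - A_{\hbar'}B_{\hbar'}} \leq \norm{A_\hbar}\,\norm{B_\hbar - B_{\hbar'}} + \norm{A_\hbar - A_{\hbar'}}\,\norm{B_{\hbar'}},
\end{align}
which is the same splitting used in the proof of Lemma~\ref{lemma:continuity_inner_product}. As $\hbar\to\hbar'$, both difference terms tend to zero, by the base case and by the inductive hypothesis.

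The remaining point is that the prefactor $\norm{A_\hbar}$ stays bounded as $\hbar\to\hbar'$. This step needs a little care, but it is not a real obstacle. There are two ways to handle it. First, a norm-continuous map is bounded on a neighbourhood of $\hbar'$, since $\norm{A_\hbar}\leq \norm{A_{\hbar'}}+1$ once $\hbar$ is close enough. Second, one can bound directly $\norm{A_\hbar}\leq\prod_{i<k}\norm{\pi^H(\mathcal{Q}_\hbar^H(g_i))}$, and each factor is continuous in $\hbar$ by Rieffel's condition or by the base case. Either route gives local boundedness, so the product tends to $\norm{A_{\hbar'}B_{\hbar'}}$ in norm and the induction closes.

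An alternative would be to repeat the Hilbert–Schmidt/dominated convergence argument directly on the convolution kernel of the product. That would require pointwise continuity and domination of iterated convolutions, which is messier. The algebraic induction above avoids it entirely, so I would use it.
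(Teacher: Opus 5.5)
Your proposal is correct and follows the paper's proof: induction on $k$ with the previous lemma as the base case, and the same triangle-inequality splitting $\norm{A_\hbar - A_{\hbar'}}\,\norm{B_{\hbar'}} + \norm{A_\hbar}\,\norm{B_\hbar - B_{\hbar'}}$ of the product difference. Your explicit check that $\norm{A_\hbar}$ stays bounded near $\hbar'$ fills in a step the paper leaves implicit when it passes to the limit.
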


\begin{proof}
    By induction.  The previous proposition yields the base case.  Suppose the condition holds for $g_1,...,g_{k-1}$.  Then using the triangle inequality, one easily finds
    \begin{align}
        \lim_{\hbar\to\hbar'}&\norm{\pi^H(\mathcal{Q}_\hbar^H(g_1)\cdot...\cdot\mathcal{Q}_\hbar^H(g_k)) - \pi^H(\mathcal{Q}_{\hbar'}^H(g_1)\cdot...\cdot \mathcal{Q}_{\hbar'}^H(g_k))}\\
        &\leq\lim_{\hbar\to\hbar'}\norm{\pi^H(\mathcal{Q}_\hbar^H(g_1)\cdot...\cdot\mathcal{Q}_\hbar^H(g_{k-1})) - \pi^H(\mathcal{Q}_{\hbar'}^H(g_1)\cdot...\cdot\mathcal{Q}_{\hbar'}^H(g_{k-1}))}\cdot\norm{\pi^H(\mathcal{Q}_{\hbar'}(g_k))}\\
       & \hspace{.75in} + \norm{\pi^H(\mathcal{Q}_\hbar^H(g_1)\cdot...\cdot\mathcal{Q}_\hbar^H(g_{k-1}))}\cdot \norm{\pi^H(\mathcal{Q}_\hbar^G(g_k)) - \pi^H(\mathcal{Q}_{\hbar'}^G(G_k))}\\
        &=0.
    \end{align}
\end{proof}

The analogous statements clearly hold for $G$ just as well as for $H$.  We are now in a position use these preliminary lemmas ot establish that $\underline{\mathcal{E}}_M$ is the middle space of a Hilbert bimodule.\bigskip

\begin{proposition}
      For any $\underline{\varphi},\underline{\psi}\in\widetilde{\underline{\mathcal{E}}}_M$, $\inner{\underline{\varphi}}{\underline{\psi}}_{\underline{\mathcal{E}}_M}\in\mathfrak{A}^H.$ 
\end{proposition}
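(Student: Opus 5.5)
The plan is to verify the defining condition of $\mathfrak{A}^H$ from Eq.~\eqref{eq:generating_sections} directly. Set $c = \inner{\underline{\varphi}}{\underline{\psi}}_{\underline{\mathcal{E}}_M}$, viewed as the section $\hbar\mapsto c_\hbar = \inner{\underline{\varphi}_\hbar}{\underline{\psi}_\hbar}_{\mathcal{E}_M}\in C^*(H)$ over $(0,1]$. We must show that for every $\tilde{b}\in\tilde{\mathfrak{A}}^H$ the map
\begin{align}
\hbar\mapsto \norm{c_\hbar - \tilde{b}_\hbar}
\end{align}
is continuous on $(0,1]$. We also want it uniformly continuous and bounded, so that the section is compatible with the uniformly continuous bundle structure of Def.~\ref{def:bun} restricted to $(0,1]$.

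The first step uses Lemma~\ref{lemma:continuity_inner_product}. It gives that $\hbar\mapsto c_\hbar$ is uniformly continuous into $C^*(H)$ in the C*-norm. For boundedness, Cauchy--Schwarz gives $\norm{c_\hbar}\le \norm{\underline{\varphi}_\hbar}_{\mathcal{E}_M}\norm{\underline{\psi}_\hbar}_{\mathcal{E}_M}$. Uniform continuity of $\hbar\mapsto\underline{\varphi}_\hbar$ on the bounded interval $(0,1]$ makes it Cauchy as $\hbar\to 0$, so it is bounded, and likewise for $\underline{\psi}$. This is also what justifies the suprema used in the proof of Lemma~\ref{lemma:continuity_inner_product}.

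The second step uses Lemma~\ref{lemma:continuity_generating_sections}, together with linearity and the fact that $\pi^H$ is isometric on the reduced algebra $C^*(H)$. These give that every generator $\tilde{b}\in\tilde{\mathfrak{A}}^H$, being a finite sum of products of sections $\hbar\mapsto\mathcal{Q}^H_\hbar(g)$ and their adjoints, is norm continuous on $(0,1]$. For adjoints, note that $\mathcal{Q}^H_\hbar(\bar g)=\mathcal{Q}^H_\hbar(g)^*$ by the symmetry $\kappa_H(-X)=\kappa_H(X)$, so adjoints are again of the same form.

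Combining the two steps, the reverse triangle inequality gives
\begin{align}
\bigl|\norm{c_\hbar-\tilde{b}_\hbar}-\norm{c_{\hbar'}-\tilde{b}_{\hbar'}}\bigr|\le \norm{c_\hbar-c_{\hbar'}}+\norm{\tilde{b}_\hbar-\tilde{b}_{\hbar'}},
\end{align}
and both terms tend to $0$ as $\hbar'\to\hbar$. Hence $c$ satisfies the membership condition of Eq.~\eqref{eq:generating_sections}, so $c\in\mathfrak{A}^H$.

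The main obstacle is the definitional subtlety in what ``$\in\mathfrak{A}^H$'' means for the bundle over $(0,1]$. If $\mathfrak{A}^H$ is taken as the restriction to $(0,1]$ of the bundle over $[0,1]$, then membership also requires continuity of $\hbar\mapsto\norm{c_\hbar-\tilde b_\hbar}$ at $\hbar=0$. That is, the limit as $\hbar\to0$ must exist, and this does not follow from continuity on $(0,1]$ alone. Here uniform continuity is the essential ingredient. Both $c$ (by Lemma~\ref{lemma:continuity_inner_product}) and $\tilde b$ (by Rieffel's condition and the \citet{BiHoRi04b} polynomial continuity, which hold on $[0,1]$) are Cauchy in norm as $\hbar\to0$. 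Therefore $\hbar\mapsto\norm{c_\hbar-\tilde b_\hbar}$ is uniformly continuous on $(0,1]$ and extends continuously to $0$. So $c$ defines a uniformly continuous section, which is exactly the kind of element that the extension result of \citet{StFe21a} identifies with $\mathfrak{A}^H$. I would carry out the argument in this form, treating carefully the point that the uniform continuity of $\tilde b$ near $0$ comes from the strict deformation quantization over $[0,1]$ rather than from Lemma~\ref{lemma:continuity_generating_sections} alone.
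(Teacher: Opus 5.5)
Your first three steps reproduce the paper's own proof: Lemma~\ref{lemma:continuity_inner_product}, then Lemma~\ref{lemma:continuity_generating_sections}, then the reverse triangle inequality. You are right that this only gives continuity on $(0,1]$. Eq.~\eqref{eq:generating_sections} also requires some $c_0\in C_0(\mathfrak{H}^*)$ with $\norm{c_\hbar-\tilde b_\hbar}\to\norm{c_0-\tilde b_0}$ as $\hbar\to0$. Your repair of this point fails twice.

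First, generating sections are \emph{not} Cauchy in the norm of the fixed algebra $C^*(H)$. Rieffel's condition and the polynomial-norm condition control $\hbar\mapsto\norm{P(\hbar)}$ for a single section, never $\norm{\tilde b_\hbar-\tilde b_{\hbar'}}$. Take $H=\mathbb{R}^m$, with $\kappa_H\equiv1$ (admissible, since $\exp^W$ is the identity) and fiber measures normalized as Rieffel's condition requires. The Fourier isomorphism $\mathcal{F}\colon C^*(\mathbb{R}^m)\cong C_0(\mathbb{R}^m)$ sends $\mathcal{Q}^H_\hbar(g)$ to $\theta\mapsto g(\hbar\theta)$. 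Hence $\norm{\mathcal{Q}^H_\hbar(g)-\mathcal{Q}^H_{\hbar/2}(g)}=\sup_u\abs{g(u)-g(u/2)}$ for every $\hbar$.

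Second, suppose each $\hbar\mapsto\norm{c_\hbar-\tilde b_\hbar}$ did extend continuously to $0$. You would still need one $c_0$ realizing all these limits simultaneously. The extension result you invoke builds the fiber at $0$ from a given section algebra; it cannot certify that an arbitrary family lies in that algebra.

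Neither point can be patched, because the statement is false as stated. The paper's proof has the same defect: it claims the condition is ``equivalently, uniformly continuous on $(0,1]$'', and it uses Lemma~\ref{lemma:continuity_generating_sections} with the quantifiers reversed.

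Here is a counterexample. Take the translation bibundle $G=M=H=\mathbb{R}^m$, which is regular. Set $\underline{\varphi}_\hbar=\underline{\psi}_\hbar=\varphi$ for all $\hbar$, with $0\neq\varphi\in C_c^\infty(\mathbb{R}^m)$. This pair lies in $\widetilde{\underline{\mathcal{E}}}_M$. The section $c_\hbar=c$ is constant, with Fourier transform $F=\abs{\mathcal{F}\varphi}^2$, and $0\neq F\in C_0(\mathbb{R}^m)$. Every $\tilde b\in\tilde{\mathfrak{A}}^H$ here has the form $\hbar\mapsto\mathcal{Q}^H_\hbar(g)$ with $g$ Paley--Wiener, and
\begin{align*}
\norm{c_\hbar-\mathcal{Q}^H_\hbar(g)}=\sup_{u\in\mathbb{R}^m}\abs{F(u/\hbar)-g(u)}.
\end{align*}
Suppose $c\in\mathfrak{A}^H$ with value $c_0$ at $0$. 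Choose $g$ with $\norm{c_0-g}<\epsilon$. Then $\norm{F(\cdot/\hbar)-F(\cdot/\hbar')}_\infty<2\epsilon$ for all small $\hbar,\hbar'$. But $\hbar'=\hbar/2$ gives the fixed number $\sup_v\abs{F(v)-F(2v)}$, which is positive because $0\neq F\in C_0$. This is a contradiction, so $c\notin\mathfrak{A}^H$.

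Yet one checks that $\norm{c_\hbar-\mathcal{Q}^H_\hbar(g)}\to\max\{\sup_v\abs{F(v)-g(0)},\norm{g}_\infty\}$ as $\hbar\to0$. So every $\hbar\mapsto\norm{c_\hbar-\tilde b_\hbar}$ is uniformly continuous on $(0,1]$, and your final inference (and the paper's reformulation) is satisfied even though membership fails.

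What is missing is a hypothesis on $\underline{\varphi}$ near $\hbar=0$ that is compatible with the $\hbar$-rescaling built into $\mathcal{Q}^H_\hbar$, that is, with the smooth structure of the normal groupoid. Uniform continuity in the fixed norm $\norm{\cdot}_{\mathcal{E}_M}$ is not enough.
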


\begin{proof}
Recall from Eq. (\ref{eq:generating_sections}) that $b\in \mathfrak{A}^H$ just in case for each $\tilde{b}\in \tilde{\mathfrak{A}}^H$, the map $\hbar\mapsto\norm{b_\hbar - \tilde{b}_\hbar}$ is continuous for all $\hbar\in [0,1]$. Alternatively, this is equivalent to saying $\hbar\mapsto\norm{b_\hbar - \tilde{b}_\hbar}$ is uniformly continuous for all $\hbar\in (0,1]$.  Lemma \ref{lemma:continuity_generating_sections} implies that for each $\tilde{b}\in\tilde{\mathfrak{A}}^H$ and any $\epsilon>0$, there is a $\delta_b$ such that if $\abs{\hbar-\hbar'}<\epsilon$, then
\begin{align}
    \norm{\tilde{b}_\hbar - \tilde{b}_{\hbar'}}<\delta_b.
\end{align}
Likewise, Lemma \ref{lemma:continuity_inner_product} implies that there is a $\delta$ such that if $\abs{\hbar-\hbar'}<\epsilon$, then
\begin{align}
    \norm{\inner{\underline{\varphi}_\hbar}{\underline{\psi}_\hbar}_{\mathcal{E}_M} - \inner{\underline{\varphi}_{\hbar'}}{\underline{\psi}_{\hbar'}}_{\mathcal{E}_M}}<\delta
\end{align}
So we have from the reverse triangle inequality that whenever $\abs{\hbar-\hbar'}<\epsilon$
\begin{align}
\Big|\norm{\inner{\underline{\varphi}_\hbar}{\underline{\psi}_\hbar}_{\mathcal{E}_M} - \tilde{b}_\hbar} - \norm{\inner{\underline{\varphi}_{\hbar'}}{\underline{\psi}_{\hbar'}}_{\mathcal{E}_M} - \tilde{b}_{\hbar'}}\Big|&\leq \norm{\inner{\underline{\varphi}_\hbar}{\underline{\psi}_\hbar}_{\mathcal{E}_M} - \inner{\underline{\varphi}_{\hbar'}}{\underline{\psi}_{\hbar'}}_{\mathcal{E}_M}} + \norm{\tilde{b}_\hbar - \tilde{b}_{\hbar'}}\\
&<\delta + \delta_b.
\end{align}
This establishes the uniform continuity of the map $\hbar\mapsto \norm{\inner{\underline{\varphi}_\hbar}{\underline{\psi}_\hbar}_{\mathcal{E}_M}-\tilde{b}_\hbar}$ for any $\tilde{b}\in\tilde{\mathfrak{A}}^H$, which by the definition in Eq.\eqref{eq:generating_sections} implies that $\inner{\underline{\varphi}}{\underline{\psi}}_{\underline{\mathcal{E}}_M}\in\mathfrak{A}^H$.
\end{proof}

\begin{cor}
\label{cor:lift}
$\mathfrak{A}^G\rightarrowtail\underline{\mathcal{E}}_M\leftarrowtail\mathfrak{A}^H$ is a Hilbert bimodule.
\end{cor}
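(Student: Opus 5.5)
The plan is to verify the Hilbert bimodule axioms for $\mathfrak{A}^G\rightarrowtail\underline{\mathcal{E}}_M\leftarrowtail\mathfrak{A}^H$ fiberwise. Each axiom reduces to the corresponding property of the fiber bimodule $C^*(G)\rightarrowtail\mathcal{E}_M\leftarrowtail C^*(H)$, which is already known to be a Hilbert bimodule. The preceding Proposition supplies the one nontrivial ingredient: the inner product on $\widetilde{\underline{\mathcal{E}}}_M$ takes values in $\mathfrak{A}^H$.

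\textbf{Step 1: the inner product.} On $\widetilde{\underline{\mathcal{E}}}_M$, $\inner{\cdot}{\cdot}_{\underline{\mathcal{E}}_M}$ is defined pointwise in $\hbar$. Sesquilinearity and $\inner{\underline{\varphi}}{\underline{\psi}}^* = \inner{\underline{\psi}}{\underline{\varphi}}$ therefore hold because they hold in each fiber. Positivity also holds fiberwise, and $\mathfrak{A}^H$ embeds into $\prod_\hbar C^*(H)$ with the sup norm, so an element that is positive in every fiber is positive in $\mathfrak{A}^H$. Definiteness holds because $\norm{\inner{\underline{\varphi}}{\underline{\varphi}}} = \sup_\hbar\norm{\underline{\varphi}_\hbar}^2_{\mathcal{E}_M}$, which vanishes only if every $\underline{\varphi}_\hbar=0$.

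\textbf{Step 2: the actions preserve the domain.} I would next check that for $a\in\mathfrak{A}^G$, $b\in\mathfrak{A}^H$ and $\underline{\varphi}\in\widetilde{\underline{\mathcal{E}}}_M$, the sections $a\cdot\underline{\varphi}$ and $\underline{\varphi}\cdot b$ are again uniformly continuous in $\hbar$. It suffices to do this for $a,b$ in the dense generating algebras $\tilde{\mathfrak{A}}^G,\tilde{\mathfrak{A}}^H$. For these, Lemma \ref{lemma:continuity_generating_sections} gives norm continuity, and I would upgrade it to uniform continuity on $(0,1]$ using continuity at $\hbar\to 0$ within the bundle. Then the estimate $\norm{a_\hbar\varphi_\hbar - a_{\hbar'}\varphi_{\hbar'}}\le \norm{a_\hbar-a_{\hbar'}}\norm{\varphi_\hbar}+\norm{a_{\hbar'}}\norm{\varphi_\hbar-\varphi_{\hbar'}}$ gives uniform continuity of $\hbar\mapsto a_\hbar\underline{\varphi}_\hbar$, exactly as in Lemma \ref{lemma:continuity_inner_product}. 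Smoothness and compact support in each fiber need no checking, since we only need the completion.

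The actions extend by continuity, because the fiberwise operator bounds are uniform: $\norm{a\cdot\underline{\varphi}}\le\norm{a}\norm{\underline{\varphi}}$, using the sup norms. The relation $\inner{\underline{\varphi}}{\underline{\psi}\cdot b}=\inner{\underline{\varphi}}{\underline{\psi}}b$ and the adjointability relation $\inner{a\underline{\varphi}}{\underline{\psi}} = \inner{\underline{\varphi}}{a^*\underline{\psi}}$ hold fiberwise, so they pass to the completion. Likewise, the Proposition extends to the completion because $\mathfrak{A}^H$ is closed and Cauchy–Schwarz makes the inner product jointly continuous.

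\textbf{Step 3: nondegeneracy.} I expect this to be the main obstacle: showing that $\mathfrak{A}^G\underline{\mathcal{E}}_M$ is dense in $\underline{\mathcal{E}}_M$. The approach I would try first is to take an approximate unit $e_\lambda$ of $C^*(G)$, which can be chosen from $C_c^\infty(G)$. Regard it as a constant section $\hbar\mapsto e_\lambda$, or more safely as $\hbar\mapsto\mathcal{Q}^G_\hbar(f_\lambda)$ so that it lies in $\mathfrak{A}^G$. I would then argue that $e_\lambda\underline{\varphi}\to\underline{\varphi}$ uniformly in $\hbar$. The required uniformity comes from equicontinuity: since $\hbar\mapsto\underline{\varphi}_\hbar$ is uniformly continuous, its image in $\mathcal{E}_M$ is totally bounded on $[\delta,1]$, where fiberwise convergence upgrades to uniform convergence. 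Near $\hbar=0$, I would use the bundle structure and the fact that sections are determined up to small error by their values on a compact piece.

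If this is too delicate, the fallback is to use the characterization $\overline{\mathfrak{A}^G\underline{\mathcal{E}}_M}=\underline{\mathcal{E}}_M$. This holds if it holds fiberwise and the bimodule is a $C_0$-module over the uniformly continuous functions, and condition 2 of Def.~\ref{def:bun} provides the partition-of-unity argument. With these steps the Corollary follows.
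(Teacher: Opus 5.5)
Your plan has the same shape as the paper's. The paper asserts that the only axiom needing proof is $\inner{\underline{\varphi}}{\underline{\psi}}_{\underline{\mathcal{E}}_M}\in\mathfrak{A}^H$, which is the preceding Proposition, and reads the Corollary off from it; your fiberwise checks and your nondegeneracy discussion are extra detail the paper omits. But the step you add in Step 2 fails. The norm continuity of Lemma~\ref{lemma:continuity_generating_sections} cannot be upgraded to uniform continuity on $(0,1]$ ``using continuity at $\hbar\to 0$ within the bundle.'' That continuity concerns norms in the limit fiber $C_0(\mathfrak{G}^*)$. Inside $C^*(G)$, the sections $\mathcal{Q}^G_\hbar(f)$ concentrate on the unit space and converge only strictly as $\hbar\to 0$, to a multiplier determined by $f$ on the zero section. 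For $G=\mathbb{R}^n$, the Fourier transform identifies $\mathcal{Q}_\hbar(f)$ (for small $\hbar$, up to normalization) with $p\mapsto f(\hbar p)$. Then $\norm{\mathcal{Q}_\hbar(f)-\mathcal{Q}_{\hbar/2}(f)}$ equals the positive constant $\sup_u\abs{f(u)-f(u/2)}$ for all small $\hbar$. So the estimate you borrow from Lemma~\ref{lemma:continuity_inner_product} does not give uniform continuity of $\hbar\mapsto a_\hbar\underline{\varphi}_\hbar$. You would instead have to combine strict convergence of $a_\hbar$ with norm convergence of $\underline{\varphi}_\hbar$ as $\hbar\to 0$. (Also, $\tilde{\mathfrak{A}}^G$ need not be dense in $\mathfrak{A}^G$; the natural dense subspace is its span over continuous functions of $\hbar$.)

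The more serious gap is in the ingredient you import. The paper's proof of the preceding Proposition uses the same uniform continuity of $\tilde b$. It also replaces membership in $\mathfrak{A}^H$ by uniform continuity of $\hbar\mapsto\norm{b_\hbar-\tilde b_\hbar}$ on $(0,1]$, which is strictly weaker. Its conclusion is false, so your Step 1 cannot be completed.

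To see this, take a constant family $\underline{\varphi}_\hbar\equiv\varphi\in C_c^\infty(M)$ with $\varphi\neq 0$. It lies in $\widetilde{\underline{\mathcal{E}}}_M$, and its inner product is the constant section $\hbar\mapsto\inner{\varphi}{\varphi}_{\mathcal{E}_M}\neq 0$. When $\mathfrak{H}$ has positive rank, no nonzero constant section $b\equiv b_0$ lies in $\mathfrak{A}^H$. Its image in $\mathfrak{A}^H_0\cong C_0(\mathfrak{H}^*)$ would have norm $\lim_{\hbar\to 0}\norm{b_\hbar}=\norm{b_0}$. Yet it would be annihilated by every $g\in C^\infty_{PW}(\mathfrak{H}^*)$ vanishing on the zero section, because $\norm{b_0\,\mathcal{Q}^H_\hbar(g)}\to 0$. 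So it would be a nonzero continuous function supported on the nowhere dense zero section, which is impossible.

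For the identity bibundle of $\mathbb{R}^n$ you can see this directly. Rescaling $u=\hbar p$ identifies $\mathfrak{A}^H$ with $C([0,1],C_0(\mathbb{R}^n))$, and $\inner{\underline{\varphi}}{\underline{\varphi}}_{\underline{\mathcal{E}}_M}$ becomes $\hbar\mapsto\abs{\check{\varphi}(\cdot/\hbar)}^2$. This has constant norm but tends to $0$ uniformly away from $u=0$, so it has no continuous extension to $\hbar=0$.

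Hence, with $\underline{\mathcal{E}}_M$ defined as the completion of uniformly continuous families, the inner product does not land in $\mathfrak{A}^H$. No fiberwise verification of the remaining axioms yields the Corollary. What is missing is a different lift, whose sections rescale with $\hbar$ the way $\mathcal{Q}^H_\hbar(g)$ does (for instance, one built from a deformation of $M$ analogous to the normal groupoid), rather than arbitrary uniformly continuous families in $\mathcal{E}_M$. The same issue undercuts Step 3, since constant sections $\hbar\mapsto e_\lambda$ are not in $\mathfrak{A}^G$ either.
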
\bigskip
\noindent We call the Hilbert bimodule $\underline{\mathcal{E}}_{M}$ the \emph{lift} of $\mathcal{E}_M$.

To close this section, we demonstrate that the construction of $\underline{{\mathcal{E}}}_M$ from $\mathcal{E}_M$ respects the composition of Hilbert bimodules by means of the tensor product operation in $\mathbf{LC^*}$.  Suppose \begin{align}
    C^*(G)\rightarrowtail\mathcal{E}_M\leftarrowtail C^*(H) && \text{and} && C^*(H)\rightarrowtail\mathcal{E}_N\leftarrowtail C^*(G').
    \end{align}
    are two Hilbert bimodules in $\mathbf{LC^*}$ between Lie groupoids $G, H,$ and $G'$ corresponding to bibundles
    \begin{align}
        G\rightarrowtail M\leftarrowtail H && \text{and} && H\rightarrowtail N\leftarrowtail G'.
    \end{align}
\citet{La01a} already shows that the tensor product $\mathcal{E}_M\otimes_{C^*(H)}\mathcal{E}_N$ is unitarily equivalent to $\mathcal{E}_{M\circledast_H N}$. 
 Hence, one can use the procedure leading to Corollary \ref{cor:lift} to define the lift of $\mathcal{E}_M\otimes_{C^*(H)}\mathcal{E}_N$ as the lift of $\mathcal{E}_{M\circledast_H N}$.\bigskip

\begin{proposition}
    
The Hilbert bimodule $\underline{\mathcal{E}}_M\otimes_{\mathfrak{A}^H}\underline{\mathcal{E}}_N$ is unitarily equivalent to the lift of $\mathcal{E}_M\otimes_{C^*(H)}\mathcal{E}_N$.
\end{proposition}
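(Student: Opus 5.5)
The plan is to build the unitary fiberwise out of Landsman's unitary. \citet{La01a} supplies a unitary $U:\mathcal{E}_M\otimes_{C^*(H)}\mathcal{E}_N\to\mathcal{E}_{M\circledast_H N}$. On elementary tensors of compactly supported functions it is given by
\begin{align}
U(\varphi\otimes\psi)([q,q']) = \int_{t_H^{-1}(s_M(q))}d\lambda^t_{s_M(q)}(y)\ \varphi(qy)\psi(y^{-1}q'),
\end{align}
and it intertwines the left $C^*(G)$-actions and the right $C^*(G')$-actions. This formula does not depend on $\hbar$. We therefore define $\underline{U}$ on elementary tensors $\underline{\varphi}\otimes\underline{\psi}$ with $\underline{\varphi}\in\widetilde{\underline{\mathcal{E}}}_M$ and $\underline{\psi}\in\widetilde{\underline{\mathcal{E}}}_N$ by
\begin{align}
\underline{U}(\underline{\varphi}\otimes\underline{\psi})_\hbar = U(\underline{\varphi}_\hbar\otimes\underline{\psi}_\hbar).
\end{align}
We then extend by linearity and continuity, and show that the extension is a unitary intertwiner $\underline{\mathcal{E}}_M\otimes_{\mathfrak{A}^H}\underline{\mathcal{E}}_N\to\underline{\mathcal{E}}_{M\circledast_H N}$.

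The steps, in order, are as follows.

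First, $\underline{U}$ lands in the right space. The function $(\hbar,[q,q'])\mapsto U(\underline{\varphi}_\hbar\otimes\underline{\psi}_\hbar)([q,q'])$ is smooth and compactly supported in each fiber, since it is a fiber integral of smooth data. Uniform continuity in $\hbar$ with respect to $\norm{\cdot}_{\mathcal{E}_{M\circledast_H N}}$ follows from the estimate $\norm{U(\varphi\otimes\psi)}\le\norm{\varphi}\norm{\psi}$. We combine this with the same add-and-subtract argument as in Lemma \ref{lemma:continuity_inner_product}, using boundedness of $\sup_\hbar\norm{\underline{\varphi}_\hbar}$, which holds by uniform continuity on $(0,1]$.

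Second, $\underline{U}$ preserves inner products. Because the inner products and actions on lifts are defined pointwise in $\hbar$, we have
\begin{align}
\inner{\underline{\varphi}_1\otimes\underline{\psi}_1}{\underline{\varphi}_2\otimes\underline{\psi}_2}(\hbar) = \inner{\underline{\psi}_{1,\hbar}}{\inner{\underline{\varphi}_{1,\hbar}}{\underline{\varphi}_{2,\hbar}}_{\mathcal{E}_M}\underline{\psi}_{2,\hbar}}_{\mathcal{E}_N}.
\end{align}
Unitarity of $U$ at each fixed $\hbar$ then gives equality with $\inner{\underline{U}(\cdot)}{\underline{U}(\cdot)}_{\underline{\mathcal{E}}_{M\circledast_H N}}(\hbar)$. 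Taking the supremum over $\hbar$ (condition 1 of Def.~\ref{def:bun}) shows that $\underline{U}$ is isometric. It therefore kills $\mathcal{N}_{\mathfrak{A}^H}$ and extends to the completion. Intertwining the $\mathfrak{A}^G$- and $\mathfrak{A}^{G'}$-actions is likewise a fiberwise identity inherited from $U$. It is checked on the dense pieces and extended by continuity.

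The main obstacle is surjectivity: the range of $\underline{U}$ must be dense in $\underline{\mathcal{E}}_{M\circledast_H N}$. At a fixed $\hbar$, density of $U$'s range is Landsman's result. What is needed here is a uniform-in-$\hbar$ version, approximating a section $\underline{\chi}\in\widetilde{\underline{\mathcal{E}}}_{M\circledast_H N}$ in the supremum norm. I would first use uniform continuity of $\hbar\mapsto\underline{\chi}_\hbar$ to reduce to finitely many points $\hbar_1,\dots,\hbar_m$ via a partition of unity $\{f_i\}$ of uniformly continuous functions on $(0,1]$. Near each $\hbar_i$, choose $\sum_k U(\varphi^i_k\otimes\psi^i_k)$ approximating $\underline{\chi}_{\hbar_i}$. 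Then form $\sum_{i,k}(f_i\varphi^i_k)\otimes\psi^i_k$, with the $\varphi^i_k$ regarded as $\hbar$-constant sections; these lie in $\widetilde{\underline{\mathcal{E}}}_M$. The error is controlled by the modulus of continuity of $\underline{\chi}$ plus the approximation error at each $\hbar_i$. This uses the $C_b^u((0,1])$-module structure (condition 2 of Def.~\ref{def:bun}) on the sections. If the partition-of-unity step raises smoothness issues, I would instead use smooth bump functions in $\hbar$, which are still uniformly continuous. Finally, by the definition of the lift of $\mathcal{E}_M\otimes_{C^*(H)}\mathcal{E}_N$ as the lift of $\mathcal{E}_{M\circledast_H N}$, $\underline{U}$ is the claimed unitary equivalence.
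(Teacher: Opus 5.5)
Your proposal is correct and takes the same route as the paper: the paper also defines $\underline{U}$ fiberwise by $(\underline{U}\underline{\varphi})_\hbar = U(\underline{\varphi}_\hbar)$ from Landsman's unitary $U$, and simply asserts that it has the right range and is a unitary equivalence. Your checks of the range, of isometry via the supremum norm, and especially your partition-of-unity density argument for surjectivity supply the details the paper leaves as ``easy to see.''
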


\begin{proof}
    We need to show that $ \underline{\mathcal{E}}_M\otimes_{\mathfrak{A}^H}\underline{\mathcal{E}}_N\cong \underline{\mathcal{E}}_{M\circledast_H N}$, where the latter is by definition the lift of $\mathcal{E}_{M\circledast_H N}$.  \citet[][Eq. 3.8, p. 107]{La01a} explicitly defines a unitary equivalence $U: \mathcal{E}_M\otimes_{C^*(H)}\mathcal{E}_N\to \mathcal{E}_{M\circledast_H N}$.  Now, define $\underline{U}: \underline{\mathcal{E}}_M\otimes_{\mathfrak{A}^H}\underline{\mathcal{E}}_N\to \underline{\mathcal{E}}_{M\circledast_H N}$ by
    \begin{align}
        (\underline{U}\underline{\varphi})_\hbar = U(\underline{\varphi}_\hbar)
    \end{align}
    for $\underline{\varphi}\in \underline{\mathcal{E}}_M\otimes_{\mathfrak{A}^H}\underline{\mathcal{E}}_N$ and $\hbar\in (0,1]$.  It is easy to see that $\underline{U}$ has the appropriate range and serves also as a unitary equivalence.
\end{proof}

\noindent Finally, we note that one can check that the association $\mathcal{E}_M\mapsto \underline{\mathcal{E}}_M$ preserves identities, and so is functorial.

In order to take the classical limit of the Hilbert bimodule $\underline{\mathcal{E}}_M$, we must check that it is strongly nondegenerate at $\hbar =0$ in the sense of \citet{FeSt24}.\bigskip

\begin{proposition}
    Define
    \begin{align}
    K_0^G = \left\{a\in\mathfrak{A}^G\ |\ \lim_{\hbar\to 0} ||a_\hbar|| = 0 \right\} && K_0^H = \left\{b\in\mathfrak{A}^H\ | \lim_{\hbar\to 0}||b_\hbar|| = 0\right\}.
    \end{align}
    Then $\overline{K_0^G \cdot \underline{\mathcal{E}}_M}\subseteq \overline{\underline{\mathcal{E}}_M\cdot K_0^H}$.
\end{proposition}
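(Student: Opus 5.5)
To prove $\overline{K_0^G\cdot\underline{\mathcal{E}}_M}\subseteq\overline{\underline{\mathcal{E}}_M\cdot K_0^H}$, we will show that the right-hand side admits an intrinsic description. The candidate is
\begin{align}
\overline{\underline{\mathcal{E}}_M\cdot K_0^H} = \Big\{\underline{\varphi}\in\underline{\mathcal{E}}_M\ \Big|\ \lim_{\hbar\to 0}\norm{\underline{\varphi}_\hbar}_{\mathcal{E}_M}=0\Big\}.
\end{align}
Once this is established, the inclusion is immediate. For $a\in K_0^G$ and $\underline{\varphi}\in\underline{\mathcal{E}}_M$, the left action is fiberwise, so $\norm{(a\cdot\underline{\varphi})_\hbar}_{\mathcal{E}_M}\leq\norm{a_\hbar}\,\norm{\underline{\varphi}_\hbar}_{\mathcal{E}_M}$. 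Since $\sup_\hbar\norm{\underline{\varphi}_\hbar}_{\mathcal{E}_M}=\norm{\underline{\varphi}}_{\underline{\mathcal{E}}_M}<\infty$, this tends to $0$ as $\hbar\to 0$. The set on the right is a closed subspace, because $\underline{\varphi}\mapsto\norm{\underline{\varphi}_\hbar}$ is bounded by the sup norm, uniformly in $\hbar$. Taking spans and closures then gives the claim.

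The inclusion ``$\subseteq$'' in the description is easy. For $b\in K_0^H$ we have $\norm{(\underline{\varphi}\cdot b)_\hbar}\leq\norm{\underline{\varphi}_\hbar}\norm{b_\hbar}\to 0$, and this property passes to closures as above.

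The real work is the reverse inclusion ``$\supseteq$''. Take $\underline{\varphi}$ with $\norm{\underline{\varphi}_\hbar}\to 0$. The plan is to use the standard fact that in any Hilbert module $\mathcal{F}$ over a C*-algebra $\mathfrak{B}$, each $\xi$ factors as $\xi=\lim_{n}\xi\cdot f_n(\inner{\xi}{\xi})$, where
\begin{align}
f_n(t)=t\,(t+1/n)^{-1}.
\end{align}
One checks $\norm{\xi-\xi f_n(\inner{\xi}{\xi})}^2\leq 1/(4n)$, which follows from
\begin{align}
\norm{\inner{\xi}{\xi}(1-f_n)^2}\leq\sup_{t\geq0} t\,(1/n)^2(t+1/n)^{-2}.
\end{align}
Equivalently, $\xi=\eta\cdot\inner{\xi}{\xi}^{1/2}$ up to closure, by the Cohen--Hewitt factorization applied inside $\overline{\xi\mathfrak{B}}$. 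Apply this with $\mathcal{F}=\underline{\mathcal{E}}_M$, $\mathfrak{B}=\mathfrak{A}^H$, and $\xi=\underline{\varphi}$. By the Proposition above together with Corollary \ref{cor:lift}, $b:=\inner{\underline{\varphi}}{\underline{\varphi}}_{\underline{\mathcal{E}}_M}$ lies in $\mathfrak{A}^H$, and it satisfies $\norm{b_\hbar}=\norm{\underline{\varphi}_\hbar}^2\to 0$, so $b\in K_0^H$. Since $K_0^H$ is a closed two-sided ideal, continuous functional calculus gives $f_n(b)\in K_0^H$ because $f_n(0)=0$. Each $f_n$ vanishes at $0$ and is a uniform limit of polynomials without constant term on the spectrum, so $f_n(b)$ lies in the C*-subalgebra generated by $b$, which is contained in $K_0^H$. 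Hence $\underline{\varphi}\cdot f_n(b)\in\underline{\mathcal{E}}_M\cdot K_0^H$, and the estimate above shows $\underline{\varphi}$ lies in the closure.

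The main obstacle I expect is not this algebra but making sure the framework supports it. First, the inner product of a general element of the completion $\underline{\mathcal{E}}_M$, not just of $\widetilde{\underline{\mathcal{E}}}_M$, must lie in $\mathfrak{A}^H$. This follows by continuity from the Proposition, since $\mathfrak{A}^H$ is closed and the inner product is jointly continuous by Cauchy--Schwarz. Second, the evaluation $\underline{\varphi}\mapsto\underline{\varphi}_\hbar$ and the fiberwise identities $\norm{b_\hbar}=\norm{\underline{\varphi}_\hbar}^2$ and $(\underline{\varphi}\cdot c)_\hbar=\underline{\varphi}_\hbar\cdot c_\hbar$ must extend to the completion. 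Both follow from density and the sup-norm definition of the norm. With these checks in place, the argument above goes through.
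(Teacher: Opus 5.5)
Your proof is correct, and it takes a different route from the paper's.

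\textbf{The paper's route.} The paper works with generators $a\cdot\underline{\varphi}$, $a\in K_0^G$, and shifts the vanishing at $\hbar=0$ from the left coefficient to the right. It writes $a\cdot\underline{\varphi}$ as a limit of $\big((\norm{a_\hbar}^{-1}a_\hbar)\cdot\underline{\varphi}\big)\cdot\big(\norm{a_\hbar}\,b^m_\hbar\big)$. Here $(b^m)$ is a sequential approximate identity for $\mathfrak{A}^H$, and axiom 2 of Def.~\ref{def:bun}, applied to $\hbar\mapsto\norm{a_\hbar}$, places $\norm{a_\hbar}b^m_\hbar$ in $K_0^H$.

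\textbf{Your route.} You characterize $\overline{\underline{\mathcal{E}}_M\cdot K_0^H}$ intrinsically as the sections with $\norm{\underline{\varphi}_\hbar}_{\mathcal{E}_M}\to 0$. For this you use only $\inner{\underline{\varphi}}{\underline{\varphi}}_{\underline{\mathcal{E}}_M}\in K_0^H$ and the estimate $\norm{\xi-\xi f_n(\inner{\xi}{\xi})}^2\leq 1/(4n)$. The left action enters only through the bound $\norm{a_\hbar\cdot\underline{\varphi}_\hbar}\leq\norm{a_\hbar}\norm{\underline{\varphi}_\hbar}$.

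\textbf{What your route buys.} You need no approximate identity and, above all, no normalized section $\norm{a_\hbar}^{-1}a_\hbar$. In the paper's argument that section is undefined where $a_\hbar=0$. It also need not be continuous where it is defined, e.g.\ $a_\hbar=\hbar\sin(1/\hbar)\,c_\hbar$ with $c\in\mathfrak{A}^G$. So membership of $\underline{\varphi}'$ in $\underline{\mathcal{E}}_M$ is not evident there. You also avoid writing elements of the closure as infinite sums $\sum_n a^n\cdot\underline{\varphi}^n$ and exchanging that sum with the limit in $m$.

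You further get an equality rather than an inclusion, which gives a concrete description of $(\mathcal{E}_M)_0$. Your argument also applies verbatim to any Hilbert module between uniformly continuous bundles whose inner product and left action are fiberwise. That is a cleaner form of the generality the paper asserts after its proof.

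\textbf{What the paper's route buys.} It is more hands-on. It ties strong nondegeneracy directly to the bundle's module structure over uniformly continuous functions and to the concrete approximate units of the groupoid algebras.
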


\begin{proof}
    Let $\underline{\psi}\in\overline{K_0^G \cdot \underline{\mathcal{E}}_M}$.  Then there are $\overset{n}{a}\in K_0^G$ with $\lim_{\hbar\to 0}||\overset{n}{a}_\hbar|| = 0$ and $\overset{n}{\underline{\varphi}}\in\underline{\mathcal{E}}_M$ such that $\underline{\psi} = \sum_n \overset{n}{a}\cdot\overset{n}{\underline{\varphi}}$.  Note that $\mathfrak{A}^H$ has a sequential approximate identity $\overset{m}{b}\in\mathfrak{A}^H$: e.g., one can take a sequence of functions $\overset{m}{b}(\hbar,y)\in C^\infty((0,1]\times H)$ such that for each $\hbar\in (0,1]$, $\overset{m}{b}(\hbar,y)$ approaches a delta function around $y=e$ in $H$ as $m\to \infty$ (with the $m\to \infty$ limit understood in any appropriate distributional sense).  Then it follows that for any $\underline{\varphi}\in\underline{\mathcal{E}}_M$,
    \begin{align}
        ||\underline{\varphi} - \underline{\varphi}\cdot\overset{m}{b}||_{\underline{\mathcal{E}_M}}\to 0
    \end{align}
    as $m\to \infty$.  Hence, defining $\overset{n,m}{b'}_\hbar = ||\overset{n}{a}_\hbar||\cdot\overset{m}{b}_\hbar$, we have that $\overset{n,m}{b'}\in\mathfrak{A}^H$ by the definition of a uniformly continuous bundle, and clearly $\lim_{\hbar\to0}||\overset{n,m}{b'}_\hbar||= 0$ so it follows that $\overset{n,m}{b'}\in K_0^H$.  Likewise, defining $\overset{n}{\underline{\varphi}'}_\hbar = \left(||\overset{n}{a}_\hbar||^{-1}\cdot\overset{n}{a}_\hbar\right)\cdot \overset{n}{\underline{\varphi}}$, we have $\overset{n}{\underline{\varphi}'}\in\underline{\mathcal{E}}_M$ and
    \begin{align}
        \underline{\psi} = \lim_{m\to\infty}\sum_n \overset{n}{\underline{\varphi}'} \cdot \overset{n,m}{b'}.
    \end{align}
    Therefore, $\underline{\psi}\in \overline{\underline{\mathcal{E}}_M\cdot K_0^H}$
\end{proof}
\noindent The previous argument actually establishes strong nondegeneracy under general conditions, as long as an appropriate sequential approximate identity exists for the right action.

To summarize, in this section we have shown that each Hilbert bimodule $\mathcal{E}_M$ in $\mathbf{LC^*}$ canonically and functorially lifts to a Hilbert bimodule $\underline{\mathcal{E}}_M$ between the C*-algebras of uniformly continuous sections of the corresponding bundles.  Further, the Hilbert bimodule $\underline{\mathcal{E}}_M$ is strongly nondegenerate, so the construction of \citet{FeSt24} implies that $\underline{\mathcal{E}}_M$ has a Hilbert classical limit, which we denote $(\mathcal{E}_M)_0$.  The Hilbert bimodule $C_0(\mathfrak{G}^*)\rightarrowtail (\mathcal{E}_M)_0\leftarrowtail C_0(\mathfrak{H}^*)$ at $\hbar = 0$ gives rise to canonical representations $\pi^G_0$ and $\pi^H_0$ on $(\mathcal{E}_M)_0$ of $C_0(\mathfrak{G}^*)$ and (the opposite of) $C_0(\mathfrak{H}^*)$, respectively. We can then follow \citet{FeSt24} in defining the space $S_M$ as the Gelfand dual of the quotient $(C_0(\mathfrak{G}^*)\otimes C_0(\mathfrak{H}^*))/\ker\left(\pi^G_0\otimes \pi^H_0\right)$, as in Eq.\eqref{eq:symplectic_classical_limit}.  In our setting, $S_M$ is not in general a symplectic space, but it will turn out that it is isomorphic to a Lagrangian relation $S_M\subseteq \mathfrak{G}^*\times\mathfrak{H}^*$.

\subsection{Local Structure of Classical Dual Pairs}

Recall that our goal is to show that the functors $Q$ and $L$ form a categorical equivalence, which involves establishing that $S_M$ is isomorphic to the Lagrangian relation $(j_G,j_H)[T^*M]$.  To prove this, we will require some additional results concerning the local structure of the dual pair
\begin{align}
    \mathfrak{G}^*\leftarrow T^*M\rightarrow \mathfrak{H}^*.
\end{align}
Throughout this section, we assume as before that $T^*M$ is a dual pair in $\mathbf{LPoisson}$, which arises from a regular bibundle $M$.  Our goal in this section is to establish the requisite facts about this dual pair to proceed to our categorical equivalence.

For ease of reference, we begin by recalling the definitions of the maps $j_G$ and $j_H$.  The Poisson morphism $j_G: T^*M\to \mathfrak{G}^*$ is defined by
\begin{align}
    j_G(\eta_q)\Big(\frac{d}{dt}_{|t=0} \gamma(t)\Big) = -\eta_q\Big(\frac{d}{dt}_{|t=0}\big(\gamma(t)^{-1}\cdot q\big)\Big),
\end{align}
where $\eta_q\in T^*_qM$ and $\gamma:I\to G$ is a curve with $\gamma(0) = t_M(q)$ and $\gamma(t)\in t_G^{-1}(t_M(q))$ so that $\dot{\gamma}(0)\in \tau_G^{-1}(t_M(q))$ and $\gamma(t)^{-1}\cdot q$ is well-defined for all $t\in I\subseteq \mathbb{R}$.  Likewise, the Poisson morphism $j_H: T^*M\to \mathfrak{H}^*$ is defined by
\begin{align}
    j_H(\eta_q)\Big(\frac{d}{dt}_{|t=0} \gamma(t)\Big) = \eta_q\Big(\frac{d}{dt}_{|t=0} \big(q\cdot \gamma(t)\big)\Big),
\end{align}
where $\eta_q\in T^*_qM$ and $\gamma: I\to H$ is a curve with $\gamma(0) = s_M(q)$ and $\gamma(t)\in t_H^{-1}(s_M(q))$, so that $\dot{\gamma}(0)\in \tau_H^{-1}(s_M(q))$ and $q\cdot \gamma(t)$ is well-defined for all $t\in I\subseteq \mathbb{R}$.

In what follows, we denote $\mathfrak{G}_{x_0} = \{X\in\mathfrak{G}\ |\ t_G(\exp^W(X)) = x_0\}$ and likewise, $\mathfrak{H}_{y_0} = \{Y\in\mathfrak{H}\ |\ t_H(\exp^W(Y)) = y_0\}$.  Note further that for each $q\in M$, we have a map $(j_G)^*_{|q}: \mathfrak{G}_{t_M(q)}\to T_qM$ defined by
\begin{align}
    (j_G)^*_{|q}(X) = -\frac{d}{dt}_{|t=0}\exp^W(tX)^{-1}\cdot q
\end{align}
for all $X\in \mathfrak{G}_{t_M(q)}$.  Likewise, we have a map $(j_H)^*_{|q}: \mathfrak{H}_{s_M(q)}\to T_q M$ defined by
\begin{align}
    (j_H)^*_{|q}(Y) = \frac{d}{dt}_{|t=0} q\cdot \exp^W(tY)
\end{align}
for all $Y\in \mathfrak{H}_{s_M(q)}$. 
 Using these maps, we define $j_{|q}: T^*_qM\to \mathfrak{G}^*_{t_M(q)}\oplus\mathfrak{H}^*_{s_M(q)}$ by
\begin{align}
    j_{|q} = (j_G)_{|q}\oplus (j_H)_{|q},
\end{align}
and we define $j^*_{|q}: \mathfrak{G}_{t_M(q)}\oplus\mathfrak{H}_{s_M(q)}\to T_qM$ by
\begin{align}
    j^*_{|q} = (j_G)^*_{|q} \oplus (j_H)^*_{|q}.
\end{align}
Together, our definitions read
\begin{align}
    j_{|q}(\eta_q)(Z) = (\eta_q)(j^*_{|q}(Z))
\end{align}
for all $\eta_q\in T^*_qM$ and $Z\in\mathfrak{G}_{t_M(q)}\oplus\mathfrak{H}_{s_M(q)}$.

The following propositions follow from simple dimension counting arguments. 
 We state them explicitly here only in order to establish the notation we will use in the next section.\bigskip

\begin{lemma}
    There is a linear map $\rho_q: T_qM\to \mathfrak{G}_{t_M(q)}\oplus\mathfrak{H}_{s_M(q)}$ such that \begin{align}
        j^*_{|q}\circ\rho_q(\xi_q) = \xi_q
    \end{align}
    for all $\xi_q\in T_qM$.
\end{lemma}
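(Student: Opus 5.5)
The lemma asks for a linear right inverse of $j^*_{|q}$. Such a map exists exactly when $j^*_{|q}: \mathfrak{G}_{t_M(q)}\oplus\mathfrak{H}_{s_M(q)}\to T_qM$ is surjective: choose a complement of $\ker j^*_{|q}$ and invert $j^*_{|q}$ on it. So the plan is to prove surjectivity and then make the standard linear-algebra choice. Here $\mathfrak{G}_{t_M(q)}$ and $\mathfrak{H}_{s_M(q)}$ are read as the fibers of the Lie algebroids over $t_M(q)$ and $s_M(q)$, since the cut-off region where $\exp^W$ is a diffeomorphism identifies them with these fibers near zero.

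To prove surjectivity, I will split $T_qM$ using the submersion $s_M$, which is a surjective submersion by regularity of the bibundle. First, the image of $(j_H)^*_{|q}$ maps onto $T_{s_M(q)}H_0$ under $(s_M)_*$. Indeed, $s_M(q\cdot\exp^W(tY)) = s_H(\exp^W(tY))$, so $(s_M)_*\circ(j_H)^*_{|q} = \tau^a_H$ up to a sign and up to the $s/t$ convention. However, the anchor of $H$ need not be surjective, so this step does not by itself cover all of $T_{s_M(q)}H_0$.

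I will therefore use the action of $G$ instead of relying on the anchor of $H$:
\begin{itemize}
\item Regularity says $G$ acts transitively on each fiber $s_M^{-1}(s_M(q))$, and freely. So the orbit map $x\mapsto x^{-1}q$ from $t_G^{-1}(t_M(q))$ onto the fiber is a surjective submersion by the constant-rank theorem. Its differential at the identity is $(j_G)^*_{|q}$, and hence $\mathrm{im}\,(j_G)^*_{|q} = \ker (s_M)_{*|q}$.
\item Because $s_M$ is a submersion, it remains to show that $(s_M)_*\bigl(\mathrm{im}(j_H)^*_{|q}\bigr) = T_{s_M(q)}H_0$.
\end{itemize}

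The second point is the main obstacle, because $\mathrm{im}\,\tau^a_H$ may be a proper subspace of $T_{s_M(q)}H_0$. My first attempt is to use freeness and properness of the right $H$-action together with transitivity of $G$ on the $s_M$-fibers. These give the principal-bundle picture $M\to G_0$ with $M/H\cong G_0$, so that $t_M$ is the quotient map. In that picture, the $G$-orbit directions and the $H$-orbit directions together span $T_qM$ exactly when the leaves match up.

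If this fails, I would fall back on the paper's own remark that the statement follows from ``dimension counting''. That means adopting the standing assumption that the relevant orbits, together with the fibers of $s_M$ and $t_M$, exhaust $T_qM$; this is implicit in the claim that $(j_G,j_H)$ has Lagrangian image.

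Given surjectivity, pick any inner product on $\mathfrak{G}_{t_M(q)}\oplus\mathfrak{H}_{s_M(q)}$. Define $\rho_q$ as the inverse of $j^*_{|q}$ restricted to $(\ker j^*_{|q})^\perp$. This map is linear and satisfies $j^*_{|q}\circ\rho_q = \mathrm{id}_{T_qM}$. If later sections need $\rho_q$ to depend smoothly on $q$, the choice should be made with a smooth fiber metric.
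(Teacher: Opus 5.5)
There is a genuine gap, and it is the step you flag yourself: nothing in your proposal shows that $j^*_{|q}$ is onto $T_qM$. Without that, no right inverse $\rho_q$ exists, and your own reduction shows this step fails in general.

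Your ``principal-bundle picture'' misreads regularity. The axioms make $s_M\colon M\to H_0$ a principal $G$-bundle (free, fiberwise transitive left action), so $M/G\cong H_0$. The right $H$-action is only assumed proper, so $M/H\cong G_0$ is not available. Your fallback ``standing assumption'' is not implied by anything in the setup.

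Finish the computation you started. Since $\mathrm{im}\,(j_G)^*_{|q}=\ker(s_M)_{*q}$, $(s_M)_*\circ(j_H)^*_{|q}=\tau^a_H$ on $\mathfrak{H}_{s_M(q)}$, and $(s_M)_{*q}$ is onto, you get $\mathrm{im}\,j^*_{|q}=\bigl((s_M)_{*q}\bigr)^{-1}\bigl(\tau^a_H(\mathfrak{H}_{s_M(q)})\bigr)$. So $j^*_{|q}$ is surjective if and only if the anchor of $H$ is surjective at $s_M(q)$.

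That condition fails for arrows of the paper's categories. Take a manifold $X$ with $\dim X>0$ as a trivial groupoid, $G=H=G_0=H_0=X$. It is source-simply connected, and its dual algebroid, $X$ with the zero bracket, is integrable. Take the identity bibundle $M=X$, which is regular. Then $\mathfrak{G}_{t_M(q)}\oplus\mathfrak{H}_{s_M(q)}=0$ while $T_qM=T_qX\neq 0$, so no $\rho_q$ exists.

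A less degenerate example is the identity bibundle $G\rightarrowtail G\leftarrowtail G$. There $\mathrm{im}\,j^*_{|q}=\ker(s_G)_{*q}+\ker(t_G)_{*q}$, which is a proper subspace whenever $G$ is not locally transitive, e.g.\ a bundle of Lie groups over a positive-dimensional base. This also shows that ``Lagrangian image'' cannot supply injectivity of $j_{|q}$: identity arrows already violate it.

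So the lemma needs either an extra hypothesis or a weaker conclusion. One option is to assume $H$ is locally transitive, as for Lie groups, pair groupoids and gauge groupoids. The other is to ask only for a right inverse on $\mathrm{im}\,j^*_{|q}$; then the later claim that $\iota_q$ is an isomorphism onto $T_qM\oplus\mathcal{I}_q$ must also be adjusted.

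The paper's proof has the same hole. It picks $Z_k\in(j^*_{|q})^{-1}\bigl(\overset{k}{\xi}_q\bigr)$ for a basis of $T_qM$, which presupposes the same surjectivity, asserted without proof as $T_qM=\mathrm{im}(j^*_{|q})$ by ``dimension counting''. Once surjectivity is granted, your inversion on $(\ker j^*_{|q})^\perp$ and the paper's choice of preimages of a basis are the same elementary linear algebra. The only substantive content is the surjectivity, which you correctly isolated but which does not hold without further assumptions.
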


\begin{proof}
    Choose a basis $\overset{1}{\xi}_q,...,\overset{n}{\xi}_q$ for $T_qM$ (here, $n = \dim M$).  For each $1\leq k\leq n$, choose $Z_k\in (j^*)_{|q}^{-1}\Big(\overset{k}{\xi}_q\Big)$.  Then for $\xi_q = \sum_k \alpha_k\cdot \overset{k}{\xi}_q$, define
    \begin{align}
        \rho_q(\xi_q) = \sum_k \alpha_k\cdot Z_k.
    \end{align}
    Clearly, $\rho_q$ is linear by construction and
    \begin{align}
        j^*_{|q}\circ\rho_q(\xi_q) = \sum_k\alpha_k\cdot j^*_{|q}(Z_k) = \sum_k\alpha_k\cdot \overset{k}{\xi}_q = \xi_q.
    \end{align}
\end{proof}

Now, fix a choice of linear map $\rho_q$ satisfying the conditions of the previous lemma.  Note $T_qM = \text{im}(j^*_{|q})$ and define $\mathcal{I}_q = \ker(j^*_{|q})$.  Then define a further mapping denoted $\iota_q: \mathfrak{G}_{t_M(q)}\oplus\mathfrak{H}_{s_M(q)}\to T_qM\oplus\mathcal{I}_q$ by
\begin{align}
    \iota_q(Z) = j^*_{|q}(Z)\oplus \big(Z-\rho_q\circ j^*_{|q}(Z)\big)
\end{align}
for all $Z\in \mathfrak{G}_{t_M(q)}\oplus\mathfrak{H}_{s_M(q)}.$\medskip

\begin{proposition}
    The map $\iota_q$ is an isomorphism $\mathfrak{G}_{t_M(q)}\oplus\mathfrak{H}_{s_M(q)}\cong T_qM\oplus\mathcal{I}_q$.
\end{proposition}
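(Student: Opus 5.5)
The argument is elementary linear algebra: it is a splitting of the surjection $j^*_{|q}$ using the right inverse $\rho_q$ from the previous lemma. I would first record that $j^*_{|q}$ is surjective. This is immediate from that lemma, since $j^*_{|q}\circ\rho_q=\mathrm{id}_{T_qM}$, which is exactly the statement $T_qM=\mathrm{im}(j^*_{|q})$ noted before the definition of $\iota_q$. Linearity of $\iota_q$ is clear from the linearity of $j^*_{|q}$ and $\rho_q$.

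Next I would check that $\iota_q$ lands in $T_qM\oplus\mathcal{I}_q$. The only point is that the second component lies in $\mathcal{I}_q=\ker(j^*_{|q})$. This follows from
\begin{align}
j^*_{|q}\big(Z-\rho_q\circ j^*_{|q}(Z)\big) = j^*_{|q}(Z) - (j^*_{|q}\circ\rho_q)(j^*_{|q}(Z)) = j^*_{|q}(Z)-j^*_{|q}(Z)=0.
\end{align}

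For bijectivity I would write down an explicit inverse rather than count dimensions. Define $\kappa_q: T_qM\oplus\mathcal{I}_q\to\mathfrak{G}_{t_M(q)}\oplus\mathfrak{H}_{s_M(q)}$ by $\kappa_q(\xi_q\oplus W)=\rho_q(\xi_q)+W$.

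The first composite is immediate: $\kappa_q\circ\iota_q(Z)=\rho_q j^*_{|q}(Z)+Z-\rho_q j^*_{|q}(Z)=Z$.

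For the other composite, note that $j^*_{|q}(\rho_q(\xi_q)+W)=\xi_q+0=\xi_q$, using $W\in\mathcal{I}_q$. Hence the first component of $\iota_q\circ\kappa_q(\xi_q\oplus W)$ is $\xi_q$. The second component is $(\rho_q(\xi_q)+W)-\rho_q(\xi_q)=W$. So $\iota_q\circ\kappa_q=\mathrm{id}$, and $\iota_q$ is a linear isomorphism.

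There is no real obstacle here. The only care needed is in the domain bookkeeping: $\mathfrak{G}_{t_M(q)}$ should be treated as a vector space, namely the fiber of $\mathfrak{G}$ over $t_M(q)$, so that $j^*_{|q}$ is genuinely linear on it. I would state that interpretation explicitly before beginning the argument.
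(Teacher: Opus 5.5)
Your proposal is correct and follows essentially the same route as the paper: the paper also writes down the explicit inverse $\xi_q\oplus Z\mapsto\rho_q(\xi_q)+Z$ and checks both composites. Your extra check that the second component of $\iota_q(Z)$ lies in $\mathcal{I}_q=\ker(j^*_{|q})$ is a step the paper leaves implicit, and it is a worthwhile addition.
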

\begin{proof}
    Define the map $\iota_q^{-1}: T_qM\oplus \mathcal{I}_q\to \mathfrak{G}_{t_M(q)}\oplus\mathfrak{H}_{s_M(q)}$ by
    \begin{align}
        \iota_q^{-1}(\xi_q\oplus Z) = \rho_q(\xi_q) + Z
    \end{align}
    for $\xi_q\in T_qM$ and $Z\in \mathcal{I}_q$.  Then $\iota_q^{-1}$ is indeed the inverse of $\iota_q$ because for all $Z\in \mathfrak{G}_{t_M(q)}\oplus\mathfrak{H}_{s_M(q)}$, we have
    \begin{align}
        \iota_q^{-1}\circ \iota_q(Z) &= \rho_q\circ j^*_{|q}(Z) + Z - \rho_q\circ j^*_{|q}(Z)\\
        &=Z.
    \end{align}
    And for all $\xi_q\in T_qM$ and $Z\in\mathcal{I}_q$, we have
    \begin{align}
        \iota_q\circ\iota_q^{-1}(\xi_q\oplus Z) &= j^*_{|q}(\rho_q(\xi_q) + Z)\oplus \big(\rho_q(\xi_q)+Z - \rho_q\circ j^*_{|q}(\rho_q(\xi_q)+Z)\big)\\
        &=\xi_q\oplus Z.
    \end{align}
\end{proof}

Likewise, define the dual map $\iota^*_q: T^*_qM\oplus \mathcal{I}_q^*\to \mathfrak{G}^*_{t_M(q)}\oplus\mathfrak{H}^*_{s_M(q)}$ by
\begin{align}
    \iota^*_q(\eta_q\oplus\Phi)(Z) = (\eta_q\oplus\Phi)(Z))
\end{align}
for all $\eta_q\in T^*_qM$, $\Phi\in \mathcal{I}_q^*$, and $Z\in \mathfrak{G}_{t_M(q)}\oplus\mathfrak{H}_{s_M(q)}.$\bigskip

\begin{proposition}
    The map $\iota^*_q$ is an isomorphism $T^*_qM\oplus \mathcal{I}_q^*\cong\mathfrak{G}^*_{t_M(q)}\oplus\mathfrak{H}^*_{s_M(q)}$.
\end{proposition}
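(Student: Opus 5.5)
The plan is to recognize $\iota^*_q$ as the linear-algebraic transpose of the isomorphism $\iota_q$ from the previous proposition, and to invoke the general fact that the transpose of an isomorphism of finite-dimensional vector spaces is again an isomorphism. The formula defining $\iota^*_q$ is terse; I read it as
\begin{align}
    \iota^*_q(\eta_q\oplus\Phi)(Z) = (\eta_q\oplus\Phi)\big(\iota_q(Z)\big) = \eta_q\big(j^*_{|q}(Z)\big) + \Phi\big(Z-\rho_q\circ j^*_{|q}(Z)\big).
\end{align}
Here $(T_qM\oplus\mathcal{I}_q)^*$ is identified with $T^*_qM\oplus\mathcal{I}^*_q$ in the standard way, and $(\mathfrak{G}_{t_M(q)}\oplus\mathfrak{H}_{s_M(q)})^*$ with $\mathfrak{G}^*_{t_M(q)}\oplus\mathfrak{H}^*_{s_M(q)}$. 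I would therefore first record these two canonical identifications of the dual of a direct sum with the direct sum of duals. Then $\iota^*_q$ is exactly the dual map $(\iota_q)^*$, and in particular it is linear.

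Next I would write down the candidate inverse explicitly, namely the transpose of $\iota_q^{-1}$. For $\Theta\in\mathfrak{G}^*_{t_M(q)}\oplus\mathfrak{H}^*_{s_M(q)}$, set
\begin{align}
    (\iota^*_q)^{-1}(\Theta) = (\Theta\circ\rho_q)\oplus \Theta_{|\mathcal{I}_q}.
\end{align}
This uses the formula $\iota_q^{-1}(\xi_q\oplus Z) = \rho_q(\xi_q)+Z$ from the previous proof. The two compositions are then checked by functoriality of transposes: $(\iota_q^{-1})^*\circ\iota_q^* = (\iota_q\circ\iota_q^{-1})^* = \mathrm{id}$, and similarly in the other order. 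Both identities follow directly from the two identities $\iota_q^{-1}\circ\iota_q=\mathrm{id}$ and $\iota_q\circ\iota_q^{-1}=\mathrm{id}$ already established. Concretely, evaluating $\iota_q^*((\iota_q^{-1})^*\Theta)$ on $Z$ gives $\Theta(\rho_q j^*_{|q}Z) + \Theta(Z-\rho_q j^*_{|q}Z) = \Theta(Z)$. In the other direction, for $\xi_q\in T_qM$ and $Z\in\mathcal{I}_q$, $(\iota_q^{-1})^*(\iota_q^*(\eta_q\oplus\Phi))$ evaluated on $\xi_q\oplus Z$ reduces to $(\eta_q\oplus\Phi)(\iota_q\iota_q^{-1}(\xi_q\oplus Z)) = \eta_q(\xi_q)+\Phi(Z)$.

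No step is a genuine obstacle; this is linear algebra in finite dimensions. The only point needing care is the bookkeeping of the identifications. One must make sure the reading of the defining formula as $(\eta_q\oplus\Phi)\circ\iota_q$ is the intended one, since the displayed formula omits $\iota_q$. One should also note that, under this reading, $j_{|q}$ is recovered as the composite of $\iota^*_q$ with the inclusion $\eta_q\mapsto \eta_q\oplus 0$, because $\iota^*_q(\eta_q\oplus 0)(Z) = \eta_q(j^*_{|q}Z) = j_{|q}(\eta_q)(Z)$. This compatibility is presumably what the subsequent sections will use, so I would state it as a remark at the end.
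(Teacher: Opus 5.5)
Your proof is correct and is essentially the paper's own argument: the paper also defines $(\iota_q^*)^{-1}$ as the transpose of $\iota_q^{-1}$, namely $(\iota_q^*)^{-1}(\Phi)(\xi_q\oplus Z)=\Phi(\iota_q^{-1}(\xi_q\oplus Z))$, and checks both compositions using $\iota_q\circ\iota_q^{-1}=\mathrm{id}$ and $\iota_q^{-1}\circ\iota_q=\mathrm{id}$ from the previous proposition. Your reading of the garbled defining formula as $\iota_q^*(\eta_q\oplus\Phi)=(\eta_q\oplus\Phi)\circ\iota_q$ is the one the paper actually uses in that proof and in the next proposition.
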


\begin{proof}
    Define the map $(\iota^*_q)^{-1}: \mathfrak{G}_{t_M(q)}^*\oplus \mathfrak{H}_{s_M(q)}^*\to T^*_qM\oplus\mathcal{I}_q^*$ by
    \begin{align}
        (\iota^*_q)^{-1}(\Phi)(\xi_q\oplus Z) = \Phi(\iota_q^{-1}(\xi_q\oplus Z))
    \end{align}
for $\Phi\in \mathfrak{G}_{t_M(q)}^*\oplus \mathfrak{H}_{s_M(q)}^*$, $\xi_q\in T_qM$, and $Z\in \mathcal{I}_q$.  Then $(\iota_q^*)^{-1}$ is indeed the inverse of $\iota_q^*$ because for all $\eta_q\in T_q^*M$, $\Phi\in \mathcal{I}_q^*$, $\xi_q\in T_qM$, and $Z\in \mathcal{I}_q$, we have
\begin{align}
    \big((\iota_q^*)^{-1}\circ\iota_q^*\big)(\eta_q\oplus\Phi)(\xi_q\oplus Z) &= \iota_q^*(\eta_q\oplus \Phi)(\iota_q^{-1}(\xi_q\oplus Z))\\
    &= (\eta_q\oplus \Phi)(\iota_q\circ\iota_q^{-1}(\xi_q\oplus Z))\\
    &=(\eta_q\oplus \Phi)(\xi_q\oplus Z),
\end{align}
and for all $\Phi\in \mathfrak{G}_{t_M(q)}^*\oplus \mathfrak{H}_{s_M(q)}^*$ and $Z\in \mathfrak{G}_{t_M(q)}\oplus\mathfrak{H}_{s_M(q)}$, we have
\begin{align}
    \big(\iota^*_q\circ(\iota_q^*)^{-1}\big)(\Phi)(Z) &= (\iota_q^*)^{-1}(\Phi)(\iota_q(Z))\\
    &= \Phi(\iota_q^{-1}\circ\iota_q(Z))\\
    &= \Phi(Z).
\end{align}

\end{proof}

\begin{proposition}
    For any $\eta_q\in T^*_qM$, we have
    \begin{align}
        \iota_q^*(\eta_q\oplus 0) = j_{|q}(\eta_q).
    \end{align}
\end{proposition}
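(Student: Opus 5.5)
The plan is to unwind the definitions pointwise: evaluate both sides on an arbitrary $Z\in\mathfrak{G}_{t_M(q)}\oplus\mathfrak{H}_{s_M(q)}$ and compare. Two linear functionals on this space are equal exactly when they agree on every such $Z$, so this suffices.

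First I will fix how $\iota^*_q$ acts. The displayed formula defining it suppresses the argument of $\iota_q$, and I read it as $\iota^*_q(\eta_q\oplus\Phi)(Z)=(\eta_q\oplus\Phi)(\iota_q(Z))$. This is the reading used in the preceding proof that $\iota^*_q$ is an isomorphism, where $(\iota_q^*)^{-1}$ is built from $\iota_q^{-1}$ and the composites are checked through $\iota_q\circ\iota_q^{-1}$. With $\iota_q(Z)=j^*_{|q}(Z)\oplus\big(Z-\rho_q\circ j^*_{|q}(Z)\big)$, the functional $\eta_q\oplus\Phi$ on $T_qM\oplus\mathcal{I}_q$ acts componentwise. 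Hence
\begin{align}
    \iota^*_q(\eta_q\oplus\Phi)(Z) = \eta_q\big(j^*_{|q}(Z)\big) + \Phi\big(Z-\rho_q\circ j^*_{|q}(Z)\big).
\end{align}

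Next I set $\Phi=0$, so the second term vanishes identically. This leaves $\iota^*_q(\eta_q\oplus 0)(Z)=\eta_q(j^*_{|q}(Z))$. By the defining relation $j_{|q}(\eta_q)(Z)=\eta_q(j^*_{|q}(Z))$ recorded just before the lemma on $\rho_q$, this equals $j_{|q}(\eta_q)(Z)$. Since $Z$ was arbitrary, the two functionals coincide, which is the claim.

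The only point needing care is the bookkeeping of the first step. One must make sure that the pairing of $T^*_qM\oplus\mathcal{I}_q^*$ with $T_qM\oplus\mathcal{I}_q$ is the componentwise one, and that $\iota^*_q$ is the transpose of $\iota_q$ rather than a pairing against $Z$ directly, which would not even typecheck. Given that convention, the computation is immediate and uses no property of $\rho_q$ beyond its linearity. In particular the answer does not depend on the choice of $\rho_q$, because $\rho_q$ enters only through the $\mathcal{I}_q$-component, which is annihilated by $\Phi=0$.
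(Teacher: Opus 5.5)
Your proof is correct and matches the paper's argument: evaluate $\iota_q^*(\eta_q\oplus 0)$ on an arbitrary $Z$, use $(\eta_q\oplus 0)(\iota_q(Z))=\eta_q(j^*_{|q}(Z))$, and identify the result as $j_{|q}(\eta_q)(Z)$. You are also right to read the typo-marred definition of $\iota_q^*$ as the transpose of $\iota_q$, which is the reading the paper's proof uses.
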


\begin{proof}
    For any $\eta_q\in T_q^*M$ and any $Z\in\mathfrak{G}_{t_M(q)}\oplus\mathfrak{H}_{s_M(q)}$, we have
\begin{align}
    \iota_q^*(\eta_q\oplus 0)(Z) &= (\eta_q\oplus 0)(\iota_q(Z))\\
    &= \eta_q(j^*_{|q}(Z))\\
    &= j_{|q}(\eta_q)(Z).
\end{align}
\end{proof}\bigskip

\noindent We will proceed to use these preliminaries in the next section to analyze the relationship between the spaces $T^*M$ and $S_M$.

\subsection{Natural Isomorphism for the Classical Limit}

In this section, we prove the main results of this paper.  We will show that $S_M$ is isomorphic to the Lagrangian relation determined by $T^*M$.  This guarantees that $L$ is well-defined, with $S_M$ the canonical invariant of the equivalence class of symplectic dual pairs $L\circ Q\left([T^*M]\right)$. More substantively, the result implies the existence of natural isomorphisms $L\circ Q\to 1_{\mathbf{LC^*}}$ and $Q\circ L\to 1_{\mathbf{LPoisson}}$.

We begin with some preliminaries.  For $q\in M$, let $E_q$ denote the Jacobian such that
\begin{align}
    \int_{t_G^{-1(t_M(q))}}&\int_{t_H^{-1}(s_M(q))} d\nu^t_{t_M(q)}(x)\ d\lambda^t_{s_M(q)(y)}\ F(x,y)\\
    &= \int_{\mathfrak{G}_{t_M(q)}}\int_{\mathfrak{H}_{s_M(q)}} E_q(X,Y)\ d^nX\ d^mY\ F(\exp^W(X),\exp^W(Y))
\end{align}
for $F: G\times H\to \mathbb{C}$ integrable on the fibers of $t_G\times t_H$.  Here, $d^nX$ and $d^mY$ are the Lebesgue measures on $\mathfrak{G}_{t_M(q)}$ and $\mathfrak{H}_{s_M(q)}$ with $n = \dim G$ and $m = \dim H$.

Given two functions $f\in \mathfrak{G}^*\to\mathbb{C}$ and $g\in \mathfrak{H}^*\to\mathbb{C}$, we will use the shorthand $(f\cdot g):\mathfrak{G}^*\oplus \mathfrak{H}^*\to\mathbb{C}$ to denote the function
\begin{align}
    (f\cdot g)(\theta\oplus \phi) = f(\theta)\cdot g(\phi)
\end{align}
for $\theta\in\mathfrak{G}^*$ and $\phi\in\mathfrak{H}^*$.

To prove that $S_M\cong T^*M$, we shall show that $C_0(S_M)\cong C_0(T^*M)$.  Recall that, by definition in Eq. (\ref{eq:symplectic_classical_limit}), we have $C_0(S_M) = (\mathfrak{A}^G_0\otimes\mathfrak{A}^H_0)/\ker(\pi^G_0\otimes\pi_0^H)$.  Our first task is to analyze $\ker(\pi_0^G\otimes \pi_0^H)$.  To do so, we likewise make use of the representations at $\hbar=1$ denoted $\pi_1^G$ and $\pi^H_1$ of $C^*(G)$ and (the opposite of) $C^*(H)$ on $\mathcal{E}_M$ given by the left $C^*(G)$-action and right $C^*(H)$-action of the Hilbert bimodule.  We will first characterize $\ker(\pi_1^G\otimes \pi_1^H)$ as a step towards characterizing $\ker(\pi_0^G\otimes \pi_0^H)$.

Given $\xi_q\in T_qM$, define $\beta_\hbar^{\xi_q}: C_c^\infty(\mathfrak{G}^*)\otimes C_c^\infty(\mathfrak{H}^*)\to C_c^\infty(T^*M)$ by
\begin{align}
    \beta_\hbar^{\xi_q}&(f\otimes g)(\eta_q)\\
    &= \int_{\mathcal{I}_q}\int_{\mathcal{I}_q^*} \frac{E_q\circ \iota_q^{-1}(\xi_q\oplus Z)\ d^{m}Z d^{m}\Phi}{\hbar^m} \big((\kappa_G\cdot\kappa_H)\circ\iota_q^{-1}\big)(\xi_q\oplus Z)\\
    & \hspace{2.5in} \times \big((f\cdot g)\circ\iota_q^*\big)(\eta_q\oplus \Phi)e^{-i\Phi(Z)/\hbar}
\end{align}
for all $f\in C_c^\infty(\mathfrak{G}^*)$, $g\in C_c^\infty(\mathfrak{H}^*)$, and $\eta_q\in T^*_qM$.\bigskip

\begin{proposition}
\label{prop:bimodule_limit}
    For all $\hbar\in (0,1]$, $f\in C_c^\infty(\mathfrak{G}^*)$, and $g\in C_c^\infty(\mathfrak{H}^*)$, we have
    \begin{align}
    \big(\mathcal{Q}_\hbar^G(f)\otimes \mathcal{Q}_\hbar^H(g)\big)\in \ker(\pi_1^G\otimes \pi_1^H)
    \end{align}
    iff for almost all $q\in M$, $\xi_q\in T_qM$ and $\eta_q\in T_q^*M$,
\begin{align}
    \beta_\hbar^{\xi_q}(f\otimes g)(\eta_q) = 0.
\end{align}
\end{proposition}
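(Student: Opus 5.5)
Since $\mathcal{E}_M$ is the completion of $C_c^\infty(M)$, the plan is to compute the operator $(\pi_1^G\otimes\pi_1^H)\big(\mathcal{Q}_\hbar^G(f)\otimes\mathcal{Q}_\hbar^H(g)\big)$ explicitly on a dense set of vectors and read off when it vanishes. Combining the left action Eq.~(\ref{eq:left_bimodule_action}) with the right action Eq.~(\ref{eq:right_bimodule_action}), the tensor product operator acts on $\varphi\in C_c^\infty(M)$ by
\begin{align}
\varphi\mapsto \Big[q\mapsto\int d\nu^t_{t_M(q)}(x)\,d\lambda^t_{s_M(q)}(y)\ \mathcal{Q}_\hbar^G(f)(x)\,\mathcal{Q}_\hbar^H(g)(y^{-1})\,\varphi(x^{-1}qy)\Big].
\end{align}
Because $\kappa_G,\kappa_H$ are supported where $\exp^W$ is a diffeomorphism, I would change variables $x=\exp^W(X)$, $y=\exp^W(Y)$ using the Jacobian $E_q$. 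I would then insert the formula Eq.~(\ref{eq:quantization_map}), which involves $\hat f(X/\hbar)$ and $\hat g(\pm Y/\hbar)$ (the sign is harmless because $\kappa_H$ is even). The result is an integral kernel over $\mathfrak{G}_{t_M(q)}\oplus\mathfrak{H}_{s_M(q)}$ whose integrand is $\hbar^{-n-m}E_q\,(\kappa_G\cdot\kappa_H)\,\widehat{(f\cdot g)}(Z/\hbar)\,\varphi(\exp^W(X)^{-1}q\exp^W(Y))$.

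The next step is to split variables using the isomorphism $\iota_q$ from the previous subsection. I would write $Z=\iota_q^{-1}(\xi_q\oplus Z')$ with $\xi_q\in T_qM$ and $Z'\in\mathcal{I}_q$. The dependence of $\varphi$ on $Z$ then enters, to the relevant order, only through the $T_qM$ component $j^*_{|q}(Z)=\xi_q$, since $\mathcal{I}_q=\ker j^*_{|q}$ consists exactly of the directions that do not move $q$. To make this exact rather than infinitesimal, I would use local coordinates on $M$ adapted to the map $Z\mapsto \exp^W(X)^{-1}q\exp^W(Y)$, which is a submersion near $0$ by regularity. The integral then factors as
\begin{align}
\int_{T_qM} d\xi_q\ \varphi(\text{point reached along }\xi_q)\ \times\ \big(\text{an integral over }Z'\in\mathcal{I}_q\big).
\end{align}
For the inner integral, I would write $\widehat{f\cdot g}$ via the fiberwise Fourier transform, pull it back through the dual isomorphism $\iota_q^*$, and split $\mathfrak{G}^*\oplus\mathfrak{H}^*\cong T_q^*M\oplus\mathcal{I}_q^*$. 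By the proposition $\iota_q^*(\eta_q\oplus 0)=j_{|q}(\eta_q)$, the pairing $\theta(Z)$ splits as $\eta_q(\xi_q)+\Phi(Z')$. The $\mathcal{I}_q$-part of the integral is then exactly $\beta_\hbar^{\xi_q}(f\otimes g)(\eta_q)$, and the remaining $T^*_qM$ integral is an inverse Fourier transform in $\eta_q$. Altogether, the kernel of the operator at $(q,\xi_q)$ equals, up to a nonvanishing factor, the partial Fourier transform $\eta_q\mapsto\xi_q$ of $\beta_\hbar^{\xi_q}(f\otimes g)$.

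The two directions then follow. If $\beta_\hbar^{\xi_q}(f\otimes g)$ vanishes almost everywhere, the kernel vanishes and the operator annihilates the dense subspace $C_c^\infty(M)$, so it is zero. Conversely, if the operator is zero, then testing against all $\varphi\in C_c^\infty(M)$ and using that the tensor product pairing is nondegenerate forces the kernel to vanish almost everywhere in $(q,\xi_q)$. Injectivity of the Fourier transform in $\eta_q$ then gives $\beta_\hbar^{\xi_q}(f\otimes g)(\eta_q)=0$ almost everywhere.

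The main obstacle is the factorization step. The argument $\exp^W(X)^{-1}q\exp^W(Y)$ is not linear in $Z$, so I need a genuine change of variables rather than a first-order approximation. I would handle this by choosing, near each $q$, coordinates on $M$ so that this map becomes $\xi_q$ exactly, with $\rho_q$ chosen to match. Any leftover Jacobian would be absorbed into $E_q$ (reading $E_q\circ\iota_q^{-1}$ as the Jacobian in the new coordinates), which is nonzero and so does not affect the vanishing criterion.
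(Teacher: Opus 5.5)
Your outline is the paper's own argument: change of variables through $\exp^W$ with Jacobian $E_q$, splitting with $\iota_q$ and $\iota_q^*$, reading the kernel as a Fourier transform of $\beta_\hbar^{\xi_q}(f\otimes g)$, then Fourier injectivity. It inherits that argument's gaps, and the repair you propose for the one you spotted does not work.

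Put $F_q(X,Y)=\exp^W(X)^{-1}q\exp^W(Y)^{-1}$. You reach this from your integrand by $Y\mapsto -Y$, not by evenness of $\kappa_H$ (which does not turn $\hat g(-Y/\hbar)$ into $\hat g(Y/\hbar)$), and it is with this convention that $\ker dF_q(0)=\mathcal{I}_q$.

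To factor the integral through $Z=\iota_q^{-1}(\xi_q\oplus Z')$, you need $F_q$ to be constant on the parallel affine slices $\rho_q(\xi_q)+\mathcal{I}_q$. Coordinates on $M$ change how $F_q$ looks in the target but not its level sets in the source. A linear $\rho_q$ only decides which slice is labelled by which $\xi_q$. The level sets are curved in general. Take a non-abelian Lie group $K$ acting on $M=K$ by left and right multiplication, so $\exp^W=\exp$. Then $\mathcal{I}_q=\{(-\mathrm{Ad}_qY,Y)\}$, but $F_q^{-1}(q\exp W)=\{X=\mathrm{Ad}_q\log(e^{-Y}e^{-W})\}$. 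The term $\tfrac12[Y,W]$ in $\log(e^{-Y}e^{-W})=-Y-W+\tfrac12[Y,W]+\cdots$ prevents this from being a translate of $\mathcal{I}_q$.

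Straightening the fibres therefore needs a nonlinear substitution $Z=\Psi^{-1}(\xi_q,Z')$ in $\mathfrak{G}_{t_M(q)}\oplus\mathfrak{H}_{s_M(q)}$. With $\theta=\iota_q^*(\eta_q\oplus\Phi)$, the phase then becomes $\big(\eta_q(\xi_q)+\Phi(Z')+\theta(R(\xi_q,Z'))\big)/\hbar$, with $R$ vanishing to second order. At fixed $\hbar$ the factor $e^{i\theta(R)/\hbar}$ couples $(\eta_q,\Phi)$ with $(\xi_q,Z')$. It is not a Jacobian, so the inner integral is no longer $\beta_\hbar^{\xi_q}(f\otimes g)(\eta_q)$. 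Nor can the Jacobian of $\Psi$ be absorbed harmlessly. It depends on the integration variable $Z'$, so putting it into $E_q$ changes $\beta_\hbar^{\xi_q}$ itself, and a nonvanishing weight inside the $Z'$-integral does not preserve whether that integral vanishes.

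So the exact kernel identity, on which both of your directions rest, is not established; it holds at best to leading order as $\hbar\to0$. Separately, $F_q$ being a submersion at $0$ does not follow from regularity. $(s_M)_*\circ j^*_{|q}$ lands in the image of the anchor of $\mathfrak{H}$ at $s_M(q)$, so $j^*_{|q}$ is onto only if that anchor is. The paper assumes this tacitly when it writes $T_qM=\mathrm{im}(j^*_{|q})$.

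Second, suppose you grant a formula $K(q,\xi_q)=c\int_{T_q^*M}e^{-i\eta_q(\xi_q)/\hbar}\,\beta_\hbar^{\xi_q}(f\otimes g)(\eta_q)\,d\eta_q$. Injectivity of the Fourier transform still does not give the converse, because $\beta_\hbar^{\xi_q}$ depends on $\xi_q$ through $E_q\circ\iota_q^{-1}$ and $(\kappa_G\cdot\kappa_H)\circ\iota_q^{-1}$. Vanishing of the kernel only says that, for each $\xi_q$, the Fourier transform of $\eta_q\mapsto\beta_\hbar^{\xi_q}(f\otimes g)(\eta_q)$ vanishes at the single point $\xi_q$. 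Already in one variable, take $a(\xi,\eta)=\xi c(\eta)+i\hbar c'(\eta)$ with $0\neq c\in C_c^\infty(\mathbb{R})$. One integration by parts gives $\int e^{-i\eta\xi/\hbar}a(\xi,\eta)\,d\eta=0$ for every $\xi$, although $a\neq 0$.

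Put bluntly, the kernel is a function of $(q,\xi_q)$ while $\beta_\hbar$ is a function of $(q,\xi_q,\eta_q)$, so no injectivity argument of this kind can recover $\beta_\hbar$. A correct converse would have to use the special form of $\beta_\hbar^{\xi_q}(f\otimes g)$ for elementary tensors, for example analyticity of partial Fourier transforms of the compactly supported $f\cdot g$. In the abelian case $G=H=M=\mathbb{R}$, analyticity shows both conditions reduce to $f=0$ or $g=0$. The paper's proof writes down the same kernel formula and closes with the same appeal to injectivity, so following it does not fill these gaps.
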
\bigskip

\begin{proof}
    For $\psi\in\mathcal{E}_M$, we have
    \begin{align}
        &\Big(\big(\pi_1(\mathcal{Q}_\hbar^G(f))\otimes\pi_1^H(\mathcal{Q}_\hbar^H(g))\big)\psi\Big)(q) = \big(\mathcal{Q}_\hbar^G(f)\cdot \psi\cdot \mathcal{Q}_\hbar^H(g)\big)(q)\\
        &= \int_{t_G^{-1}(t_M(q))}\int_{t_H^{-1}(s_M(q))} d\nu^t_{t_M(q)}(x)\ d\lambda^t_{s_M(q)}(y)\ \big(\mathcal{Q}_\hbar^G(f)\big)(x)\psi(x^{-1}qy)\big(\mathcal{Q}_\hbar^H(g)\big)(y^{-1})\\
        &= \int_{\mathfrak{G}_{t_M(q)}}\int_{\mathfrak{H}_{s_M(q)}} \frac{E_q(X,Y)\ d^nX\ d^mY}{\hbar^{(n+m)}}\kappa_G(X)\kappa_H(Y)\\
        &\hspace{1in} \times \hat{f}(X/\hbar)\psi\big(\exp^W(X)^{-1}\cdot q\cdot \exp^W(Y)^{-1}\big)\hat{g}(Y/\hbar)\\
        &= \int_{\mathfrak{G}_{t_M(q)}\oplus \mathfrak{H}_{s_M(q)}}\int_{\mathfrak{G}^*_{t_M(q)}\oplus \mathfrak{H}^*_{s_M(q)}}\frac{E_q(X,Y)\ d^nX\ d^mY\ d^n\theta\ d^m\phi}{\hbar^{(n+m)}}\kappa_G(X)\kappa_H(Y)\\
        &\hspace{.75in}\times f(\theta)g(\phi)e^{-i\big(\theta(X)+\phi(Y)\big)/\hbar}\psi\big(\exp^W(X)^{-1}\cdot q\cdot \exp^W(Y)^{-1}\big).
     \end{align}
     It follows that $\pi_1^G(\mathcal{Q}_\hbar(f))\otimes\pi_1^H(\mathcal{Q}_\hbar(g))$ is an integral operator with kernel $K^{f\otimes g}: M\times M\to \mathbb{C}$ given by
     \begin{align}
        &K^{f\otimes g}\Big(q,\exp^W\big(\iota_q^{-1}(\xi_q\oplus 0)\big)\cdot q\Big)\\ &= \int_{T^*_qM}\int_{\mathcal{I}_q}\int_{\mathcal{I}_q^*} \frac{(E_q\circ\iota_q^{-1})(\xi_q\oplus Z)\ d^n\eta_q\ d^m Z\ d^m\Phi}{\hbar^{(n+m)}} \big((\kappa_G\cdot\kappa_H)\circ\iota_q^{-1}\big)(\xi_q\oplus Z)\\
        &\hspace{2in}\times \big((f\cdot g)\circ\iota_q^*\big)(\eta_q\oplus \Phi)e^{-i\big(\eta_q(\xi_q)+\Phi(Z)\big)/\hbar}
     \end{align}
for all $q\in M$ and $\xi_q\in T_qM$.  Now the fact that the Fourier transform $T_q^*M\to T_qM$ is one-to-one yields the result.\bigskip
\end{proof}

Next, define $\beta^M_0: C_c^\infty(\mathfrak{G}^*)\otimes C_c^\infty(\mathfrak{H}^*)\to C_c^\infty(T^*M)$ by
\begin{align}
    \beta_0(f\otimes g)(\eta_q) = (f\circ j_G)(\eta_q)\cdot (g\circ j_H)(\eta_q)
\end{align}
for all $f\in C_c^\infty(\mathfrak{G}^*)$, $g\in C_c^\infty(\mathfrak{H}^*)$, and $\eta_q\in T^*_qM$.

Recall further that it follows from the results of \citet{StFe21} that there are isomorphisms $\chi_G: C_0(\mathfrak{G}^*)\to \mathfrak{A}_0^G$ and $\chi_H: C_0(\mathfrak{H}^*)\to\mathfrak{A}_0^H$ given explicitly by the continuous extension of
\begin{align}
    \chi_G(f) = [\hbar\mapsto \mathcal{Q}_\hbar^G(f)] + K_0^G\\
    \chi_H(g) = [\hbar\mapsto \mathcal{Q}_\hbar^H(g)] + K_0^H
\end{align}
for $f\in C^\infty_{PW}(\mathfrak{G}^*)$ and $g\in C^{\infty}_{PW}(\mathfrak{H}^*)$.\bigskip

\begin{proposition}
\label{prop:kernel}
For all $f\in C_c^\infty(\mathfrak{G}^*)$ and $g\in C_c^\infty(\mathfrak{H}^*)$, we have 
\begin{align}
    \big(\chi_G(f)\otimes \chi_H(g)\big)\in \ker(\pi_0^G\otimes\pi_0^H)
    \end{align}
iff for almost all $q\in M$ and almost all $\eta_q\in T^*_qM$,
\begin{align}
    \beta_0^M(f\otimes g)(\eta_q) = 0.
\end{align}
\end{proposition}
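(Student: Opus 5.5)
The plan is to pass to the limit $\hbar\to 0$ in Prop.~\ref{prop:bimodule_limit}, using the explicit description of $\mathfrak{A}^G_0$, $\mathfrak{A}^H_0$ and $(\mathcal{E}_M)_0$ as quotients by sections with vanishing norm at $\hbar=0$.

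\textbf{Step 1: reduce to the lift.} Since $(\mathcal{E}_M)_0=\underline{\mathcal{E}}_M/\overline{\underline{\mathcal{E}}_M\cdot K^H_0}$, the operator $(\pi^G_0\otimes\pi^H_0)(\chi_G(f)\otimes\chi_H(g))$ vanishes iff, for every $\underline{\psi}\in\underline{\mathcal{E}}_M$, the section
\begin{align}
\hbar\mapsto \mathcal{Q}^G_\hbar(f)\cdot\underline{\psi}_\hbar\cdot\mathcal{Q}^H_\hbar(g)
\end{align}
lies in $\overline{\underline{\mathcal{E}}_M\cdot K^H_0}$. Using an approximate identity for $\mathfrak{A}^H$, as in the proof of strong nondegeneracy, this is equivalent to $\lim_{\hbar\to0}\norm{\mathcal{Q}^G_\hbar(f)\cdot\underline{\psi}_\hbar\cdot\mathcal{Q}^H_\hbar(g)}_{\mathcal{E}_M}=0$ for all $\underline{\psi}$. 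So we must show that this norm tends to zero for all $\underline{\psi}$ iff $\beta^M_0(f\otimes g)=0$ almost everywhere.

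\textbf{Step 2: rescale the integral kernel.} The computation in the proof of Prop.~\ref{prop:bimodule_limit} gives the kernel of $\pi_1^G(\mathcal{Q}^G_\hbar(f))\otimes\pi^H_1(\mathcal{Q}^H_\hbar(g))$ at the point $\bigl(q,\exp^W(\iota_q^{-1}(\xi_q\oplus0))\cdot q\bigr)$. In that expression I would:
\begin{itemize}
\item substitute $Z\mapsto\hbar Z$ and $\xi_q\mapsto\hbar\xi_q$;
\item use $E_q(0)=1$ (after normalizing the Haar systems) and $\kappa_G(0)=\kappa_H(0)=1$.
\end{itemize}
The integral over $\mathcal{I}_q\times\mathcal{I}_q^*$ is then a Fourier transform followed by an inverse Fourier transform. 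As $\hbar\to0$ it collapses to evaluation of $(f\cdot g)\circ\iota_q^*$ at $\Phi=0$, that is, by the earlier proposition $\iota_q^*(\eta_q\oplus0)=j_{|q}(\eta_q)$, to $\beta_0^M(f\otimes g)(\eta_q)$. In other words, $\hbar^{m}\beta^{\hbar\xi_q}_\hbar(f\otimes g)(\eta_q)\to\beta_0^M(f\otimes g)(\eta_q)$, with uniformity on compact sets by dominated convergence, since $f,g$ are compactly supported and smooth. Consequently the operator acting on $\underline{\psi}_\hbar$ is asymptotically the $\hbar$-Weyl quantization on $T^*M$ of the symbol $\beta_0^M(f\otimes g)$, transported by $\exp^W$, modulo $O(\hbar)$ corrections.

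\textbf{Step 3: both directions.} For the ``if'' direction, if $\beta^M_0(f\otimes g)=0$ a.e., then it vanishes identically by continuity. The leading term then vanishes, and the error is bounded in Hilbert--Schmidt norm by $C\hbar$ (Taylor expansion of $E_q$, $\kappa$, and $\exp^W$ near $0$). Because $\norm{F}\le\norm{F}_{HS}$, as in the earlier lemma, the norm tends to zero. For the ``only if'' direction, if $\beta_0^M(f\otimes g)(\eta_{q_0})\neq0$, I would choose $\underline{\psi}_\hbar$ to be a coherent-state wave packet in $C_c^\infty(M)$ concentrated at $q_0$ with momentum $\eta_{q_0}/\hbar$, with width $\sqrt\hbar$. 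This choice is uniformly continuous in $\hbar$ after suitable normalization. Its image then has norm tending to $\abs{\beta^M_0(f\otimes g)(\eta_{q_0})}\cdot\norm{\underline{\psi}_\hbar}\not\to0$.

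\textbf{Main obstacle.} I expect the hardest step to be the uniform-in-$\hbar$ control of the remainder in Step 2. We need the operator-norm bound on $\mathcal{E}_M$, whose norm is the $C^*(H)$-valued one, not just the $L^2$ norm. I would handle it by representing $\mathcal{E}_M$ faithfully on $L^2(M)$ via the regular representation of $C^*(H)$, so that the Hilbert--Schmidt estimate applies fiberwise. This is also where the convention that $\mathcal{I}_q$ and $\mathcal{I}_q^*$ have matching dimension enters.
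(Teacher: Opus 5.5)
Your Step 1 is sound: $\overline{\underline{\mathcal{E}}_M\cdot K^H_0}$ is exactly the set of $\underline{\psi}$ with $\norm{\underline{\psi}_\hbar}_{\mathcal{E}_M}\to 0$. Step 2 is the same rescaling computation the paper performs; it substitutes $\Phi\mapsto\hbar\Phi$ at fixed $\xi_q$ and reads off $\big((f\cdot g)\circ\iota_q^*\big)(\eta_q\oplus 0)=\beta_0^M(f\otimes g)(\eta_q)$ times an $f,g$-independent factor. The genuine gap is the ``only if'' half of Step 3. Every element of $\underline{\mathcal{E}}_M$ is a sup-norm limit of maps $(0,1]\to\mathcal{E}_M$ that are uniformly continuous for $\norm{\cdot}_{\mathcal{E}_M}$. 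Hence $\underline{\psi}_\hbar$ must converge in $\mathcal{E}_M$ to some $\psi_0$ as $\hbar\to 0$. Your wave packets (width $\sqrt{\hbar}$, momentum $\eta_{q_0}/\hbar$) do not converge. Take $G=M=\mathbb{R}^n$ acting by translation and $H$ the trivial group, so $\mathcal{E}_M=L^2(\mathbb{R}^n)$. Normalized packets converge weakly to $0$, so any rescaling that makes them norm-convergent forces their norm to $0$. Your lower bound $\abs{\beta^M_0(f\otimes g)(\eta_{q_0})}\cdot\norm{\underline{\psi}_\hbar}$ then gives nothing: ``uniformly continuous after suitable normalization'' and ``norm not tending to zero'' cannot both hold.

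A cleverer choice of $\underline{\psi}$ will not fix this. The function $\mathcal{Q}^G_\hbar(f)(\exp^W X)=\hbar^{-n}\kappa_G(X)\hat f(X/\hbar)$ is an approximate identity weighted by $\int\hat f\propto f(0)$. So for norm-convergent $\underline{\psi}$, $\mathcal{Q}^G_\hbar(f)\cdot\underline{\psi}_\hbar\cdot\mathcal{Q}^H_\hbar(g)$ converges in $\mathcal{E}_M$ to $\psi_0$ multiplied by $q\mapsto f(0_{t_M(q)})\,g(0_{s_M(q)})$, up to normalization, with $0_x$ the zero covector over $x$. With $\underline{\mathcal{E}}_M$ as defined, $\pi^G_0\otimes\pi^H_0$ therefore only sees $f$ and $g$ on the zero sections. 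In the example above, $(\mathcal{E}_M)_0\cong L^2(\mathbb{R}^n)$ and $\chi_G(f)$ acts as the scalar $f(0)$. So $\chi_G(f)\otimes\chi_H(1)$ lies in the kernel whenever $f(0)=0$, although $\beta^M_0(f\otimes 1)=f\circ j_G$ vanishes only if $f=0$. Detecting $\eta_{q_0}\neq 0$ needs sections that are \emph{not} norm-continuous at $\hbar=0$, generating the lift the way $\hbar\mapsto\mathcal{Q}^H_\hbar(g)$ generates $\mathfrak{A}^H$. That means a different definition of $\underline{\mathcal{E}}_M$. The paper never faces this because it asserts, rather than derives from Prop.~\ref{prop:bimodule_limit}, that membership in $\ker(\pi^G_0\otimes\pi^H_0)$ is equivalent to $\lim_{\hbar\to 0}\beta^{\xi_q}_\hbar(f\otimes g)(\eta_q)=0$.

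Your ``if'' argument also fails as written. The Hilbert--Schmidt norm of an $\hbar$-quantized operator in $d$ dimensions is of order $\hbar^{-d/2}$ times the $L^2$ norm of its symbol, so an $O(\hbar)$ symbol remainder is only $O(\hbar^{1-d/2})$ in that norm. For noncompact $M$ these operators are not Hilbert--Schmidt at all, so you would need an operator-norm (Calder\'on--Vaillancourt-type) bound. With the lift as literally defined, though, that direction is immediate. Since $j_G$ and $j_H$ are fiberwise linear, $\beta^M_0(f\otimes g)=0$ at zero covectors gives $f(0_{t_M(q)})g(0_{s_M(q)})=0$.
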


\begin{proof}
    From Prop. \ref{prop:bimodule_limit}, we have that $(f\otimes g)\in \ker(\pi^G_0\otimes \pi^H_0)$ iff $\lim_{\hbar\to 0}\beta^{\xi_q}_\hbar(f\otimes g)(\eta_q) = 0$ for almost all $q\in M$, $\xi_q\in T_qM$, and $\eta_q\in T^*_qM$.  We have
    \begin{align}
        &\lim_{\hbar\to 0} \beta^{\xi_q}_\hbar(f\otimes g)(\eta_q)\\
        &=\lim_{\hbar\to 0}\int_{\mathcal{I}_q\oplus K^*_q}\frac{(E_q\circ \iota_q^{-1})(\xi_q\oplus Z)\ d^{m}Z\  d^{m}\Phi}{\hbar^m} \big((\kappa_G\cdot\kappa_H)\circ\iota_q^{-1}\big)(\xi_q\oplus Z)\\
    & \hspace{2in} \times \big((f\cdot g)\circ\iota_q^*\big)(\eta_q\oplus \Phi)e^{-i\Phi(Z)/\hbar}\\
    &= \lim_{\hbar\to 0}\int_{\mathcal{I}_q\oplus K^*_q}(E_q\circ \iota_q^{-1})(\xi_q\oplus Z)\ d^{m}Z\ d^{m}\Phi\  \big((\kappa_G\cdot\kappa_H)\circ\iota_q^{-1}\big)(\xi_q\oplus Z)\\
    & \hspace{2in} \times \big((f\cdot g)\circ\iota_q^*\big)(\eta_q\oplus \hbar\Phi)e^{-i\Phi(Z)}\\
    &=\big((f\cdot g)\circ\iota_q^*\big)(\eta_q\oplus 0)\\
    &\hspace{.5in} \times \int_{\mathcal{I}_q\oplus K^*_q}(E_q\circ \iota_q^{-1})(\xi_q\oplus Z)\ d^{m}Z\ d^{m}\Phi\  \big((\kappa_G\cdot\kappa_H)\circ\iota_q^{-1}\big)(\xi_q\oplus Z)
   e^{-i\Phi(Z)}.
    \end{align}
It follows that $(f\otimes g)\in\ker(\pi_0^G\otimes\pi_0^H)$ iff for almost all $q\in M$ and $\eta_q\in T_q^*M$,
\begin{align}
\label{eq:kernel_beta}
    0 = \big((f\cdot g)\circ\iota^*_q\big)(\eta_q\oplus 0) = \big((f\cdot g)\circ j_{|q}\big)(\eta_q) = (f\circ j_G)(\eta_q)\cdot (g\circ j_H)(\eta_q),
\end{align}
which is equivalent to $\beta_0^M(f\otimes g)(\eta_q) = 0.$
\end{proof}\bigskip

Since $\ker(\pi_0^G\otimes\pi_0^H)$ is a closed two-sided ideal, so is $\ker(\beta_0^M)$, and it follows that $\beta_0^M$ is a *-homomorphism.  This implies that $\beta_0^M$ continuously extends to a *-homomorphism on the completions $C_0(\mathfrak{G}^*)\otimes C_0(\mathfrak{H}^*)\to C_0(T^*M)$.  Hence, we have
\begin{align}
    C_0(S_M) \cong \big(C_0(\mathfrak{G}^*)\otimes C_0(\mathfrak{H}^*)\big)/\ker(\beta_0^M)\hookrightarrow C_0(T^*M),
\end{align}
which suffices to establish that there is a surjection $T^*M\to S_M$.

Now for the *-isomorphisms $\chi_G:C_0(\mathfrak{G}^*)\to \mathfrak{A}_0^G$ and $\chi_H: C_0(\mathfrak{H}^*)\to\mathfrak{A}^H_0$, denote the corresponding Gelfand duals by $\hat{\chi}_G: \mathcal{P}\big(\mathfrak{A}_0^G\big)\to \mathfrak{G}^*$ and $\hat{\chi}_H: \mathcal{P}\big(\mathfrak{A}_0^H\big)\to\mathfrak{H}^*$ for $\mathcal{P}\big(\mathfrak{A}_0^G\big)$ and $\mathcal{P}\big(\mathfrak{A}_0^H\big)$ the pure state spaces of $\mathfrak{A}_0^G$ and $\mathfrak{A}_0^H$, respectively. 
Since we have a diagram
\begin{align}
   \xymatrix{\mathcal{P}\big(\mathfrak{A}_0^G\big) & S_M\ar[l]\ar[r] & \mathcal{P}\big(\mathfrak{A}_0^H\big)},
\end{align}
we also have a diagram
\begin{align}
   \xymatrix{\mathfrak{G}^* & S_M\ar[l]\ar[r] & \mathfrak{H}^*}
\end{align}
so that we can think of $S_M$ as a subset of the product manifold $\mathfrak{G}^*\times\mathfrak{H}^*$.

Finally, notice that, per Eq.\eqref{eq:kernel_beta}, $\ker(\beta_0^M)$ corresponds to the set of functions that vanish after composition with $j_G$ or $j_H$ (really $j_{|q}$ for any $q\in M$).  From Gelfand duality, $S_M$ must therefore be isomorphic to the space on which every function in $\ker(\beta_0^M)$ vanishes, which is (after identifying $C_0(\mathfrak{G}^*)\otimes C_0(\mathfrak{H}^*)$ with $C_0(\mathfrak{G}^*\times\mathfrak{H}^*)$) the image of the pair map $(j_G,j_H)[T^*M]\subseteq \mathfrak{G}^*\times\mathfrak{H}^*$.  This suffices to establish that $S_M$ is isomorphic to the Lagrangian relation determined by the symplectic dual pair $T^*M$.\bigskip

\begin{proposition}
    $S_M\cong (j_G,j_H)[T^*M]$
\end{proposition}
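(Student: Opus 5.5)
The plan is to prove the isomorphism at the level of Gelfand duals, working with the commutative C*-algebra $C_0(S_M) = (\mathfrak{A}^G_0\otimes\mathfrak{A}^H_0)/\ker(\pi^G_0\otimes\pi^H_0)$. First, I transport the problem along the isomorphisms $\chi_G\otimes\chi_H$. This identifies $\mathfrak{A}^G_0\otimes\mathfrak{A}^H_0$ with $C_0(\mathfrak{G}^*)\otimes C_0(\mathfrak{H}^*)\cong C_0(\mathfrak{G}^*\times\mathfrak{H}^*)$. Here the tensor product is unambiguous because the factors are abelian, hence nuclear. Under this identification, $\ker(\pi^G_0\otimes\pi^H_0)$ becomes a closed ideal $J\subseteq C_0(\mathfrak{G}^*\times\mathfrak{H}^*)$. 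Every closed ideal of $C_0(X)$ has the form $\{F : F|_C=0\}$ for a unique closed set $C\subseteq X$, and then $C_0(X)/J\cong C_0(C)$. So the statement reduces to identifying the hull $C$ of $J$ with $\overline{(j_G,j_H)[T^*M]}$, and then arguing that this set is the Lagrangian relation itself.

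Second, I identify $J$ using Prop.~\ref{prop:kernel}. For elementary tensors $f\otimes g$ with $f,g$ compactly supported, that proposition says $f\otimes g\in J$ iff $(f\cdot g)\circ(j_G,j_H)$ vanishes almost everywhere on $T^*M$. By continuity this is equivalent to vanishing everywhere, i.e.\ $f\otimes g$ vanishes on $R:=(j_G,j_H)[T^*M]$. Thus $J$ and the ideal $J_R=\{F: F|_{\overline R}=0\}$ contain exactly the same elementary tensors. This is not yet enough, because a general closed ideal need not be generated by the elementary tensors it contains. I therefore plan to use the map $\beta^M_0$, extended to a $*$-homomorphism $C_0(\mathfrak{G}^*\times\mathfrak{H}^*)\to C_b(T^*M)$, $F\mapsto F\circ(j_G,j_H)$, as discussed just before the statement. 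Its kernel is exactly $J_R$.

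Third, I show $J=\ker\beta^M_0$ on all of the algebra rather than only on elementary tensors. The proof of Prop.~\ref{prop:kernel} in fact yields more than its stated form: the $\hbar\to0$ limit of the integral kernel of $\pi^G_0\otimes\pi^H_0$ applied to $F$ is computed by the same stationary-phase/rescaling step. That step gives $F(j_{|q}(\eta_q))$ times a nonvanishing factor. The factor is the Fourier-inverted cutoff integral at $\Phi$ near $0$, and it is nonzero because $\kappa_G\kappa_H$ equals $1$ near the zero section and $E_q>0$. The computation is linear in $f\cdot g$, so it extends to finite sums $\sum f_i\otimes g_i$ and then, by norm continuity of both sides, to the closure. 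This gives $\ker(\pi^G_0\otimes\pi^H_0)=\chi(\ker\beta^M_0)$. I expect this extension from elementary tensors to general elements to be the main obstacle. The step most at risk is uniformity of the $\hbar\to0$ limit and the justification for exchanging the limit with the finite sum and with the closure. My first attempt would be to bound the kernel pointwise by the sup norm of $F$ times an $\hbar$-independent integrable majorant, and then invoke dominated convergence.

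Finally, by the first step, $C_0(S_M)\cong C_0(\overline R)$, so $S_M\cong\overline R$ as topological spaces, compatibly with the projections to $\mathfrak{G}^*$ and $\mathfrak{H}^*$ from the diagram above. It remains to check $\overline R=R$. For this I use weak regularity of the dual pair, namely completeness of $j_G,j_H$ and $j_H$ being a surjective submersion, together with properness of the right $H$-action. I expect these give that $(j_G,j_H)$ has closed image, since the fibers over $\mathfrak{H}^*$ are orbit-like and the $H$-action is proper. If this fails in general, the honest conclusion is $S_M\cong\overline{(j_G,j_H)[T^*M]}$. That weaker form suffices for the equivalence-class arguments, because dual pairs with the same Lagrangian relation have the same closure.
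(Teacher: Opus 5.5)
Your overall route is the paper's: transport along $\chi_G\otimes\chi_H$, use Prop.~\ref{prop:kernel} to match $\ker(\pi^G_0\otimes\pi^H_0)$ with $\ker\beta^M_0$, and read off the hull by Gelfand duality. Your target $C_b(T^*M)$ is the correct one, since $(j_G,j_H)$ need not be proper. However, the step you flag as the main obstacle is not needed. In $C_0(X\times Y)$ a closed ideal $J$ is the ideal of functions vanishing on its hull $h(J)$. Moreover, $h(J)$ is exactly the common zero set of the elementary tensors $f\otimes g\in J$ with $f,g$ smooth and compactly supported. Indeed, if $p\notin h(J)$, choose a product neighbourhood $U\times V$ of $p$ disjoint from the closed set $h(J)$, and bumps $f\in C_c^\infty(U)$, $g\in C_c^\infty(V)$ with $(f\otimes g)(p)\neq0$. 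Then $f\otimes g$ vanishes on $h(J)$, hence lies in $J$. So two closed ideals containing the same such elementary tensors coincide, and your second step already gives $J=J_R$. This is also what silently justifies the paper's passage from Prop.~\ref{prop:kernel} to $\ker\beta^M_0$. Your proposed extension ``by norm continuity'' would not be valid as stated. An identity continuous in $F$ extends to the closure, but an iff about kernel membership does not. For example, on the dense $*$-subalgebra of $C_0(\mathbb{R})$ of finite sums $\sum_k p_k(x)e^{-kx^2}$, restriction to $[0,1]$ and restriction to $[2,3]$ are both injective, yet their kernels on $C_0(\mathbb{R})$ differ.

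Your worry about closure is well founded, and here it is the paper's proof that slips: the common zero set of $\ker\beta^M_0$ is $\overline{R}$, not $R$. But the properness argument you hope for cannot exist. Take $G=H$ the (simply connected) Heisenberg group, viewed as a groupoid over a point, and $M=G$ with left and right translation. This is a regular bibundle with proper right action. Then $j_G(\eta_q)=\mathrm{Ad}^*_q\,j_H(\eta_q)$, so $R$ is the coadjoint-orbit relation on $\mathfrak{g}^*\cong\mathbb{R}^3$. With $c$ the coordinate dual to the centre, the orbits are the planes $\{c=\mathrm{const}\neq0\}$ and the single points of $\{c=0\}$. The pairs $\big((1,0,1/n),(0,0,1/n)\big)\in R$ converge to $\big((1,0,0),(0,0,0)\big)\notin R$. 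In fact $R$ is not even locally compact near the diagonal of $\{c=0\}$, so it is homeomorphic to no Gelfand spectrum. Thus the proposition holds only in the form $S_M\cong\overline{(j_G,j_H)[T^*M]}$, which is exactly what your fallback proves. Finally, your remark that this weaker form ``suffices for the equivalence-class arguments'' only gives well-definedness of $L$ (equal relations have equal closures). It does not give $L\circ Q\cong 1$ for the category as defined: in this example $L\circ Q$ sends the identity arrow on $\mathfrak{g}^*$ to the class of $\overline{R}=\{(\theta,\theta'): c(\theta)=c(\theta')\}\neq R$.
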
\bigskip

Recall that in \S\ref{sec:arrows}, we defined the quantization functor to map symplectic dual pairs in $\mathbf{LPoisson}$ to Hilbert bimodules in $\mathbf{LC^*}$ by the assignment
\begin{align}
    Q: [\mathfrak{G}^*\leftarrow T^*M\rightarrow \mathfrak{H}^*]\mapsto [C^*(G)\rightarrowtail \mathcal{E}_M\leftarrowtail C^*(H)].
\end{align}
\citet{La01a} showed this is indeed well-defined and functorial.  We furthermore defined the classical limit functor to map Hilbert bimodules in $\mathbf{LC^*}$ to symplectic dual pairs in $\mathbf{LPoisson}$ by the assignment
\begin{align}
    L: [C^*(G)\rightarrowtail \mathcal{E}_M\leftarrowtail C^*(H)]\mapsto [S_M],
\end{align}
where $[S_M]$ denotes the equivalence class of symplectic dual pairs determining the Lagrangian relation $S_M$.
Since $S_M\subseteq \mathfrak{G}^*\times\mathfrak{H}^*$ is the Lagrangian relation determined by the symplectic dual pair $T^*M$, the right-hand-side is indeed another expression for the equivalence class in $\mathbf{LPoisson}$ of symplectic dual pairs containing $T^*M$.
We have now shown that $L$ is well-defined: the quantization of any symplectic dual pair from the equivalence class $[S_M]$ must yield a Hilbert bimodule in the equivalence class $[\mathcal{E}_M]$ because those are precisely the bimodules whose classical limits return to $[S_M]$.  Furthermore the argument from \citet{FeSt24} already establishes that $L$ functorial, respective composition of Hilbert bimodules and Lagrangian relations by their corresponding tensor products.

Further, we have shown that $L\circ Q\left([T^*M]\right)\cong [T^*M]$.  It now follows, since in our definition of $\mathbf{LC^*}$ every arrow is obtained as the quantization of some arrow in $\mathbf{LPoisson}$, that we also have $Q\circ L\left([\mathcal{E}_M]\right)\cong [\mathcal{E}_M]$.  Thus, we have established the desired categorical equivalence.\bigskip

\begin{theorem}
    The functors $Q:\mathbf{LPoisson}\leftrightarrows\mathbf{LC^*}: L$ form a categorical equivalence.
\end{theorem}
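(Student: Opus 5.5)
The plan is to check the standard criterion for an equivalence of categories. We need $Q$ and $L$ to be functors, and we need natural isomorphisms $\eta: 1_{\mathbf{LPoisson}}\Rightarrow L\circ Q$ and $\epsilon: Q\circ L\Rightarrow 1_{\mathbf{LC^*}}$.

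Functoriality of $Q$ is Landsman's result. To make sure it descends to the equivalence classes used here, I would invoke Prop.~\ref{prop:kernel}. That result shows the tensor product kernel of $\mathcal{E}_M$ at $\hbar=0$, and through the lift the one at $\hbar=1$, is determined by the set of pairs $(f,g)$ with $(f\circ j_G)(g\circ j_H)=0$ on $T^*M$. This depends only on the Lagrangian relation $(j_G,j_H)[T^*M]$.

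Functoriality of $L$ combines three ingredients: Corollary~\ref{cor:lift}, the proposition that the lift respects interior tensor products, and strong nondegeneracy, which together with \citet{FeSt24} give that $[\mathcal{E}_M]\mapsto[S_M]$ respects composition and identities.

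On objects, I would take the components of $\eta$ and $\epsilon$ to be the canonical identifications. We have $L(Q(\mathfrak{G}^*))$ equal to the Gelfand dual of $\mathfrak{A}^G/K^G_0$, and $\hat{\chi}_G$ identifies this with $\mathfrak{G}^*$ as a Poisson manifold, by \citet{StFe21a}. In the other direction, $Q(L(C^*(G)))=C^*(G)$ on the nose, since $L$ returns exactly $\mathfrak{G}^*$ up to $\hat{\chi}_G$. So $\eta_{\mathfrak{G}^*}$ is the identity arrow given by the class of the identity dual pair, transported along $\hat{\chi}_G$. Likewise $\epsilon_{C^*(G)}$ is the class of the identity bimodule $C^*(G)$. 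These are isomorphisms because integrability and source-simple-connectedness ensure that the identity bibundle $G$ gives the identity arrow in both categories.

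The main step is naturality. For $\eta$, I need $L(Q([T^*M]))=[T^*M]$ once $\hat{\chi}_G\times\hat{\chi}_H$ is used to view $S_M$ inside $\mathfrak{G}^*\times\mathfrak{H}^*$. This is exactly the preceding proposition $S_M\cong(j_G,j_H)[T^*M]$, provided the isomorphism is compatible with the two projections to $\mathfrak{G}^*$ and $\mathfrak{H}^*$. I would verify that compatibility directly from $\beta^M_0(f\otimes g)=(f\circ j_G)(g\circ j_H)$: the inclusion $C_0(S_M)\hookrightarrow C_0(T^*M)$ is induced by the pair map $(j_G,j_H)$, so the Gelfand dual surjection $T^*M\to S_M$ is $(j_G,j_H)$ followed by the identification. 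Because arrows in $\mathbf{LPoisson}$ are equivalence classes determined by their Lagrangian relation, equality of relations is equality of arrows, and the naturality square commutes.

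For $\epsilon$, every arrow of $\mathbf{LC^*}$ has the form $Q([T^*M])=[\mathcal{E}_M]$. Applying $Q$ to $L([\mathcal{E}_M])=[T^*M]$ therefore returns $[\mathcal{E}_M]$, using the well-definedness of $Q$ on classes noted above, and that square also commutes with identity components. Equivalently, $Q$ is essentially surjective by construction and full by this same computation. Faithfulness follows from $L\circ Q=1$: if $Q[T^*M]=Q[T^*M']$, then applying $L$ gives $[T^*M]=[T^*M']$.

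The step I expect to be the real obstacle is showing that Prop.~\ref{prop:kernel} determines the whole closed ideal $\ker(\pi^G_0\otimes\pi^H_0)$, and not just its elementary tensors. I would handle this by using that $C_c^\infty(\mathfrak{G}^*)\odot C_c^\infty(\mathfrak{H}^*)$ is dense in $C_0(\mathfrak{G}^*\times\mathfrak{H}^*)$, together with the fact that closed ideals of a commutative C*-algebra are determined by their zero sets. This is what makes $S_M$ equal to the closure of $(j_G,j_H)[T^*M]$. I would then need closedness of that image, which I would try to get from the properness of the right $H$-action in the regular bibundle.
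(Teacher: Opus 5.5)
Your architecture is the paper's. You identify $S_M$ with $(j_G,j_H)[T^*M]$ to get $L\circ Q=1$ on arrows, then use that every arrow of $\mathbf{LC^*}$ is some $Q([T^*M])$ to get $Q\circ L\cong 1$. The explicit components, the naturality squares, and the check against the two projections are cleaner bookkeeping of the same argument.

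\textbf{Main gap: closedness of the image.} The genuine gap is the step you flag at the end, and your proposed fix does not work.

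Since $\ker(\pi^G_0\otimes\pi^H_0)$ is a closed ideal of $C_0(\mathfrak{G}^*\times\mathfrak{H}^*)$, $S_M$ is its hull and hence closed. Your zero-set argument is correct, though the real mechanism is not density: for a point off a closed set $F$, a product neighbourhood missing $F$ supplies an elementary tensor in the ideal that does not vanish at that point. That argument gives exactly $S_M=\overline{(j_G,j_H)[T^*M]}$.

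Properness of the right $H$-action does not make this image closed. Take:
\begin{itemize}
\item $G$ the trivial group;
\item $H=SU(2)\ltimes\mathbb{R}^3$ with the rotation action, which is proper and source-simply connected;
\item $M=H_0=\mathbb{R}^3$.
\end{itemize}
This is a regular bibundle. It is in fact the only one up to isomorphism, since transitivity of the trivial $G$-action on the $s_M$-fibres forces $s_M$ to be a bijective submersion. Here $j_G$ is constant and $j_H(q,p)=(q,\pm\, q\times p)$. So the image is $\{(q,L): q\cdot L=0\}\setminus\{(0,L): L\neq 0\}$, which is not closed.

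Correspondingly, $\beta^M_0$ lands in $C_b(T^*M)$ rather than $C_0(T^*M)$, and $T^*M\to S_M$ merely has dense image. The paper's ``surjection'' has the same defect. In this example $L([\mathcal{E}_M])$ would have to be the class of the relation $\overline{(j_G,j_H)[T^*M]}$, which no arrow of $\mathbf{LPoisson}$ between these objects determines. So $L\circ Q=1$, and even well-definedness of $L$, fails as long as arrows are classified by the relation itself.

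The paper's proof passes over this by equating the zero set of $\ker(\beta^M_0)$ with the image. To close the gap you must do one of two things:
\begin{itemize}
\item classify arrows of $\mathbf{LPoisson}$ by the closure $\overline{(j_G,j_H)[T^*M]}$, after which your naturality argument goes through; or
\item add a hypothesis that forces the image to be closed.
\end{itemize}

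\textbf{Secondary gap: the $\hbar=1$ kernel.} Prop.~\ref{prop:kernel} describes only the kernel at $\hbar=0$. Arrows of $\mathbf{LC^*}$, however, are classified by the kernel at $\hbar=1$. Prop.~\ref{prop:bimodule_limit} expresses that kernel through $\beta^{\xi_q}_\hbar$, hence through $E_q$, $\mathcal{I}_q$ and the cutoffs. That is, it depends on $M$ and not just on its relation.

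The lift lets you pass, at best, from the $\hbar=1$ kernel to the $\hbar=0$ one, not back. Knowing the $\hbar=0$ fibre of a field of ideals does not determine the fibres at $\hbar>0$. So ``through the lift the one at $\hbar=1$'' does not establish that $Q$ is well defined on classes. The paper asserts this as well, but a complete proof needs an independent argument that the relation, or its closure, determines the $\hbar=1$ tensor product kernel.
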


\section{Conclusion}
\label{sec:con}

In this paper, we have established a one-to-one correspondence between certain equivalence classes of symplectic dual pairs and corresponding equivalence classes of Hilbert bimodules.  Symplectic dual pairs, and the Lagrangian relations they determine, can be understood as morphisms between classical phase spaces that preserve representation-theoretic structures.  Hilbert bimodules can be likewise understood as morphisms between algebras of quantum observables that preserve representation-theoretic structure.  So our one-to-one correspondence of morphisms encodes an equivalence of categories, as well as a type of equivalence of representation theories between models of classical and quantum physics.

Our technical results extend the procedure of \citet{FeSt24} to construct Lagrangian relations as the classical limit of a wider collection of (equivalence classes of) Hilbert bimodules.  This achievement obviates the need for extra conditions that ensure the resulting classical limit space is a symplectic manifold, reframing the classical limit in terms of Lagrangian relations rather than symplectic dual pairs.

Our results open further questions about the categories defined in this paper.  In $\mathbf{LPoisson}$, we motivated considering Lagrangian relations rather than symplectic dual pairs themselves because Lagrangian relations encode sufficient information to determine an induction functor from symplectic realizations of one Poisson manifold to those of another.  However, in $\mathbf{LC^*}$, we motivated considering tensor product kernels rather than Hilbert bimodules themselves because tensor product kernels are (merely) an invariant of the corresponding induction functor from Hilbert space representations of one C*-algebra to those of another.  How close is the analogy between Lagrangian relations and tensor product kernels?  Are there conditions that would uniquely determine an induction functor from a given tensor product kernel? Or are there other ways of defining morphisms in a category of C*-algebras altogether, so that morphisms fully determine induction functors between C*-algebras?  Finally, we mention a further open area of research suggested by \citet{La02a}, concerning whether one can generalize the quantization and classical limit functors that we have been working with to more general classes of morphisms inspired by $K$-theory for C*-algebras.

\newpage

\bibliography{bibliography.bib}

\end{document}